\newcommand{\NCFIPS}{non-commutative IPS}
\newcommand{\TT}{\mathbf T}
\newcommand{\LL}{\Lambda}
\newcommand{\FF}{\overline F}
\newtheorem{theorem}{Theorem}[section]
\newtheorem{lemma}[theorem]{Lemma}
\newtheorem*{lemma*}{Lemma}
\newtheorem{definition}[theorem]{Definition}
\newtheorem{corollary}[theorem]{Corollary}
\newtheorem*{remark*}{Remark}
\newtheorem*{notation*}{Notation}
\newtheorem*{claim*}{Claim}
\newtheorem*{main-open}{Main Open Problem}
\newtheorem*{theorem*}{Theorem}
\newenvironment{note}{\QuadSpace\par\noindent{\bf Note}:}{\HalfSpace}
\newenvironment{proofclaim}{\QuadSpace\par\noindent{\textit {Proof of claim}:}}
{$\Box_{\textrm{\,claim}}$ \HalfSpace}
\newcommand\F{\ensuremath{\mathbb F}}
\newcommand\Z{\ensuremath{\mathbb Z}}
\newcommand{\ignore}[1]{}
\newcommand{\poly}{\hbox{poly}}
\newcommand{\cd}{\cdot}
\renewcommand{\l}{\ell}
\newcommand{\nha}{\ensuremath{\nonHomoABP}} 
\newcommand{\Base}{\mbox{}\\ \ind{\textit{Base case: }}}
\newcommand{\Induction}{\mbox{}\\ \ind{\textit{Induction step: }}}
\newcommand{\case}[1]{\ind\textbf{Case #1}:\,}
\newcommand{\induction}{\Induction}
\newcommand{\synEqual}{\ensuremath{\vdash^*}}
\newcommand {\ind} {\noindent}
\newcommand {\para}[1] {\paragraph{#1}}
\DeclareMathAlphabet{\mathitbf}{OML}{cmm}{b}{it}
\newcommand{\freea}{\ensuremath{\F\langle X\rangle}}
\newcommand{\QuadSpace}{\vspace{0.25\baselineskip}}
\newcommand{\HalfSpace}{\vspace{0.5\baselineskip}}
\newcommand{\EndProof}{ \hfill \vrule width 1ex height 1ex depth 0pt }
\definecolor{bluetxt}{rgb}{0,0,.6}
\definecolor{myred}{rgb}{0.6,0.0,0.1}
\definecolor{greentxt}{rgb}{0,.5,0}
\definecolor{redtxt}{rgb}{0.1,0.1,0.65}
\definecolor{purpletxt}{rgb}{0.6,0.1,0.7}
\definecolor{black}{rgb}{.0,.0,.0}
\definecolor{verydarkblue}{rgb}{.0,.0,.4}
\definecolor{darkblue}{rgb}{.0,.0,.53}
\definecolor{lightgray}{rgb}{.7,.7,.7}
\newlength{\defbaselineskip}
\newcommand{\doublespacing}{\setlength{\baselineskip}{1.0\defbaselineskip}}
\renewcommand{\b}{\beta}
\newcommand{\set}[1]{\left\{#1\right\}}
\newcommand{\nx}[1]{#1_1,\ldots,#1_{n}}
\renewcommand{\a}{\alpha}
\renewcommand{\b}{\beta}
\newcommand{\commF}{\phi_{L_i,L_j}}
\newcommand{\proofbound}[1]{\left(\sum_{\l\in #1} |L_\l|\right)^4}
\newcommand{\boolean}{Boolean}
\newcommand{\g}{g}
\renewcommand{\.}{ ,\ldots,}
\newcommand{\bool}{bool}
\newcommand{\FPC}{\ensuremath{\mathcal{F\mbox{\rm-}PC}}}
\newcommand{\xQx}{\overline{x},\overline{Q}^\phi(\overline{x})}
\newcommand{\xZero}{\overline{x},\overline{0}}
\newcommand{\pQx}{
        \ensuremath{
                \overline{p},
                \overline {Q}_{\bool}^{\phi}(\overline p)
        }
}
\newcommand{\pZero}{\overline{p},\overline{0}}
\newcommand{\lequal}{\leftrightarrow}
\newcommand{\Fb}{F_{\bool}}
\newcommand{\Fz}{\neg\Fb \left(\pZero\right)}
\newcommand{\Fq}{\Fb \left(\pQx\right)}
\newcommand{\truthTableki}{\normalfont\textsf{Truth}([\kappa_i],\overline{p})}
\newcommand{\truthTablekj}{\normalfont\textsf{Truth}([\kappa_j],\overline{p})}
\newcommand{\truthTablek}{\normalfont\textsf{Truth}([\kappa],\overline{p})}
\newcommand{\x}{D^u}
\newcommand{\D}[2]{\mathcal{D}_{#1}^{#2}}
\newcommand{\Du}[1]{\D{#1}{u}}
\newcommand{\Dv}[1]{\D{#1}{v}}
\newcommand{\Dw}[1]{\D{#1}{w}}
\newcommand{\pd}[1]{#1_{pd}}
\newcommand{\pdu}{\pd{u}}
\newcommand{\Gd}{N_{\pdu}}
\newcommand{\hhat}{\ensuremath{\widehat{\homoji}}}
\newcommand{\vsink}{\ensuremath{v_{\rm sink}}}
\newcommand{\h}[1]{^{(#1)}}
\newcommand{\hF}[1]{{F}^{(#1)}}
\newcommand{\hFbool}[1]{{F}_{bool}^{(#1)}}
\newcommand{\bF}{F^\star}
\newcommand{\vin}{\ensuremath{v_{\rm source}}}
\newcommand{\homojBool}{F_{bool}(\overline p)}
\newcommand{\zeroMatrix}{\Lambda_i}
\newcommand{\zeroMatrixFinal}{\Lambda_{d}}
\newcommand{\homoji}{\ensuremath \overline F_{i}}
\newcommand{\homojiFinal}{\overline F_{d}}
\newcommand{\homojiBool}{\overline \homojBool_i}
\newcommand{\homojiBoolt}{\homojBool_{i,t}}
\newcommand{\homojiBooltFinal}{\homojBool_{d,t}}
\newcommand{\homojiBoolFormert}{\homojBool_{i-1,t}}
\newcommand{\nodeF}{F^{\bullet}}
\newcommand{\trn}{{\rm{tr}}}
\newcommand{\pol}{\ensuremath{\widehat}}
\newcommand{\truthTable}{{\normalfont \textsf{Truth}}([\phi],\overline{p})}
\newcommand{\Qi}{Q_{i_{\\ \bool}}^{\phi}(\overline{p})}
\newcommand{\Qbar}{Q_{\bool}^{\phi}}
\newcommand{\sequiv}{\leftrightarrow}
\author
{
        \normalsize
       {Fu Li}\thanks{Institute for Interdisciplinary Information Sciences. 
Supported in part by NSFC grant 61373002.
                                \texttt{fuli.theory.research@gmail.com}
                  }
           \vspace{11pt}
           \\
       \begin{minipage}{100pt}
           \center \small
           \textit{Tsinghua University}
       \end{minipage}
 \and
        \normalsize
       {Iddo Tzameret}\thanks{
                              Department of Computer Science.
                              Supported in part by NSFC grant 61373002.
                              \texttt{Iddo.Tzameret@rhul.ac.uk}
                              }
                       \vspace{11pt} 
                       \\
       \begin{minipage}{110pt}
       \center \small
       \textit{Royal Holloway, University of London}
       \end{minipage}
  \and \normalsize
       {Zhengyu Wang}\thanks{
                        Department of Computer Science. 
                        \texttt{wangsincos@163.com}
                       }
                       \vspace{11pt} 
                       \\
       \begin{minipage}{100pt}
       \center \small
       \textit{Harvard University}
       \end{minipage}
}
\date{}
\begin{document}

\title{Characterizing Propositional Proofs \\ as Non-Commutative Formulas\footnote{An extended abstract of this work entitled ``Non-commutative Formulas and Frege Lower Bounds: a New Characterization of Propositional Proofs'' appeared in \emph{Proceedings of the 30th Annual Computational Complexity Conference} (CCC): June 17-19, 2015.}
}
\maketitle
\vspace{-15pt} 
\begin{abstract}
Does every Boolean tautology have a short propositional-calculus proof? Here,  a propositional-calculus (i.e., Frege) proof  is any  proof starting from a set of axioms and deriving new Boolean formulas using a fixed set of sound derivation rules. Establishing  any super-polynomial size lower bound on Frege proofs (in terms of the size of the formula proved) is a major open problem in proof complexity, and among a handful of fundamental hardness questions in complexity theory by and large. Non-commutative arithmetic formulas, on the other hand, constitute a quite weak computational model, for which exponential-size lower bounds were shown already back in 1991 by Nisan [STOC 1991], using  a particularly transparent argument.

In this work we show that Frege lower bounds  in fact follow from  corresponding size lower bounds on non-commutative formulas computing certain polynomials (and that such lower bounds on non-commutative formulas must exist, unless \NP=\coNP). More precisely, we demonstrate a natural association between tautologies $T$ to non-commutative polynomials $p$, such that:
\begin{itemize}
\item[\Pisymbol{pzd}{86}] if  $T$ has a polynomial-size Frege proof then  $p$ has a polynomial-size non-commutative arithmetic formula;
and conversely, when $T$ is a DNF, if  $p$ has a polynomial-size non-commutative arithmetic formula over $GF(2)$ then  $T$ has a Frege proof of quasi-polynomial size.
\end{itemize}
\ind The argument is a characterization of Frege proofs as non-commutative formulas: we show that the Frege  system is (quasi-) polynomially equivalent to a \textit{non-commutative Ideal Proof System }(IPS), following  the recent work of Grochow and Pitassi [FOCS 2014] that introduced a propositional proof system in which proofs are arithmetic circuits, and the work in \cite{Tza11-I&C} that  considered adding the commutator as an axiom in  algebraic propositional proof systems.
This also gives a characterization of propositional Frege proofs in terms of  (non-commutative) arithmetic formulas that is tighter than (the formula version of IPS) in Grochow and Pitassi [FOCS 2014].
%
%
%
\end{abstract}


%
\doublespacing
\section{Introduction}

\subsection{Propositional proof complexity} The field of propositional proof complexity aims to understand and analyze the computational resources required to prove  propositional statements. The problems the field poses  are  fundamental, difficult, and of central
importance to computer science and complexity theory as demonstrated by the seminal work of Cook and Reckhow \cite{CR79}, who showed the immediate relevance of these problems to the \NP~vs.~\coNP\ problem (and thus to the \Ptime~vs.~\NP\ problem).

Among the major unsolved questions in propositional proof complexity, is whether the standard propositional logic calculus, either in the form of the Sequent Calculus, or equivalently, in the axiomatic form of Hilbert style proofs (i.e.,  Frege proofs), is polynomially bounded; that is, whether every propositional tautology---namely,
a formula that is satisfied by every assignment---has a proof whose size is polynomially bounded in the size of the formula proved (alternatively and equivalently, we can think of unsatisfiable formulas and their refutations). Here, we consider the size of proofs as the number of symbols it takes to write them down, where each formula in the proof is written as a Boolean \textit{formula}  (in other words we count the total number of logical gates appearing in the proof). 

It is known since Reckhow work \cite{Rec76:PhD} that all Frege
proof-systems\footnote{Formally, a Frege proof system is any propositional proof system with a fixed number of axiom schemes and sound derivation rules that  is also implicationally complete, and in which proof-lines are written as propositional formulas (see Definition \ref{def:Frege_system}).}~(as well as the Gentzen sequent calculus with the cut rule \cite{Gen35}) are polynomially equivalent to each other, and hence it does not matter precisely which rules, axioms, and logical-connectives we use in the system.
Nevertheless, for concreteness, the reader can think of the Frege proof system as the following simple one (known as \textit{Schoenfield's system}),  consisting of only three
axiom schemes (where $A\to B$ is an abbreviation of $\neg A\lor B$; and $A,B,C$ are
any propositional formulas): \vspace{-5pt} 
\begin{gather*}
A\to(B\to A)
\\
(\neg A \to \neg B)\to ((\neg A\to B)\to A)
\\
(A\to (B\to C))\to ((A\to B)\to (A\to C)),
\end{gather*}
and a single inference rule (known as \textit{modus ponens}):
$$
\hbox{from $A$ and $A\to B$, infer $B$}\,.
$$

Complexity-wise,  Frege is considered a very strong proof system alas a poorly understood one. The qualification \textit{strong} here has several meanings: first, that no super-polynomial lower bound is known for Frege proofs. Second, that there are not even good hard candidates for the Frege system (see \cite{BBP95,Razb15-annals,Kra:book11,LT13} for further discussions on hard proof complexity candidates). Third, that for most hard instances (e.g., the pigeonhole principle and
Tseitin tautologies) that are known to be hard  for weaker systems (e.g., resolution, cutting planes, etc.), there \textit{are }known polynomial bounds on Frege proofs. Fourth, that proving super-polynomial lower bounds on Frege proofs seems to a certain extent out of reach of current techniques (and believed by some to be even \emph{harder} than proving explicit circuit lower bounds \cite{Razb15-annals}). And finally, that by the common (mainly informal) correspondence between circuits and proofs---namely, the correspondence between a circuit-class $\cal C$ and a proof system in which every proof-line is written as a circuit\footnote{To be more precise, one has to associate a circuit class $\cal C$ with a proof system in which a \emph{family} of proofs is written such that every proof-line in the family is a circuit family from $\cal C$.} from $\cal C$---Frege system corresponds to the circuit class of polynomial-size $\log(n)$-depth circuits denoted \NCOne\ (equivalently, of polynomial-size formulas \cite{Spi71}), considered to be a strong computational model for which no (explicit) super-polynomial lower bounds are currently known.

Accordingly, proving lower bounds on Frege proofs is considered an extremely hard task. In fact, the best lower bound known today is only  quadratic, which uses a fairly simple syntactic argument  \cite{Kra95}. If we put further impeding restrictions on Frege proofs, like restricting the depth of each formula appearing in a proof to a certain fixed constant, exponential lower bounds can be obtained \cite{Ajt88,PBI93,PBI93}. Although these constant-depth Frege exponential-size lower bounds  go back to Ajtai's result from 1988, they are still in some sense the  state-of-the-art in proof complexity lower bounds (beyond the important developments on weaker proof systems, such as resolution and its comparatively weak extensions). Constant-depth Frege  lower bounds  use  quite involved probabilistic arguments, mainly  specialized switching lemmas tailored for specific tautologies (namely, counting tautologies, most notable of which are the Pigeonhole Principle tautologies). Even random $k$CNF formulas near the satisfiability threshold are not known to be hard for constant-depth Frege (let  alone hard for
[unrestricted depth] Frege).

All of the above goes to emphasize the importance, basic nature and difficulty in  understanding the complexity of strong propositional proof systems, while showing how little is actually known about these systems. \smallskip

\subsection{Prominent directions for understanding propositional proofs} As we already mentioned, there is a guiding line in proof complexity which states a  correspondence between the complexity of circuits and the complexity of proofs. This correspondence is mainly informal, but there are seemingly good indications showing it might be more than a superficial analogy. One of the most compelling evidence for this correspondence is that there is a formal correspondence (cf.~\cite{CN10} for a clean formulation
of this) between the first-order logical theories of bounded arithmetic (whose axioms state the existence of sets taken from a given complexity class $\mathcal C$) to propositional proof systems (in which proof-lines are circuits from $\mathcal C$).

Another aspect of the informal correspondence between circuit complexity and proof complexity is that circuit hardness sometimes can be used to obtain proof complexity hardness.
The most notable example of this are the lower bounds on constant-depth Frege proofs mentioned above: constant-depth Frege proofs can be viewed as propositional calculus operating with \ACZ\ circuits, and the known lower bounds on constant depth Frege proofs (cf. \cite{Ajt88,KPW95,PBI93}) use techniques borrowed from \ACZ\ circuits lower bounds. The success in moving from circuit hardness towards proof-complexity hardness has spurred a flow of attempts to obtain lower bounds on proof systems other than constant depth Frege. For example, Pudl{\'a}k \cite{Pud99} and Atserias et al.~\cite{AGP01} studied proofs based on monotone circuits, motivated by known exponential lower bounds on monotone circuits \cite{Razb85}. Raz and Tzameret  \cite{RT06,RT07,Tza08:PhD} investigated algebraic proof systems operating with multilinear formulas, motivated by lower bounds on multilinear formulas for the determinant, permanent and other explicit polynomials \cite{Raz04a,Raz04b}. Atserias et al.~\cite{AKV04}, Kraj\'{i}\v{c}ek \cite{Kra07} and Segerlind \cite{Seg07} have considered proofs operating with ordered binary decision diagrams (OBDDs), and the second author \cite{Tza11-I&C} initiated the study of proofs operating with non-commutative formulas (see Sec.~\ref{sec:comparison} for a comparison with the current work).\footnote{We
do not discuss here the important thread of results whose aim is to establish conditional
lower bounds based on Nisan-Wigderson generators. This direction
was developed in e.g.~\cite{ABSRW00,Razb15-annals,Kra04,Kra10-forcing}.}

Until quite recently it was unknown  whether the correspondence between proofs and circuits is two-sided, namely, whether proof complexity hardness (of concrete known proof systems) can imply any computational hardness.
An initial example of such an implication from proof hardness to circuit hardness was given by Raz and Tzameret \cite{RT06}. They showed that a separation between algebraic proof systems operating with arithmetic  circuits and multilinear arithmetic circuits, resp., for an explicit family of polynomials, implies a separation between arithmetic circuits and multilinear arithmetic circuits. 

In a recent significant development about the complexity of strong proof systems, Grochow and Pitassi \cite{GP14} demonstrated a much stronger correspondence. They introduced a natural propositional proof system, called the \textit{Ideal Proof System }(IPS for short), for which \textit{any }super-polynomial size lower bound on IPS  implies a corresponding size lower bound on arithmetic circuits, and formally, that the permanent does not have polynomial-size  arithmetic circuits.   The IPS is defined as follows:

\begin{definition}[Ideal Proof System (IPS) \cite{GP14}]\label{def:orig-IPS} Let $F_1(\overline x),\ldots,F_m(\overline x)$ be a system of polynomials in the variables $x_1,\ldots,x_n $, where the polynomials $x_i^2-x_i$, for all $1\le i\le n$, are part of this system. An \emph{IPS refutation (or certificate)} that the $F_i$'s polynomials have no common 0-1 solutions is a polynomial $C(\overline x,\overline y)$ in the variables $x_1,\ldots,x_n $ and $y_1,\ldots,y_{m} $, such that:\vspace{-5pt}

\begin{enumerate}

\item $F(\nx{x},\overline 0) = 0$; and \vspace{-8pt}

\item  $F(\nx{x}, F_1(\overline x),\ldots, F_m(\overline x)) = 1.$
\end{enumerate}

\end{definition}

The essence of IPS is that a proof (or refutation) is a \textit{single }polynomial that can be written simply as an arithmetic \textit{circuit} or \emph{formula}. The advantage of this formulation is that now we can obtain direct connections between circuit/formula  hardness (i.e., ``computational hardness'') and hardness of proofs. Grochow and Pitassi showed indeed that a lower bound on IPS written as an arithmetic circuit implies that the permanent does not have polynomial-size algebraic circuits (Valiant's conjectured separation $\sf{VNP}\neq\sf{VP}$ \cite{Val79:ComplClass}); And similarly,  a lower bound on IPS written as an arithmetic \textit{formula }implies that the permanent
does not have polynomial-size algebraic formulas  ($\sf VNP\neq\sf VP_e$, ibid).

Under certain assumptions, Grochow and Pitassi \cite{GP14} were able to connect  their result to  standard propositional-calculus proof systems, i.e., Frege and Extended Frege. Their assumption was the following: \textit{Frege has polynomial-size proofs of the statement expressing that the PIT for arithmetic formulas is decidable by polynomial-size Boolean circuits }(\textit{PIT for arithmetic formulas} is the problem of deciding whether an input arithmetic formula computes the [formal] zero polynomial).
They showed that\footnote{We focus only on the relevant results about Frege proofs from \cite{GP14} (and not the results about Extended Frege in \cite{GP14}; the latter proof system operates, essentially,  with Boolean circuits, in the same way that Frege operates with Boolean formulas (equivalently \NCOne\ circuits)).}, under this assumption super-polynomial lower bounds on Frege proofs imply
that the permanent does not have polynomial-size arithmetic circuits. This, in turn, can be considered as a (conditional) justification for the apparent long-standing difficulty of proving lower bounds on strong proof systems.

\subsection{Overview of  results and proofs}

\subsubsection{Sketch}
In this work we give a novel characterization of the propositional calculus---a fundamental
and prominent object by itself---and by this contribute to the understanding of strong propositional proof systems, and to the fundamental search for lower bounds on these proofs. We formulate a very natural proof system, namely a non-commutative variant of the ideal proof system, which we show captures \textit{unconditionally} (up to a quasi-polynomial-size increase, and in some cases only a polynomial increase\footnote{We establish a slightly stronger characterization: the non-commutative IPS polynomially
simulates Frege; and conversely, the complexity in which Frege simulates  the non-commutative IPS depends on the degree of the non-commutative IPS refutation; e.g., the simulation
is \textit{polynomial} when refutations are of logarithmic degrees (see note after Theorem \ref{thm:intro:Frege_sim_ncIPS}).}%
) propositional Frege proofs.  A proof in the  non-commutative IPS is simply a \textit{single non-commutative \textit{polynomial}} written as a non-commutative formula. 

Our results thus give a compelling and simple new characterization of the proof complexity of propositional Frege proofs and brings new hope for achieving lower bounds on strong proof systems, by reducing the task of lower bounding Frege proofs to the following seemingly much more manageable task: proving matrix rank lower bounds on the matrices associated with certain non-commutative polynomials (in the sense of Nisan \cite{Nis91}; see below for details).

The new characterization also tightens the recent results of Grochow and Pitassi \cite{GP14}
in the following sense:
%
\vspace{-5pt} 
\begin{enumerate}
\item[(i)]
The non-commutative IPS is \textit{polynomial-time checkable}---whereas the original IPS was checkable in probabilistic polynomial-time; and
\vspace{-6pt}
\item [(ii)] Frege proofs \textit{unconditionally} quasi-polynomially simulate the non-commutative IPS---whereas Frege  was shown to  efficiently simulate IPS only assuming that the decidability of PIT for (commutative) arithmetic formulas   by  polynomial-size circuits is efficiently provable in Frege.
\end{enumerate}
The tighter result shows that, at least for Frege, and in the framework of the ideal proof system, lower bounds on Frege proofs do not  necessarily entail in themselves very strong computational lower bounds.

\subsubsection{Some preliminaries: non-commutative polynomials and formulas}

A \textit{non-commutative polynomial} over a given  field \F\ and with the variables $X:=\{x_1,x_2,\ldots\}$ is a formal sum of monomials with coefficients from \F\ such that the product of variables is non-commuting.  For example, $x_1 x_2-x_2 x_1+x_3 x_2 x_3^2-x_2 x_3^3,\  \, x_1 x_2-x_2 x_1$ and $0$ are three distinct polynomials in $\freea$.
The ring of non-commutative polynomials with variables $X$ and coefficients from \F\ is denoted \freea. 

A \textit{polynomial} (i.e., a \emph{commutative}
polynomial) over a field is defined in the same way as a non-commutative
polynomial  except that
 the product of variables \textit{is} commutative; in other words, it
is a sum of (commutative) monomials.

A \emph{non-commutative arithmetic formula} (non-commutative formula for short) is a fan-in two labeled tree, with edges directed from leaves towards the root, such that the leaves are labeled with field elements (for a given field \F)  or variables $x_1,\ldots,x_n, $         and internal nodes (including the root) are labeled with a plus $+$ or product  $\times$ gates. A product gate has an order on its two children (holding the order of non-commutative product). A non-commutative formula computes a non-commutative polynomial in the natural way (see  Definition \ref{def:nonc_formula}).

Exponential-size lower bounds on non-commutative formulas (over any field) were established by Nisan \cite{Nis91}. The idea (in retrospect) is quite simple: first transform a non-commutative formula into an algebraic branching program (ABP; Definition \ref{def:ABP}); and then show that the number of nodes in the $i$th layer of an ABP computing a degree $d$ homogenous non-commutative polynomial $f$ is bounded from below by the rank of the degree $i$-partial-derivative matrix of $f$.%
\footnote{The degree $i$ partial derivative matrix of $f$ is the matrix whose ro`ws are all non-commutative monomials of degree $i$ and columns are all non-commutative monomials of degree $d-i$, such that the entry in row $ M$ and column $N$ is the coefficient of the $d$ degree monomial $M\cd N$ in $f$.} Thus, lower bounds on non-commutative formulas follow from quite immediate rank arguments (e.g., the partial derivative matrices associated with the permanent and determinant can easily be shown to have high ranks).


\subsubsection{Non-commutative ideal proof system}\label{sec:intro:ncIPS}

Recall the IPS refutation system from  Definition \ref{def:orig-IPS} above.  We use the idea introduced in \cite{Tza11-I&C}, which considered adding the commutator $x_1x_2-x_2x_1$ as an axiom in propositional algebraic proof systems, to define a refutation  system that polynomially simulates Frege:\smallskip

\begin{definition}[Non-commutative IPS]\label{def:intro:non-commutative-IPS}
Let $\F$ be a field. Assume that $F_1(\overline x) = F_2(\overline x) = \cdots = F_m(\overline x) = 0$ is a system of non-commutative polynomial equations from $\F\langle x_1,\ldots,x_n\rangle$, and suppose that the following set of  equations  (axioms) are included in the $F_i(\overline x)$'s:
\begin{description}
\item[\quad Boolean axioms:]

\ $ x_i\cd (1-x_i)\,, $ \, \, for all $\,1\le i\le n\,$;\vspace{-8pt}
\item[\quad Commutator axioms:] \ $x_i\cd x_j -x_j\cd x_i$\,, \,\,  for all $\,1\le i< j\le n\,.$
\end{description}
Suppose that the $F_i(\overline x)$'s have no common $0$-$1$ solutions.\footnote{One can check that the $F_i(\overline x)$'s have no common $0$-$1$ solutions in \F\ iff they do not have a common 0-1 solution in every \F-algebra.}   A \textbf{non-commutative IPS refutation} (or \emph{certificate}) that the system of $F_i(\overline x)$'s  is unsatisfiable is a non-commutative polynomial $\mathfrak{F}(\overline x, \overline y)$ in the variables
$\nx{x}$ and $y_1,\ldots,  y_m$ (i.e. $\mathfrak{F}\in\F\langle \overline x, \overline y\rangle)$, such that:
\begin{enumerate}

\item \label{it:1 in ncIPS def}
 $\mathfrak{F}(\nx{x},\overline 0) = 0$; and \vspace{-8pt}

\item  $\mathfrak{F}(\nx{x}, F_1(\overline x),\ldots, F_m(\overline x)) = 1.$ \label{it:2 in ncIPS def}
\end{enumerate}
We always assume that the non-commutative IPS refutation is written as a \textit{non-commutative formula}. Hence the \textit{\textbf{size}} of a non-commutative IPS refutation is the minimal size of a non-commutative formula computing the non-commutative IPS refutation.
\end{definition}

\begin{note} (i) It is important to note that identities 1 and 2 in Definition \ref{def:intro:non-commutative-IPS} are \emph{formal} identities between non-commutative polynomials. It is possible
to show that without the commutator axioms the system becomes \emph{incomplete} in the sense that there will be unsatisfiable systems of non-commutative polynomials $F_1(\overline x) = F_2(\overline x) = \cdots = F_m(\overline x) = 0$ (where the $F_i$'s include the Boolean and commutator axioms) for which there are \emph{no} non-commutative IPS refutations. 

(ii) In order to prove that a system of \emph{commutative} polynomial equations $\{P_i=0\}$ (where each $P_i$ is expressed as an arithmetic formula) has no common roots in non-commutative IPS, we write each $P_i$ \emph{as a non-commutative formula} (in some way; note that there is no unique way to do this).

\end{note}

The main result of this paper is that  the non-commutative IPS (over either $\mathbb{Q}$ or $\mathbb{Z}_q$, for any prime $q$) polynomially simulates Frege; and conversely, Frege quasi-polynomially simulates the non-commutative IPS (over $\Z_2$). We explain the results in what follows.

\subsubsection*{Non-commutative IPS simulates Frege}
For the purpose of the next theorem we use a standard translation of propositional formulas $T$  into non-commutative arithmetic formulas:
\begin{definition}[$\trn(f)$]\label{def:trn_1}
Let $\trn(x_i):=x_i$, for variables $x_i$; $\trn(\textsf{false}):=1$;  $\trn(\textsf{true}):=0$; and by induction on the size of the propositional formula: $\trn(\neg T) := 1-\trn(T)$; $\trn(T_1\lor T_2)= \trn(T_1)\cd\trn(T_2)$ and finally  $\trn(T_1\land T_2) = 1-\left(1-\trn(T_1)\right)
\cd\left(1-\trn(T_2)\right)$. 
\end{definition}
%
%

For a non-commutative \emph{formula} $f$ denote by $\pol f$ the non-commutative
\emph{polynomial} computed by $f$. Thus, $T$ is a propositional tautology  iff $\pol{
\trn(T)}=0$ for every 0-1 assignment to the  variables of the non-commutative polynomial.

\begin{theorem}[First main theorem]\label{thm:intro:ncIPS_sim_Frege}
Let \F\ be either the rational numbers $\mathbb{Q}$ or $\mathbb{Z}_q$, for a prime $q$. The non-commutative IPS refutation system, when refutations are written as non-commutative formulas over \F, polynomially simulates the Frege system. More precisely, for
every propositional tautology T, if T has a polynomial-size Frege proof then there is a non-commutative
IPS certificate (over \F) of $\hbox{ \rm \trn}(\neg T)$ that has a polynomial non-commutative formula size.
\end{theorem}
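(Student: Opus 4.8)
The plan is to simulate a Frege proof line-by-line inside the non-commutative IPS, using the fact that the commutator axioms are available as hypotheses $F_i$ and hence can be ``used up'' by the certificate $\mathfrak F$. First I would fix a standard Frege system whose connectives match the translation $\trn$ of Definition \ref{def:trn_1}; by the Reckhow-style robustness of Frege (all Frege systems are polynomially equivalent), no generality is lost. Given a Frege proof $\pi = (T_1,\ldots,T_s = T)$ of a tautology $T$, I would associate to each proof line $T_k$ the non-commutative polynomial $g_k := \trn(T_k)$ computed by a non-commutative formula of size polynomial in $|T_k|$, built directly from the recursion in Definition \ref{def:trn_1}. The key invariant to maintain is: for each $k$ there is a polynomial-size non-commutative formula $\mathfrak F_k(\overline x,\overline y)$ which is a ``derivation'' of $g_k$ from the hypotheses, i.e.\ an expression that is an $\F\langle\overline x,\overline y\rangle$-combination of the $y_i$'s (so it vanishes at $\overline y = \overline 0$) and which, upon substituting $y_i \mapsto F_i(\overline x)$, equals $g_k$ modulo the ideal generated by the Boolean and commutator axioms among the $F_i$'s — but crucially, since those axioms are themselves among the $F_i$, the ``modulo'' is absorbed: $\mathfrak F_k(\overline x, \overline F(\overline x))$ literally equals $g_k$ as a non-commutative polynomial, up to the contributions of the extra $y_j$-hypotheses. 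Then the final certificate is obtained from $\mathfrak F_s$ together with one more combination step witnessing that $g_s = \trn(T) = $ (some explicit polynomial that is $0$ on Boolean points), so that $1 - \mathfrak F_s$-type manipulation yields items \ref{it:1 in ncIPS def} and \ref{it:2 in ncIPS def} for $\trn(\neg T) = 1 - \trn(T)$.

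The technical heart is the \emph{induction step}: show that each Frege axiom scheme and each Frege rule can be simulated by a polynomial-size non-commutative ``proof gadget.'' For a sound Frege rule deriving $T_k$ from earlier lines $T_{i_1},\ldots,T_{i_r}$, soundness means $\trn(T_k)$ is forced to be $0$ on Boolean points whenever the $\trn(T_{i_j})$ are; so there is a fixed-size polynomial identity expressing $\trn(T_k)$ as a non-commutative combination of the $\trn(T_{i_j})$'s, the Boolean axioms $x_\ell(1-x_\ell)$, and the commutator axioms $x_\ell x_{\ell'} - x_{\ell'}x_\ell$ — the commutators being exactly what lets us rearrange monomials so the commutative reasoning behind ``soundness on $\{0,1\}^n$'' goes through in the non-commutative ring. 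Because Frege has only finitely many rule schemes, each such identity has size depending only on the rule (not on $n$ or on $|T_k|$), and composing it with the inductively-built formulas $\mathfrak F_{i_j}$ costs only an additive polynomial increase; summing over the $s$ lines gives an overall polynomial bound. I would isolate a lemma of the form: ``for any non-commutative polynomials $p,q$ with $p = q$ on all Boolean substitutions, $p - q$ lies in the (two-sided) ideal generated by $\{x_i(1-x_i)\} \cup \{x_ix_j - x_jx_i\}$, with an ideal-membership certificate whose formula size is polynomial in the formula sizes of $p$ and $q$'' — this is the non-commutative analogue of the Nullstellensatz-type fact underlying IPS, and it is where the commutator axiom earns its keep.

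The main obstacle I expect is \emph{controlling formula size through these ideal-membership certificates}, since naive certificates for ideal membership can blow up. The resolution should be that I never need a global certificate: I only need local, bounded-size certificates for each individual Frege-rule application (degree and size bounded by a universal constant for the fixed rule set), and then glue them using the additive structure of $\mathfrak F_k = \mathfrak F_{i_1} + \cdots$ plus-multiplications. A secondary subtlety is the field dependence: over $\mathbb Z_q$ the Boolean axiom $x_i(1-x_i)$ together with the equation $x_i^q = x_i$ behaves slightly differently than over $\mathbb Q$, and the gadgets simulating $\neg, \lor, \land$ via $\trn$ must be checked to produce $\{0,1\}$-valued behavior over the relevant field; but since $\trn$ only ever uses the operations $1-(\cdot)$ and $(\cdot)\cdot(\cdot)$, idempotence on $\{0,1\}$ holds over every field, so this is a routine check rather than a real obstruction. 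Finally, the bookkeeping that item \ref{it:1 in ncIPS def} (vanishing at $\overline y = \overline 0$) is preserved is automatic because every gadget is constructed to be a $\overline y$-linear (indeed $\overline y$-homogeneous-of-positive-degree, after the standard trick of multiplying through) combination of the hypotheses, so setting $\overline y = \overline 0$ kills it; I would make this explicit once at the start and then simply verify each gadget respects it.
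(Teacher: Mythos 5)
Your route — simulating a Frege proof line-by-line directly inside the non-commutative IPS — is genuinely different from the paper's, which first invokes the Grigoriev--Hirsch result that tree-like $\mathcal{F\mbox{\rm-}PC}$ polynomially simulates Frege (together with Kraj\'{\i}\v{c}ek's theorem that tree-like Frege polynomially simulates dag-like Frege) and then only has to simulate the five local $\mathcal{F\mbox{\rm-}PC}$ rewrite rules. But as stated your proposal has two genuine gaps. First, the ``key lemma'' you isolate is far too strong: you claim that whenever two non-commutative polynomials $p,q$ agree on all Boolean substitutions, the difference $p-q$ admits an ideal-membership certificate over the Boolean and commutator axioms of formula size polynomial in $|p|+|q|$. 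Taking $q=0$ and $p=\trn(T)$ for an arbitrary tautology $T$, this would give a $\mathrm{poly}(|T|)$-size non-commutative IPS refutation of $\trn(\neg T)$ for \emph{every} tautology $T$; since non-commutative IPS is a Cook--Reckhow system (Corollary \ref{cor:intro:Cook-Reckhow}), that would put $\mathsf{TAUT}$ in $\mathsf{NP}$ and hence give $\mathsf{NP}=\mathsf{coNP}$. So the lemma cannot be proved (or is at least wildly open); the certificate size must be allowed to depend on the Frege proof size, not just on $|p|,|q|$.

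Your attempted repair — that you only need ``local, bounded-size certificates for each individual Frege-rule application (degree and size bounded by a universal constant for the fixed rule set)'' — is also incorrect. A Frege rule or axiom scheme is applied to arbitrary substitution instances: e.g.\ for the scheme $A\lor\neg A$ you must exhibit an ideal-membership certificate for $\trn(B)(1-\trn(B))$ where $B$ is an arbitrary subformula of the proof, and this requires a structural recursion on $B$ whose cost grows (polynomially, if one is careful) with $|B|$, plus a certificate for non-commutative commutators $[\trn(B_1),\trn(B_2)]$ in terms of the variable commutators $x_ix_j-x_jx_i$ — this last point is precisely the paper's Lemma \ref{LemmaPQQP}, which delivers such a certificate of size $|P|^2|Q|^2$, and which is missing from your sketch. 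Even for modus ponens the ``gadget'' $\mathfrak F_{i_2} + \mathfrak F_{i_1}\cdot\trn(B)$ is not constant-size: it contains the full formula $\trn(B)$. Finally, because a non-commutative IPS certificate is a \emph{formula} (not a circuit), the inductively-built $\mathfrak F_{i_j}$ must be re-copied each time they are reused; for a dag-like Frege proof this can blow up exponentially, so you cannot simply ``sum over the $s$ lines.'' The paper sidesteps this by passing to tree-like Frege via Kraj\'{\i}\v{c}ek's result and tracking the invariant $|\phi_i|\le\proofbound{A_i}$ on a tree-like $\mathcal{F\mbox{\rm-}PC}$ derivation; you would need an analogous reduction and an analogous size invariant before the additive bookkeeping goes through.
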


The fact that an arithmetic formula (or circuit) in the form of the IPS can simulate
a propositional Frege proof was shown in \cite{GP14}. The \textit{non-commutative }IPS,
on the other hand, 
is much more restrictive than the original (commutative) IPS: instead of using commutative
polynomials (written as arithmetic formulas) we now use non-commutative polynomials (written as non-commutative arithmetic formulas).
And as mentioned above, in order to maintain the completeness of the non-commutative IPS we must add the commutator axioms $x_ix_j-x_jx_i$ to the system. Thus, the question arises: how can we still polynomially simulate Frege in this restrictive framework?
The answer to this, which also constitutes  one of the main observation of the simulation, is that \emph{the commutator axioms are already used implicitly in propositional Frege proofs}: every
classical propositional calculus system has some (possibly implicit) structural rules that enable one to commute AND's and OR's (e.g., $A\land B$ is not the same formula
as $B\land A$, from the perspective of the propositional calculus). In other words, Frege proofs operate with formulas as purely syntactic terms, and thus commutativity of AND and OR are not free for Frege proofs. 
\smallskip 

We now sketch in more detail the proof of Theorem \ref{thm:intro:ncIPS_sim_Frege}. To simulate Frege proofs we use an intermediate proof system $\mathcal{F\mbox{\rm-}PC}$  (standing
for ``formula
polynomial  calculus'') introduced by Grigoriev and Hirsch \cite{GH03}. The   $\mathcal{F\mbox{\rm-}PC}$ proof system (Definition \ref{def:F-PC}) can be thought of as a simple variant of the
well-studied \textit{polynomial calculus} (PC) system in which polynomials are written as arithmetic formulas
(instead of sums of monomials as in PC). 

Recall that a PC-refutation, as introduced
by Clegg, Edmonds and Impagliazzo \cite{CEI96}, is simply a sequence of polynomials \textit{written as sum of monomials}, where each polynomial is either taken from the
initial unsatisfiable set of polynomials or was derived using  two algebraic  rules:
from a pair of previously derived polynomials $f$ and $g$, derive $af+bg$ (for $a,b$
field elements); and from a previously derived $f$, derive $x_i\cd f$, for any variable $x_i$. The \FPC\ proof system makes the following two changes to PC (turning it into a provably much
stronger system):  
\begin{enumerate}
\item[(i)] every polynomial in an $\mathcal{F\mbox{\rm-}PC}$-proof
is written as an \textit{arithmetic formula} (instead of as a sum of monomials)
and is treated as a purely syntactic object (like in Frege); and
\vspace{-5pt}
 
\item [(ii)] we can derive new polynomials either by the two aforementioned PC rules, \emph{or by local rewriting rules} operating on any \textit{subformula}
and expressing simple operations on polynomials (such as commutativity of addition and product, associativity, distributivity, etc.).
\end{enumerate}

Grigoriev and Hirsch \cite{GH03} showed that \FPC\ polynomially simulates Frege proofs, and that for tree-like Frege proofs the polynomial simulation yields tree-like \FPC\ proofs. Since tree-like Frege is polynomially equivalent to Frege---because Frege proofs can always be balanced to a depth that  is logarithmic in their size (cf.~\cite{Kra95} for a proof)---we get that tree-like \FPC\ polynomially simulates (dag-like) Frege proofs.
\medskip 

Therefore, to conclude Theorem \ref{thm:intro:ncIPS_sim_Frege} it suffices to prove that the non-commutative IPS polynomially simulates tree-like \FPC\ proofs. To do this, loosely speaking, we construct the non-commutative  formula tree  according to the structure
of the tree-like \FPC\ proof, line by line.  
%
\smallskip

Now, since we write refutations as non-commutative formulas we can use the polynomial-time deterministic Polynomial         Identity Testing (PIT) algorithm  for non-commutative formulas, devised by Raz and Shpilka \cite{RS04}, to check in \textit{deterministic} polynomial-time the correctness of non-commutative IPS refutations:
\begin{corollary}\label{cor:intro:Cook-Reckhow}
The non-commutative IPS is a sound and complete refutation system in the sense of
Cook-Reckhow \cite{CR79}. That is, it is a sound and complete refutation system for unsatisfiable propositional formulas in which refutations \emph{can be checked for correctness in deterministic polynomial-time}.
\end{corollary}

This should be contrasted with the original (commutative) IPS of \cite{GP14}, for which verification of refutations is done in \textit{probabilistic }polynomial time using the standard Schwartz-Zippel \cite{Sch80,Zip79} PIT algorithm.
\bigskip 

The major  consequence of Theorem \ref{thm:intro:ncIPS_sim_Frege} is that to prove a super-polynomial Frege lower bound it suffices to prove a super-polynomial lower bound on non-commutative formulas computing certain polynomials. Specifically, it is enough to prove that any non-commutative IPS certificate $\frak F(\overline x,\overline y)$ (which is simply a non-commutative polynomial) has a super-polynomial non-commutative formula size; and yet in other words, it suffices to show that any such $\frak F$ must have a super-polynomial total rank according to the associated partial-derivatives matrices in the sense of Nisan \cite{Nis91} as discussed before.

\subsubsection*{Frege simulates non-commutative IPS}
We shall prove that Frege simulates the non-commutative IPS for CNFs (this is the case considered in \cite{GP14}), over $GF(2),$ and with only a quasi-polynomial increase in size (and for some specific cases the simulation can become  polynomial). 

It will be convenient to use a translation of clauses to non-commutative formulas
which is slightly different than  Definition \ref{def:trn_1}:



\begin{definition}[$\trn'(f)$ and $Q^\phi_i$]\label{def:trn_2}
Given a Boolean formula $f$ we define its non-commutative formula translation $\trn'(f)$ as follows. Let~ $\trn'(x):= 1-x $ ~and~ $\trn'(\neg x):= x$, for $x$ a variable.
Let $\trn'(\textsf{false}):=0$;  $\trn'(\textsf{true}):=1$; and $\trn'(f_1\lor\ldots \lor f_r):=\trn'(f_1)\cdots \trn'(f_r)$
(where the sequence of products stands for a (balanced) fan-in two tree of product gates with $\trn'(f_i)$ on the  leaves).
Further, for a CNF $\phi=\kappa_1\wedge \ldots \wedge \kappa_m$, denote by $Q_i^{\phi}$ the non-commutative formula translation $\trn'(\kappa_i)$ of  the clause $\kappa_i$.
\end{definition}
%

Note that this way, the system of equations $Q_1^{\phi}=0,\ldots, Q_m^{\phi}=0$ is unsatisfiable iff $\phi$
is unsatisfiable.

%

\begin{theorem}[Second main theorem]
\label{thm:intro:Frege_sim_ncIPS}
Let $\phi=\kappa_1\wedge \ldots \wedge \kappa_m$ be a CNF  and let  $Q_1^{\phi},\ldots, Q_m^{\phi}$  denote the corresponding non-commutative formulas for the clauses of $\phi$.
If there is a \NCFIPS\ refutation of size $s$ of    $Q_1^{\phi},\ldots, Q_m^{\phi}$ over $GF(2)$,  then there is a Frege proof of size $s^{O(\log s)}$ of the tautology
$\neg\phi$. 
\end{theorem}

\begin{note} The proof of Theorem \ref{thm:intro:Frege_sim_ncIPS}
achieves in fact a slightly stronger simulation than stated. That is, our simulation shows that if the degree of the non-commutative IPS refutation is $r$ and its formula depth is $d$, then there is a Frege proof of $\neg\phi$ with size ${\rm poly}\left({d+r+1 \choose r} \cd s\right)$. And in particular, Frege \emph{polynomially} simulates non-commutative IPS refutations of $O(\log n)$ degrees (for $n$ the number of variables in the CNF). 
However, for simplicity we shall always assume that the depth $d$ of 
the non-commutative IPS formula is logarithmic in its  size (Lemma \ref{lem:balance-non-commutative-formula}
shows that we can always balance non-commutative formulas), and so explicitly we only deal with the case where $d=O(\log s)$ and $r=O(s)$. 
\end{note}


 The proof of Theorem \ref{thm:intro:Frege_sim_ncIPS} consists of several
separate steps of independent interest. From the logical point
of view, the argument is a  short Frege proof of a \textit{reflection principle} for the non-commutative IPS system. A reflection principle for a given proof system $P$ is a  statement that says that if there exists a $P$-proof of a formula $F$ then $F$ is also \textit{true}. The argument becomes rather complicated because we need to prove properties of the PIT algorithm for non-commutative formulas devised by Raz and Shpilka \cite{RS04} within the
restrictive framework of propositional Frege proofs.
\smallskip 

Our goal is then to prove $\neg \phi$ in Frege, given a  non-commutative IPS refutation
$\pi$ of $\phi$.
\smallskip 

\textbf{Step 1: balancing.}
We  first \emph{balance} the non-commutative IPS $\pi$, so that its depth is logarithmic in its size. We observe that the recent construction of Hrube\v s and Wigderson \cite{HW14} for balancing non-commutative formulas with division gates (incurring with at most a  polynomial increase in size) results in a \textit{division-free} formula, when the \textit{initial }non-commutative formula is division-free by itself. Therefore, we can assume that the non-commutative IPS certificate is already balanced (this
step is independent of the Frege system).
\smallskip

\textbf{Step 2: Booleanization.} We then consider our balanced $\pi$, which is a non-commutative polynomial identity \emph{over $GF(2)$, as a Boolean tautology}, by replacing plus gates with XORs and product gates with ANDs. 
\smallskip

\textbf{Step 3:  reflection principle.} We use a reflection principle to reduce the
task of efficiently proving $\neg\phi$ in Frege to the following task: show that any non-commutative formula
identity over $GF(2)$, considered as a Boolean tautology, has a short Frege proof.  

\smallskip

\textbf{Step 4: homogenization.} This is
the \uline{\emph{only}} step that is responsible for the
\textit{quasi-polynomial size increase }in Theorem \ref{thm:intro:Frege_sim_ncIPS}.
More precisely, this increase in size depends on the fact that for the purpose of establishing short Frege proofs for  all non-commutative polynomial identities over $GF(2)$ (considered as Boolean tautological formulas) it is important that the formulas are written as a sum of \textit{homogenous} non-commutative formulas. 

Note that it is not known whether  arithmetic formulas can be turned into a (sum of) homogenous formulas with only a polynomial increase in size (in contrast to the standard efficient homogenization of arithmetic \emph{circuits} by Strassen \cite{Str73} that does allow such a conversion). Nevertheless, Strassen's standard procedure enables us to transform any polynomial-size
arithmetic formula into a sum of homogenous formulas with
only a \textit{quasi-polynomial} increase in size: any formula of size $\poly(n)$ computing
a polynomial $f$ (and thus the degree of $f$ is also polynomial) can be transformed into a sum of homogenous formulas, each having size $n^{O(\log n)}$ and computes the corresponding homogenous part of $f$. (One can show that the same 
also holds for \textit{non-commutative}
formulas.)  

For the purpose of establishing a quasi-polynomial simulation of non-commutative IPS by Frege,
it is sufficient to use the original Strassen's homogenization procedure (as simulated inside Frege; cf.~\cite{HT12}).  However, as the note after Theorem  \ref{thm:intro:Frege_sim_ncIPS} indicates, we show a slightly stronger simulation result, using an efficient Frege simulation of a recent result due to  Raz \cite{Raz13-tensor} who showed how to transform an arithmetic formula into (a sum of) homogenous formulas in a manner which is more efficient than Strassen \cite{Str73}. 
Specifically, in Lemma \ref{homogenous-proof} we show that:
\begin{enumerate}
\item The same construction in \cite{Raz13-tensor} also holds for \textit{non-commutative} formulas; \vspace{-5pt}
\item This construction for non-commutative formulas can be carried out efficiently inside Frege. That is, if $F$ is a non-commutative
formula of size $s$ and depth $d$ computing a homogenous non-commutative polynomial
over $GF(2)$ of degree $r$, then there exists a syntactic homogenous non-commutative formula $F'$ computing the same polynomial and with size $O\left({r+d+1 \choose r}
\cd s\right)$, such that Frege admits a proof of $F\sequiv F'$ of size polynomial
(in $|F'|$).
\end{enumerate}



\smallskip

\textbf{Step 5: short proofs for homogenous non-commutative identities.} Now that we have reduced our
task to the task of showing that every  non-commutative
formula identity over $GF(2)$ (considered as a tautology) has a short Frege proof;
and we have also agreed to first turn (inside
Frege) our non-commutative identities into
\emph{homogenous} formulas (incurring in
up
to a quasi-polynomial increase in the formulas
size)---it remains only to
show how to efficiently prove in Frege homogenous
non-commutative
identities. (Formally, we shall in fact deal
with \emph{syntactic} homogenous formulas.)
 
To this end we essentially construct an efficient  Frege proof of the correctness of the Raz and Shpilka PIT algorithm for non-commutative formulas  \cite{RS04}. This PIT algorithm  uses some basic linear algebraic concepts that might be beyond the efficient-reasoning strength of  Frege. However, since we only need to show the \textit{existence} of short Frege proofs for the PIT algorithm's correctness, we can supply \emph{witnesses} to witness the desired linear algebraic objects needed in the proof (these witnesses
will be a sequence of linear transformations).

A bigger obstacle is that it seems impossible to reason directly inside Frege about the algorithm of \cite{RS04}, since this algorithm first converts a non-commutative formula into an \textit{algebraic branching program }(ABP); but the evaluation of ABPs (apparently) cannot be done with Boolean formulas (and accordingly Frege (apparently) cannot reason about the evaluation of ABPs). The reason for this apparent inability of Frege to reason efficiently about ABP's evaluation is that an ABP is a slightly more ``sequential" object than a formula: an evaluation of an ABP with $d$ layers can be done  by an
iterative matrix multiplication of $d$ matrices---known to be doable with quasi-polynomial
size formulas (or polynomial-size circuits with $O(\log^2 n)$ depth)---while Frege
is a system  that
operates with formulas. 
To overcome this obstacle we show how to perform Raz and Shpilka's PIT algorithm \emph{directly on non-commutative formulas}, without converting the formulas first into ABPs. This technical contribution takes quite a large part of the argument (Sec.~\ref{sec:splitting}). 

We are finally able to prove the following statement, which might be of independent interest:

\begin{theorem}\label{thm:intro:homo-formula}
If a non-commutative \emph{homogeneous} formula $F(\overline x)$ over $GF(2)$ of size $s$ is identically zero, then the corresponding \boolean\ formula $\neg\Fb(\overline x)$ (where $\Fb$ results by replacing $+$ with XOR and $\cd$ with AND in $F(\overline x)$) can be proved with a Frege proof of size at most $s^{O(1)}$.
\end{theorem}

A more detailed \textit{overview} of the proof (specifically, of the proof of Theorem \ref{thm:intro:homo-formula})
appears in Section \ref{sec:remain_proof_overview}.



\subsection{Comparison with previous work}\label{sec:comparison}
Our main characterization of the Frege system is based on a non-commutative version of the IPS system from Grochow and Pitassi \cite{GP14}.  As described above, the non-commutative IPS gives a tighter characterization than the (commutative) IPS in \cite{GP14}, and close to capture almost tightly the Frege system. 

In the original (formula version of the) IPS, proofs are arithmetic formulas, and thus any super-polynomial lower bound on IPS refutations implies $\sf VNP\neq \sf VP_e$, or in other words, that the permanent does not have polynomial-size arithmetic formulas (Joshua Grochow [personal communication]). This shows that proving IPS lower bounds will be considerably difficult to obtain. For the non-commutative IPS, on the other hand, we face a seemingly much favourable situation: an exponential-size lower bound on   non-commutative IPS gives only a corresponding lower bound on non-commutative formulas, for which exponential-size lower bounds are
already known \cite{Nis91}. In other words, exponential-size lower bounds on Frege implies merely---at least in the context of the Ideal Proof System---corresponding lower bounds on non-commutative formulas, a result which is already known. In view
of this, it seems that there is no strong concrete  justification to believe that Frege lower bounds are
beyond  current techniques.
\smallskip

Let us also mention the work in \cite{Tza11-I&C} that dealt with propositional proof systems over non-commutative formulas.  In \cite{Tza11-I&C} the choice was made to define all proof systems as  polynomial calculus-style systems in which proof-lines are written as non-commutative formulas (as well as
the more restricted class of ordered-formulas). 
This meant that the characterization of a proof system in terms of a  \emph{single} non-commutative polynomial is lacking from that work (as well as the consequences we obtained in the current work).

\section{Preliminaries}
For a positive natural number $n$ we use the standard notation $[n]$ for $\{1,\ldots,n\}$.
\begin{definition}[Boolean formulas]
Given a set of input variables $\{x_1,x_2,\ldots\}$ a \emph{Boolean formula} on the input variables
is a rooted finite tree of fan-in at most 2, with edges directed from leaves to the root. We consider the edges coming into nodes as \emph{ordered}.\footnote{This
is not important in general, but for Frege proofs it is in fact implicit that propositional
formulas are ordered.}
Internal nodes are labeled with the Boolean gates OR, AND and NOT, denoted $\lor,\land, \neg$, respectively, where the fan-in of $\lor$ and $\land$ is two and the fan-in of $\neg$ is one. The leaves are labeled either with input variables or with $0,1$ (identified with the truth values {\rm \textsf{false}} and {\rm \textsf{true}}, resp.).
The entire formula computes the function computed by the gate at the root. Given a formula $F$, the
\textbf{size} of the formula is the number of Boolean gates in $F$, denoted $|F|$.
\end{definition}

Given a pair of Boolean formulas $ A$ and $ B $ over the variables $x_1,\ldots,x_n$, we denote by $A[B/x_i]$ the formula $ A $ in which \textit{every occurrence of $ x_i $ } in $A$ is substituted by the formula $ B $.

We use the symbol $\equiv$ to denote \emph{logical equivalence} and we use the symbol
$A\sequiv B$ to denote $(A\rightarrow B) \land (B\rightarrow A)$.
\subsection{The Frege proof system}

As outlined in the introduction, a Frege proof system is any standard propositional
proof system for proving propositional tautologies having finitely many
axiom schemes and deduction rules, and where proof-lines are written as Boolean\ formulas. The \emph{size} of a Frege proof is the number of symbols it takes to write down the proof, namely the total of all the  formula sizes appearing in the proof.
Let us define Frege proofs in a more formal way.  
\begin{definition} [Frege (derivation) rule] A \emph{Frege rule} is a sequence of propositional formulas $A_0(\overline x),\ldots,A_k(\overline x)$, for $k \le 0$, written as $\frac{A_1(\overline x), \ldots,A_k(\overline x)}{A_0(\overline x)}$. In case $k=0$, the Frege rule is called an \emph{axiom scheme}. A formula $F_0$ is said to be \emph{derived by the rule} from $F_1,\ldots,F_k$ if $F_0,\ldots,F_k$ are all substitution instances of $A_1,\ldots,A_k$, for some assignment to the $\overline x$ variables (that is, there are formulas $B_1,\ldots,B_n$ such that $F_i = A_i[B_1/x_1,\ldots,B_n/x_n]$, for all $i=0,\ldots,k$). The Frege rule is said to be \emph{sound} if whenever an assignment satisfies the formulas  $A_1,\ldots,A_k$ above the line, then it also satisfies the formula $A_0$ below the line.
\end{definition}

\begin{definition} [Frege proof] Given a set of Frege rules, a \emph{Frege proof} is a sequence of Boolean formulas such that every formula is either an axiom or was derived by one of the given Frege rules from previous formulas. If the sequence terminates with the Boolean formula $A$, then the proof is said to be a \emph{proof} of $A$. The \textbf{size} of a Frege proof is  the sum of all formula sizes in the proof.
\end{definition}

A proof system is said to be \textit{sound} if it admits proofs of only tautologies.
A proof system is said to be \emph{implicationally complete} if for all set of formulas $S$, if $S$ semantically implies $F$, then there is a proof of $F$ using (possibly) axioms from $S$. 
\begin{definition} [Frege proof system]\label{def:Frege_system} Given a set $P$ of sound Frege rules, we say that $P$ is a \emph{Frege proof system} if $P$ is implicationally complete.
\end{definition}

Note that a Frege proof is always sound since the Frege rules are assumed to be sound.
Frege is also complete (that is, can prove all tautologies), by implicational
completeness. We do not need to work with a specific Frege proof system, since a basic result in proof complexity by Reckhow \cite{Rec76:PhD} states that every two Frege proof systems, \textit{even with different propositional connectives}, are polynomially equivalent. For concreteness the reader can think of Schoenfield's system from the introduction, noting it is indeed a Frege system. \smallskip

The problem of demonstrating super-polynomial size lower bounds on propositional Frege proofs  asks whether there is a family $(F_n)_{n=1}^\infty$ of propositional tautological formulas for which there is no polynomial $p$ such that the minimal Frege proof size of $F_n$ is at most $p(|F_n|)$, for all $n\in \mathbb{Z}^+$.



\subsection{Preliminary  algebraic models of computation and proofs }
Here we define arithmetic formulas (both commutative and non-commutative)
as well as the algebraic propositional proof system
Polynomial Calculus over Formulas ($\mathcal{F\mbox{\rm-}PC}$) introduced by Grigoriev and Hirsch \cite{GH03}.
 
\begin{definition}[Non-commutative formula]\label{def:nonc_formula}
Let $ \F $ be a field and $\{x_1,x_2,\ldots\} $ be (algebraic) variables. A \emph{non-commutative arithmetic formula} (or \emph{non-commutative formula} for short) is a finite (ordered)
labeled tree, with edges directed from the leaves to the root, and with fan-in at most two, such that there is an order on the edges coming into a node: the first edge is called the \emph{left} edge and the second one the \emph{right} edge. Every leaf of the tree (namely, a node of fan-in zero) is labeled either with an input variable $ x_i $ or a field element. Every other node of the tree is labeled either with $ +$ or $\times $ (in the first case the node is a plus gate and in the second case a non-commutative
product gate). We assume that there is only one node of out-degree zero, called \emph{the root}. 
\end{definition}
A non-commutative formula \emph{computes} a non-commutative polynomial in $ \F\langle x_1,\ldots,x_n\rangle  $ in the following way. 
A leaf computes the input variable or field element that labels it. A plus gate computes the sum of polynomials computed by its incoming nodes. A product gate computes the \emph{non-commutative} product of the polynomials computed by its incoming nodes according to the order of the edges. (Subtraction is obtained using the constant $ -1$.) The output of the formula is the polynomial computed at the root. The \textbf{\emph{depth} }of a formula is the maximal length of a path from the root to the leaf. The \textbf{\textit{size}} of a non-commutative  formula $ F $ is the total number of internal nodes (i.e., all nodes except the leaves) in its underlying tree, and is denoted similarly to the Boolean case by $|F|$.

The definition of (a commutative) arithmetic formula is almost identical:
\begin{definition}[(Commutative) arithmetic formula]
An \emph{arithmetic formula} is defined in a similar way to a non-commutative formula, except that we ignore the order of multiplication (that is, a product node does not have order on its children and there is no order on multiplication when defining the polynomial computed by a formula).
\end{definition}

Substitutions of  non-commutative formulas into other non-commutative formulas
are defined
and denoted similarly to  substitutions in Boolean formulas.
\QuadSpace


Note that we consider arithmetic formulas as syntactic objects. For example, $x_1+x_2$ and $x_2+x_1$ are different formulas. Furthermore, in the proof system \FPC\  defined below they should be  \emph{derived} from each other via an explicit application of  a rewrite rule.

\subsubsection{Polynomial calculus over formulas $\mathcal{F\mbox{\rm-}PC}$}

The \emph{polynomial calculus over formulas} system, denoted $\mathcal{F\mbox{\rm-}PC}$, was introduced
by Grigoriev and Hirsch \cite{GH03}. This system  operates with (commutative) arithmetic formulas (as purely syntactic terms). \FPC\ is a refutation system: an \FPC\ refutation  establishes that a collection of polynomials has no 0-1 roots.
We can also treat \FPC\
as a \emph{proof} system for propositional tautologies: for every Boolean tautology $T$, $\trn(\neg T)$ (Definition \ref{def:trn_1}) is a polynomial that does not have
a 0-1 root, and therefore, an \FPC\ refutation of $\trn(\neg T)$ can be considered
as an \emph{\FPC\ proof of the tautology $T$}. 

\begin{definition}[Rewrite rule]
A \emph{rewrite rule} is a pair of formulas $ f,g $ denoted $ f \rightarrow g $.
Given a formula $ \Phi $, an \emph{application of a rewrite rule $ f \rightarrow g $ to $\Phi $}
is the result of replacing at most one occurrence of $ f $  in  $ \Phi $  by  $ g $ (that is,
substituting a subformula $ f $  inside $\Phi $ by the formula $ g $).
We write $ f \leftrightarrow g $ to denote the pair of rewriting rules $ f \rightarrow g $ and $ g \rightarrow f $.
\end{definition}

\begin{definition}[$ \mathcal{F\mbox{\rm-}PC} $ \cite{GH03}]\label{def:F-PC}
 Fix a field $ \F $. Let $ F:=\set{f_1,\ldots,f_m} $ be a collection of \emph{formulas}\footnote{Note here that we are talking about formulas (treated as syntactic terms). Also notice that all the formulas in $ \mathcal{F\mbox{\rm-}PC} $ are considered  as commutative formulas computing (commutative) polynomials, though, because the formulas are merely
syntactic terms we have an order on  children of internal nodes, and in particular
children of product gates are ordered.}
 computing polynomials from $ \F[x_1,\ldots,x_n] $.
 Let the set of axioms be the following formulas:
 \begin{description}
   \item[\quad Boolean axioms]\qquad
   $ x_i\cdot(1-x_i)\,, \qquad  \mbox{ for all $\,1\le i\le n\,$.}$
 \end{description}
    A sequence $\pi =(\Phi_1,\ldots,\Phi_\ell)$ of formulas computing polynomials from
   $\F[x_1,\ldots, x_n]$\, is said to be \textbf{an $ \mathcal{F\mbox{\rm-}PC} $ proof of $ \Phi_\ell $ from $F$},
   if for every $i\in[\ell]$ we have one of the following:
   \begin{enumerate}
      \item $\Phi_i = f_j\,$, for some $j\in[m]$;
      \item $\Phi_i$ is a Boolean axiom;
      \item $\Phi_i$ was deduced by one of the following inference rules from previous proof-lines
            $\Phi_j, \Phi_k\,$, for $j,k<i$:
            \begin{description}
                \item[\quad Product]
                \[
                    \frac{\Phi}{x_r\cd \Phi}\ ,   \qquad \qquad \mbox{for $ r\in[n]$}\,.
                \]
                \item[\quad Addition]
                \[
                    \frac{{\Phi\quad \quad \Theta}}{{a\cd \Phi + b\cd \Theta}}\ ,\qquad
                            \,{\mbox{for}}\  a,b \in \F\,.
                \]
            \end{description}
            (Where $\Phi, x_r\cd\Phi, \Theta, a\cd\Phi, b\cd\Theta $ are \emph{formulas}
            constructed as displayed; e.g.,
            $ x_r\cd\Phi $ is the formula with product gate at the root having the
            formulas $ x_r $ and $ \Phi $
            as children.)\footnotemark
      \item $\Phi_i $ was deduced from previous proof-line $\Phi_j $, for $ j<i $, by one of the following \emph{rewriting rules}
      expressing the polynomial-ring axioms (where $f,g,h$ range over all arithmetic formulas computing polynomials in $\F[x_1,\ldots,x_n]$):
             \begin{description}
                 \item[Zero rule]
                 $ 0\cd f \leftrightarrow 0 $

                \item[Unit rule]
                $  1\cd f \leftrightarrow f$

                \item[Scalar rule]
                $ t \leftrightarrow \alpha $, where $t $ is
                a formula containing no variables (only field $ \F $  elements) that computes the constant $ \alpha\in\F $.

                \item[Commutativity rules]
                $ f + g \leftrightarrow g + f \,$, \qquad $ f\cd g \leftrightarrow g\cd f$
                \item[Associativity rule]
                $ f + (g+h) \leftrightarrow (f+g)+h \,$,  \qquad    $ f\cd(g\cd h) \leftrightarrow (f\cd g)\cd h $
                \item[Distributivity rule]
                $ f \cd(g+h) \leftrightarrow (f\cd g)+(f\cd h) $
             \end{description}
    \end{enumerate}
(The semantics of an $ \mathcal{F\mbox{\rm-}PC} $ proof-line $ p_i$ is the polynomial equation $p_i=0$.)

An \emph{$ \mathcal{F\mbox{\rm-}PC} $ refutation of} $F$ is a proof of the formula $\;1$ from $F$.
The \textbf{size} of an  $ \mathcal{F\mbox{\rm-}PC} $ proof $\pi$ is defined as the total size of all
formulas in $\pi$ and is denoted by $|\pi|$.
\end{definition}
\footnotetext{In \cite{GH03} the product rule of $ \mathcal{F\mbox{\rm-}PC} $ is defined so that one can derive $ \Theta\cd\Phi $ from $ \Phi $,         where $ \Theta $ is any formula, and not just a variable. However, it is easy to show that
the definition             of $ \mathcal{F\mbox{\rm-}PC} $ in \cite{GH03} and our Definition \ref{def:F-PC}            polynomially-simulate each other.}

\begin{definition}[Tree-like $ \mathcal{F\mbox{\rm-}PC} $]
A system $ \mathcal{F\mbox{\rm-}PC} $ is a  \emph{tree-like} $ \mathcal{F\mbox{\rm-}PC} $ if
 every derived arithmetic formula in the proof system  is used only once (and if it is needed again, it must be derived once more).
\end{definition}
%

%
%


For the purpose of comparing the relative complexity of different proof systems we have the concept
of a \textbf{\emph{simulation}}. Specifically, we say that a propositional proof system $P$ \emph{polynomially simulates} another propositional proof system $Q$ if there is a polynomial-time computable function $f$ that maps $Q$-proofs to $P$-proofs of the same tautologies (if $P$ and $Q$ use different representations for tautologies,
we fix a translation (such as $\trn(\cd)$) from one representation to the other).
In case $f$ is computable in time $t(n)$ (for $n$ the input-size), we say that $P$ \emph{~$t(n)$-simulates}
$Q$. Specifically, if $t(n)=n^{O(\log n)}$ we say the simulation is \emph{quasi-polynomial}. We say that $P$ and $Q$ are \emph{polynomially equivalent} in case $P$ polynomially
simulates $Q$ and $Q$ polynomially simulates $P$. (Our simulations will always be formally  $t(n)$-simulations, though we might not always state explicitly that the map $f$, from $Q$-proofs to $P$-proofs is efficiently computable, and only show the \emph{existence} of a $P$-proof
whose size is proportional to the corresponding $Q$-proof.)

\para{Tree-like \FPC\ polynomially simulates Frege.}

Grigoriev and Hirsch showed the following:

\begin{theorem}[\cite{GH03}]\label{tree-like_F-PC=Frege}
Tree-like $\mathcal{F\mbox{\rm-}PC}$ polynomially simulates Frege. More precisely, for
every propositional tautology T, if T has a polynomial-size Frege proof then there is a polynomial-size tree-like \FPC\ proof of $\hbox{ \rm \trn}(\neg T)$ (over $\Z_q$, for $q$ a prime, or $\mathbb Q$).
\end{theorem}

Let us shortly explain how Grigoriev and Hirsch  \cite{GH03} obtained a simulation of Frege by \emph{tree-like}
\FPC\ (in contrast to simply (dag-like) \FPC), as this is not an entirely trivial
result (and which, in turn, is important to understand our  simulation). Indeed, this simulation depends crucially on a somewhat surprising result of Kraj\'{i}\v{c}ek who showed that tree-like Frege and (dag-like) Frege are \emph{polynomially equivalent} \cite{Kra95}:
\begin{theorem*}[\cite{Kra95}]
  Tree-like Frege proofs polynomially simulate Frege proofs.
\end{theorem*}


Grigoriev and Hirsch show that (Theorem 3 in \cite{GH03})  $\mathcal{F\mbox{\rm-}PC}$ polynomially simulates  Frege. Then, by inspection of this simulation, one can observe
that tree-like Frege proofs are simulated by tree-like \FPC\ proofs (which is sufficient
to conclude the simulation due to the theorem above), namely:
\begin{lemma*}[\cite{GH03}]
Tree-like $\mathcal{F\mbox{\rm-}PC}$ polynomially simulates tree-like Frege.
\end{lemma*}

\section{Non-commutative ideal proof system polynomially simulates Frege}

Here we show that the non-commutative IPS polynomially simulates
Frege. \begin{theorem}[restatement of Theorem \ref{thm:intro:ncIPS_sim_Frege}]
\label{thm:ncIPS_sim_Frege}
The non-commutative IPS refutation system (when refutations are written as non-commutative formulas) polynomially simulates the Frege system. More precisely, for every propositional tautology $T$, if $T$ has a polynomial-size Frege proof then there is a non-commutative IPS refutation of $\trn(\neg T)$ (over $\Z_p$ for a prime $p$, or $\mathbb Q$) of polynomial size.
\end{theorem}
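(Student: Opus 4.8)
The plan is to reduce, via the intermediate system $\mathcal{F\mbox{-}PC}$, to a single statement: \emph{non-commutative IPS polynomially simulates tree-like $\mathcal{F\mbox{-}PC}$}. By Theorem~\ref{tree-like_F-PC=Frege} it suffices to prove this latter fact, since Frege $\equiv_p$ tree-like Frege $\le_p$ tree-like $\mathcal{F\mbox{-}PC}$. So suppose $\pi=(\Phi_1,\ldots,\Phi_\ell)$ is a tree-like $\mathcal{F\mbox{-}PC}$ refutation of $f_1,\ldots,f_m$ (the translations of the clauses of $\neg T$, together with the Boolean axioms $x_i(1-x_i)$). Because the commutator axioms $x_ix_j-x_jx_i$ are among the hypotheses of the non-commutative IPS, I will treat each commutative formula $\Phi_i$ as a \emph{non-commutative} formula by picking some fixed left-to-right order at each product gate; the key point is that, working modulo the ideal generated by the $F_i$'s (which includes the commutators), any two such non-commutative readings of the same commutative formula are provably equal.

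The core of the argument is an induction on the proof length $\ell$ building, for each line $\Phi_i$, a non-commutative formula $G_i(\overline x,\overline y)$ of size polynomial in $|\pi|$ such that $G_i(\overline x,\overline 0)=0$ and $G_i(\overline x, \overline F)=\widehat{\Phi_i}$ as non-commutative polynomials, where $\overline F=(F_1,\ldots,F_m)$ ranges over all the hypotheses including Boolean and commutator axioms. In other words, each $\Phi_i$ is ``certified'' modulo the ideal, with the certificate given as a small non-commutative formula that vanishes when the $y$'s are zeroed. I would handle the cases as follows. For an axiom $\Phi_i=f_j$ (or a Boolean axiom), take $G_i=y_j$. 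For the addition rule $\Phi_i=a\Phi_j+b\Phi_k$, take $G_i=aG_j+bG_k$. For the product rule $\Phi_i=x_r\cdot\Phi_j$, take $G_i=x_r\cdot G_j$ (non-commutatively, with $x_r$ on the left, matching the chosen order). The rewriting rules are the interesting part: for zero, unit, scalar, associativity and distributivity, the non-commutative polynomial identity literally still holds, so $G_i=G_j$ works (the formula for $\Phi_i$ differs from that for $\Phi_j$ only in the ``$x$-part'', which is a genuine non-commutative identity). For \emph{commutativity of addition} again $G_i=G_j$ suffices. The one genuinely new case is \emph{commutativity of product}, $f\cdot g\leftrightarrow g\cdot f$ applied to a subformula: here $\widehat{\Phi_i}-\widehat{\Phi_j}$ is not zero as a non-commutative polynomial, but it lies in the ideal generated by the commutators $x_ix_j-x_jx_i$. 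Concretely, if $\Phi_j$ has a subformula $A\cdot B$ swapped to $B\cdot A$ in $\Phi_i$, one writes $\widehat{A}\widehat{B}-\widehat{B}\widehat{A}$ as an explicit $\F$-linear combination $\sum_k L_k\cdot (x_{a_k}x_{b_k}-x_{b_k}x_{a_k})\cdot R_k$ of (left and right multiples of) the commutator polynomials, where the $L_k,R_k$ are themselves small non-commutative formulas built from $\widehat{A},\widehat{B}$; substituting the corresponding $y$-variable for each commutator gives a correction term $c_i(\overline x,\overline y)$ with $c_i(\overline x,\overline 0)=0$ and $c_i(\overline x,\overline F)=\widehat{\Phi_i}-\widehat{\Phi_j}$, and one sets $G_i=G_j+c_i$ (more precisely, one replaces the swapped subformula inside $G_j$ and adds the commutator-correction, using the context-formula around the subformula as the outer multipliers). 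The size bound is maintained because each step adds at most a polynomial amount, and tree-likeness ensures the total blow-up stays polynomial. Finally, since $\Phi_\ell=1$, the formula $\mathfrak{F}:=G_\ell$ satisfies $\mathfrak{F}(\overline x,\overline 0)=0$ and $\mathfrak{F}(\overline x,\overline F)=1$, i.e.\ it is the desired non-commutative IPS certificate; its formula size is polynomial in $|\pi|$, hence polynomial in the Frege proof size.

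Two bookkeeping points complete the proof. First, the translation: one must check that $\widehat{\trn(\neg T)}$ as defined in Definition~\ref{def:trn_1} agrees (modulo the Boolean axioms, which are available as hypotheses) with the $\mathcal{F\mbox{-}PC}$ translation of the clauses of $\neg T$ via $\wedge\mapsto\times$, $\vee\mapsto+$, $\neg x\mapsto 1-x$; this is a routine structural induction, and the discrepancy is absorbed into the $y$-variables for the Boolean axioms exactly as above. Second, the coefficient field: the construction uses only coefficients $0,\pm 1$ and the scalars $a,b$ appearing in the addition rule of $\pi$, so the certificate has integer coefficients (and in particular lives over $\mathbb{Q}$ and over $\mathbb{Z}_q$ for every prime $q$), giving Theorem~\ref{thm:ncIPS_sim_Frege} as stated.

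The main obstacle is the commutativity-of-product rewrite case: one has to exhibit, \emph{uniformly and with polynomial-size formulas}, the representation of $\widehat{A}\widehat{B}-\widehat{B}\widehat{A}$ as a formula-combination of the commutator axioms, taking care that the substitution $\overline y\mapsto\overline F$ sends this combination exactly to the polynomial difference while $\overline y\mapsto\overline 0$ kills it, and that re-inserting the correction into the existing certificate $G_j$ respects the surrounding formula context without duplicating subformulas in a way that breaks the polynomial size bound. Everything else is either a literal non-commutative identity (handled by $G_i=G_j$) or a one-line propagation of the certificate through a linear rule.
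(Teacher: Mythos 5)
Your proposal follows the same route as the paper: reduce via Theorem~\ref{tree-like_F-PC=Frege} to simulating tree-like $\mathcal{F\mbox{-}PC}$, then build a certificate line-by-line by induction, with $\phi_i=y_j$ for axioms, linear/multiplicative propagation for the inference rules, $\phi_i=\phi_j$ for the rewrite rules other than commutativity of product, and an additive commutator-correction $\phi_i=\phi_j+\commF$ for that last case; what you flag as ``the main obstacle'' is precisely the content of the paper's Lemma~\ref{LemmaLiLj} and Lemma~\ref{LemmaPQQP}, which handle the context and then express $\widehat P\widehat Q-\widehat Q\widehat P$ modulo the commutator ideal by structural recursion on $P,Q$, yielding the $|P|^2|Q|^2$ formula-size bound. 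Your sketch is correct in spirit; the one caveat is that one cannot literally expand $\widehat A\widehat B-\widehat B\widehat A$ into a flat sum of left/right multiples of variable-commutators (that could be exponentially long) --- the paper's recursive formula construction is exactly what keeps this polynomial-size, and tree-likeness is what makes the per-line blow-up add rather than multiply across the proof.
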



%

Recall that Raz and Shpilka \cite{RS04} gave a deterministic polynomial-time
PIT algorithm for non-commutative formulas (over any field):

\begin{theorem}[PIT for non-commutative formulas \cite{RS04}]\label{thm:RS04-PIT}
There is a deterministic polynomial-time algorithm that decides whether a given noncommutative formula over a field $ \F $ computes the zero polynomial $ 0 $.\footnote{We assume here that the elements of   $ \F $  have an efficient representation and the field
operations are efficiently computable (e.g., the field of rationals).}
\end{theorem}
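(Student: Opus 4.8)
The plan is to follow Raz and Shpilka's purely linear-algebraic algorithm, which uses no randomization and hence works over an arbitrary (efficiently representable) field, small finite fields included. First I would reduce to the homogeneous case: a noncommutative formula $\Phi$ of size $s$ computes a polynomial of degree $<s$, and writing $\Phi=\sum_{d}\Phi^{(d)}$ for its homogeneous parts, $\Phi\equiv 0$ iff every $\Phi^{(d)}\equiv 0$. The standard homogenization — split each gate $v$ into gates $v^{(0)},\dots,v^{(s)}$, with a product gate recombined as $v^{(d)}=\sum_{a}v_1^{(a)}\cdot v_2^{(d-a)}$ respecting the left--right order of the children — yields, for each $d$, a homogeneous noncommutative \emph{circuit} (not formula) of size $\mathrm{poly}(s)$, so the blow-up here is only polynomial. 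Next, by Nisan's construction I would convert each homogeneous degree-$d$ circuit into an algebraic branching program (ABP) with $d+1$ layers and width $w\le\mathrm{poly}(s)$, whose edge labels are homogeneous linear forms and which computes $\Phi^{(d)}$ as the sum over source-to-sink paths of the products of edge labels.

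The core of the algorithm is a single left-to-right sweep of the ABP that maintains a polynomial-size certificate of the span of the partial computations. For a node $u$ in layer $i$ let $f_u$ be the homogeneous degree-$i$ polynomial summed over all source-to-$u$ paths, and let $U_i=\mathrm{span}\{f_u: u\text{ in layer }i\}$, a space of dimension at most $w$. The invariant to maintain is: a set $B_i$ of at most $w$ degree-$i$ monomials such that the coordinate projection $\pi_{B_i}$ (reading off the coefficients of the monomials in $B_i$) is injective on $U_i$, together with the stored matrix $\big(\mathrm{coeff}_m(f_u)\big)_{m\in B_i,\,u}$. The base case is trivial: layer $0$ has the single source with $f_{\mathrm{source}}=1$, so $B_0=\{1\}$. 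For the inductive step, every degree-$(i+1)$ noncommutative monomial is uniquely of the form $m\cdot x_j$ (length-$i$ prefix times last letter), and the identity $f_v=\sum_{u\to v}f_u\cdot\ell_{u,v}$ gives $\mathrm{coeff}_{m x_j}(f_v)=\sum_{u\to v}\mathrm{coeff}_m(f_u)\cdot\mathrm{coeff}_{x_j}(\ell_{u,v})$; hence from the stored matrix I can compute $\mathrm{coeff}_t(f_v)$ for every node $v$ in layer $i+1$ and every $t$ in the polynomially-large candidate set $T=\{m\cdot x_j:m\in B_i,\ j\in[n]\}$. I then let $B_{i+1}\subseteq T$ be any maximal subset for which the columns $\big(\mathrm{coeff}_t(f_v)\big)_v$, $t\in B_{i+1}$, are linearly independent (found by Gaussian elimination), and store the corresponding submatrix. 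At the last layer the single sink carries $f_{\mathrm{sink}}=\Phi^{(d)}$, and by the invariant $\Phi^{(d)}\equiv 0$ iff the stored (single-row) matrix is identically zero; declaring $\Phi\equiv 0$ precisely when this holds for all $d$ completes the test.

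The one genuinely non-routine point — and the step I expect to be the main obstacle to write cleanly — is the \emph{rank-preservation lemma} ensuring the invariant propagates: if $\pi_{B_i}$ is injective on $U_i$, then $\pi_T$ is injective on $U_{i+1}$, so a valid $B_{i+1}$ of size $\le w$ really does sit inside $T$. This is a short computation with linear functionals: since $B_i$ witnesses injectivity, each functional $\mathrm{coeff}_m|_{U_i}$ is a fixed linear combination $\sum_{m'\in B_i}\alpha_{m,m'}\,\mathrm{coeff}_{m'}|_{U_i}$; substituting this into the displayed formula for $\mathrm{coeff}_{m x_j}(f_v)$ yields $\mathrm{coeff}_{m x_j}|_{U_{i+1}}=\sum_{m'\in B_i}\alpha_{m,m'}\,\mathrm{coeff}_{m' x_j}|_{U_{i+1}}$, so the functionals $\{\mathrm{coeff}_t|_{U_{i+1}}:t\in T\}$ already span $U_{i+1}^{*}$, and a column basis among them has size $\dim U_{i+1}\le w$ and still separates $U_{i+1}$.

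The rest is bookkeeping: each layer sweep handles matrices of dimensions $w\times|T|=\mathrm{poly}(s)$ and costs $\mathrm{poly}(s,n)$ field operations; there are at most $s$ layers and at most $s$ homogeneous degrees $d$; and over $\mathbb{Q}$ the bit-sizes stay polynomial through the Gaussian eliminations (standard bounds on determinants of the intermediate matrices). Altogether this gives a deterministic $\mathrm{poly}(|\Phi|)$-time decision procedure for whether $\Phi$ computes the zero polynomial, which is the assertion of the theorem.
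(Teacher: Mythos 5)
The paper cites this theorem from Raz and Shpilka \cite{RS04} without reproving it, so there is no in-paper proof to compare against; the later material that adapts the RS machinery (Theorem \ref{existLambda}, Lemma \ref{lem:Raz-ABP}, Lemma \ref{existence-v-part}) serves the separate purpose of making the argument Frege-checkable, not of establishing the PIT algorithm itself. Measured against the original [RS04] argument, your reconstruction of the core algorithm is right: the source-to-sink sweep, the invariant that a polynomially small set $B_i$ of degree-$i$ monomials makes $\pi_{B_i}$ injective on $U_i$, the candidate set $T=B_i\cdot\{x_1,\dots,x_n\}$ built from the unique ``prefix times last letter'' decomposition, and the rank-preservation lemma are exactly the heart of their algorithm, and your linear-functional proof of that lemma is correct. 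The bit-size control over $\mathbb{Q}$ via fraction-free elimination is also a standard finish, and over finite fields there is nothing to control.

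There is, however, one step that is not justified as written. After gate-splitting the formula you correctly observe that the homogeneous parts are computed by a \emph{circuit} rather than a formula, and you then assert that ``by Nisan's construction'' each homogeneous degree-$d$ circuit can be turned into a polynomial-size ABP. Nisan's conversion applies to homogeneous noncommutative \emph{formulas}; it does not apply to circuits, and for a general polynomial-size homogeneous noncommutative circuit no polynomial-size ABP is known to exist (this is essentially the open noncommutative circuit-versus-ABP separation question). Your homogenized circuit does in fact admit polynomial-size ABPs, but only because of its special structure inherited from the formula tree, and that needs an argument you have not given. The clean and standard route, which is the one used in [RS04] and reproduced in Lemma \ref{existence-v-part} of this paper, is to reverse your two operations: first convert the \emph{unhomogenized} formula directly into a non-homogeneous ABP by recursion on the tree (series composition at product gates, parallel composition at sum gates), and only then homogenize the ABP by replacing each node $v$ with degree-tagged copies $(v,0),\dots,(v,s)$, rerouting edges according to whether their label is a constant or a variable, and finally contracting the constant-labeled edges. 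This yields a polynomial-size homogeneous ABP for every $\Phi^{(d)}$ without ever invoking a circuit-to-ABP conversion, after which your layer-by-layer sweep applies verbatim.
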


Now, since we write refutations as non-commutative formulas we can use the theorem above to check in \textit{deterministic} polynomial-time the correctness of non-commutative IPS refutations, obtaining:

\begin{corollary}[restatement of Corollary \ref{cor:intro:Cook-Reckhow}]
The non-commutative IPS is a sound and complete Cook-Reckhow refutation system. That is, it is a sound and complete refutation system for unsatisfiable propositional formulas in which refutations can be checked for correctness in deterministic polynomial-time.
\end{corollary}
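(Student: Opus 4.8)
The plan is to check the three defining properties of a Cook-Reckhow system separately: soundness, completeness, and deterministic polynomial-time checkability. Soundness is the easiest. Suppose $\mathfrak{F}(\overline x,\overline y)$ is a non-commutative IPS refutation of $F_1,\ldots,F_m$ (which, under the translation, include the Boolean axioms $x_i(1-x_i)$ and the commutator axioms $x_ix_j-x_jx_i$), and suppose towards a contradiction that the refuted propositional formula has a satisfying assignment, i.e. the $F_i$'s have a common $0$-$1$ solution $\overline a\in\{0,1\}^n$. Evaluating a non-commutative polynomial at the commuting scalars $\overline a$ (and at $\overline 0$) is a ring homomorphism $\F\langle\overline x,\overline y\rangle\to\F$, so the formal polynomial identities of items~\ref{it:1 in ncIPS def} and~\ref{it:2 in ncIPS def} in Definition~\ref{def:non-commutative-IPS} persist under it. Item~\ref{it:1 in ncIPS def} then gives $\mathfrak{F}(\overline a,\overline 0)=0$, while item~\ref{it:2 in ncIPS def} together with $F_i(\overline a)=0$ for all $i$ gives $\mathfrak{F}(\overline a,\overline 0)=\mathfrak{F}(\overline a,F_1(\overline a),\ldots,F_m(\overline a))=1$, so $0=1$ in $\F$, a contradiction.

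For completeness I would route through Frege rather than prove a direct effective Nullstellensatz: Frege is implicationally complete, so every unsatisfiable propositional formula $\phi$ has a Frege refutation of $\neg\phi$, and Theorem~\ref{thm:ncIPS_sim_Frege} converts it into a non-commutative IPS certificate of the corresponding polynomial translation (over a fixed efficiently representable field, e.g.\ $\mathbb{Q}$, which is all a Cook-Reckhow system requires). One should also record, via the footnote to Definition~\ref{def:non-commutative-IPS}, that having no common $0$-$1$ solution over $\F$ coincides with having none over any $\F$-algebra, so the ``non-commutative'' notion of unsatisfiability matches the ordinary propositional one and no gap opens up here.

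For checkability, given a candidate refutation presented as a non-commutative formula $\mathfrak{F}$, I would verify items~\ref{it:1 in ncIPS def} and~\ref{it:2 in ncIPS def} directly using the deterministic non-commutative PIT algorithm of Raz and Shpilka (Theorem~\ref{thm:RS04-PIT}). For item~\ref{it:1 in ncIPS def}, substitute the constant $0$ for every $y_i$ in $\mathfrak{F}$: this only relabels leaves, yielding a non-commutative formula of size at most $|\mathfrak{F}|$, and we test whether it computes $0$. For item~\ref{it:2 in ncIPS def}, substitute the (small) non-commutative formula $F_i(\overline x)$ for each occurrence of $y_i$; since $\mathfrak{F}$ is a formula (a tree), each $y_i$ occurs at most $|\mathfrak{F}|$ times, so the result has size $O\!\left(|\mathfrak{F}|\cdot\max_i|F_i|\right)$, still polynomial in the input size, and we test whether this formula minus the constant $1$ computes $0$. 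Both tests run in deterministic polynomial time by Theorem~\ref{thm:RS04-PIT}, and by the soundness and completeness arguments above this procedure accepts exactly the valid refutations.

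There is no genuinely hard step here; the corollary is essentially the packaging of Theorem~\ref{thm:RS04-PIT} together with the simulation of Theorem~\ref{thm:ncIPS_sim_Frege}. The two points that need a moment of care are: (i) that substituting the formulas $F_i(\overline x)$ for the $\overline y$ variables keeps the verified object of polynomial size — this works precisely because a \emph{formula} (as opposed to a circuit) contains at most linearly many occurrences of any fixed variable; and (ii) that completeness must be obtained through Frege (or a direct syntactic derivation) rather than by a purely semantic appeal, since the definition demands an explicit certificate $\mathfrak{F}$.
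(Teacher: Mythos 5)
Your proof is correct and takes essentially the same route as the paper: the paper dispatches the corollary in a single remark, pointing to the Raz--Shpilka deterministic PIT for non-commutative formulas (Theorem~\ref{thm:RS04-PIT}) for checkability, with soundness by evaluation at $0$-$1$ points and completeness via the Frege simulation (Theorem~\ref{thm:ncIPS_sim_Frege}) left implicit. You have merely filled in the implicit details --- the homomorphism argument for soundness, the observation that a variable occurs at most $|\mathfrak{F}|$ times in a formula so the substituted object stays polynomial-size, and the explicit routing of completeness through Frege --- and these are all exactly the right details.
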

To prove Theorem \ref{thm:ncIPS_sim_Frege}, we will show in Section \ref{sec:ncIPS_sim_tree_FPC} that the
non-commutative IPS polynomially-simulates tree-like \FPC\ (Definition
\ref{def:F-PC}), which sufficed to complete the proof, due to Theorem
\ref{tree-like_F-PC=Frege}.

\newcommand{\cv}{\ensuremath{\mathbf C}}
\newcommand{\fv}{\ensuremath{\mathbf F}}

\subsection{Non-commutative IPS polynomially simulates tree-like $\mathcal{F\mbox{\rm-}PC}$}\label{sec:ncIPS_sim_tree_FPC}

For convenience, let $C_{i,j}$ denote the commutator axiom $x_i\cdot x_j-x_j\cdot x_i$, for $i,j\in [n], i\ne j$, and let $\cv$ denote the vector of all the $C_{i,j}$
axioms.  
When we write $P\cdot Q-Q\cdot P$ where $P,Q$ are formulas (e.g., $x_i$ and $x_j$, resp.), we mean $((P\cdot Q)+(-1\cdot(Q\cdot P)))$.
\begin{theorem}\label{ncIPS2Frege}
Non-commutative IPS polynomially simulates tree-like $\mathcal{F\mbox{\rm-}PC}$ (Definition \ref{def:F-PC}). Specifically, if $\pi$ is a tree-like \FPC\ proof of a tautology
$T$ then there is a non-commutative IPS refutation of $\trn(\neg T)$ of size polynomial
in $|\pi|$.  
\end{theorem}

\begin{proof}
Let  $F_1, \ldots, F_m$ be arithmetic formulas over the variables  $x_1,\ldots, x_n$. We denote by $\fv$ the vector $(F_1,\ldots,F_m)$. Since an arithmetic formula is a syntactic term in which the children of gates are ordered we can treat a (commutative) arithmetic formula as a  \emph{non-commutative} arithmetic formula by taking the \textit{order }on the children of products gates to be the order of non-commutative multiplication.

Suppose $\mathcal{F\mbox{\rm-}PC}$ has a $\poly(n)$-size tree-like refutation $\pi:=(L_1, \ldots, L_k)$ of the $F_i$'s (i.e., a proof of the polynomial $1$ from $F_1,\ldots, F_m$), where each $L_j$ is an arithmetic formula.  We construct a corresponding non-commutative IPS refutation of the $F_i$'s from this \FPC\ tree-like refutation. The following
lemma suffices for this purpose:\begin{lemma}\label{lem-IPSproperty}
For every $i\in[k]$, there exists a non-commutative formula $\phi_i$ such that
 \begin{enumerate}
\item $\phi_i(\overline x, \overline 0)=0$;
\item $\phi_i(\overline x, \fv,\cv)=L_i$\,;


\item $|\phi_i|\leq \proofbound{A_i}$, where $A_i\subset [k]$ are  the indices of the \FPC\ proof-lines involved in deriving $L_i$.

For example, if $L_i$ is derived by $L_\a$ and  $L_\a$ is derived by $L_\b$ for some $\b<\a<i\in[k]$, then we say that $\a,\b$ are both involved in deriving $L_i$. In
other words, the lines involved in deriving a proof-line $L_i$ are all the proof-lines in the sub-tree of $L_i$ when we consider the underlying graph of the (tree-like) proof as a tree.
\end{enumerate}

\end{lemma}
Note that if the lemma holds, then $\phi_k$ is a non-commutative IPS proof because it has the property that $\phi_k(\overline x, \overline 0)=0$ and $\phi_k(\overline x, \fv,\cv)=L_k=1$. And its size is bounded by $\proofbound{A_k}\leq \proofbound{[k]}\leq O(|\pi|^4).$ 
\medskip
\proof
We construct $\phi_i$ by induction on the length $k$ of the refutation $\pi$. That is, for $i$ from $1$ to $k$, we construct the non-commutative formula $\phi_i(\overline x, \overline y)$  according to $L_i$, as follows:

\Base 
 $L_i$ is an axiom $F_j$ for some $j\in[m]$.

Let $\phi_i:=y_j$. Obviously, $\phi_i(\overline x,0)=0, \phi_i(\overline x,\fv,\cv)=F_j=L_i$ and $|\phi_i|=1 \leq \ |L_i|^4$.

\induction 


\case 1  $L_i$ is derived from the addition rule $L_i=aL_j+bL_{j'}$, for $j,j'<i$. Put $\phi_i:=a\phi_j+b\phi_{j'}$ where $a,b\in \F$. Thus, $\phi_i(\overline x,0)=a\phi_j(\overline x,0)+b\phi_{j'}(\overline x,0) = 0,\,\phi_i(\overline x,\fv,\cv)=aL_{j}+bL_{j'}=L_i$ and $|\phi_{i}|=|\phi_j|+|\phi_{j'}|+3\leq \proofbound{A_j}+\proofbound{A_{j'}}+3\leq \proofbound{A_i}$ (where the right most inequality holds since $\pi$ is a \textit{tree-like} refutation and hence $A_j \cap A_{j'}=\emptyset$).

\medskip

\case 2 $L_i$ is derived from the product rule $L_i=x_r\cdot L_{j}$, for $r\in[n]$
and $j<i$. Put $\phi_i:=(x_r\cdot \phi_j)$. Then $\phi_i(\overline x,0)=x_r\cd\phi_j(\overline x,0) = 0,  \phi_i(\overline x,\fv,\cv)=x_r\cd L_j=L_i$ and $|\phi_{i}|=|\phi_j|+2\leq \proofbound{A_j}+2\leq \proofbound{A_i}$.

\medskip 
\case 3
$L_i$ is derived from $L_j$, for \(j<i\), by a \emph{rewriting} rule
which is not the commutative rule of multiplication ($f\cd g\leftrightarrow g\cd
f$). Let $\phi_i:=\phi_j$. The non-commutative $\phi_i$ trivially satisfies the properties  claimed since all the rewriting  rules (excluding the commutative rule of multiplication)  express the non-commutative polynomial-ring axioms,  and thus cannot change the polynomial computed by a non-commutative formula. And $|\phi_i|=|\phi_j|\leq \proofbound{A_i}$.

\medskip
\case 4
$L_i$ is derived  from $L_j$, for $j<i$, by a single application of the commutative rule of multiplication. Then by Lemma \ref{LemmaLiLj} below, we can construct a non-commutative formula $\commF$ such that $\phi_i:=(\phi_j+\commF)$ satisfies the desired properties (stated in Lemma~\ref{lem-IPSproperty}).

\end{proof}

%
%
%
%
%
%
%
%
%

\begin{lemma} \label{LemmaLiLj}
Let $L_i, L_j$ be non-commutative formulas, such that  $L_i$ can be derived from $L_j$ via the commutative rule of multiplication $f\cd g\leftrightarrow g\cd f$. Then there is a non-commutative formula $\commF(\overline x, \overline y)$ in variables $\{x_\ell, y_{\a,\b},\; \ell\in[n],  \a<\b \in [n] \},$ such that:
\begin{enumerate}
\item $\commF(\overline x, \overline 0)=0$;
\item $\commF(\overline x,\cv)=L_i-L_j$;
\item $\left|\commF\right|\leq \left|L_i\right|^2\left|L_j\right|^2$.
\end{enumerate}
\end{lemma}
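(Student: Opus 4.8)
The plan is to induct on the structure of the non-commutative formula $L_j$ and peel off one product gate at the location where the commutativity rewrite rule was applied. Since $L_i$ is obtained from $L_j$ by a single application of the rule $f\cd g \leftrightarrow g\cd f$ on some subformula, there is a context $D[\,\cdot\,]$ (a non-commutative formula with one hole) and subformulas $f,g$ such that $L_j = D[f\cd g]$ and $L_i = D[g\cd f]$. The key algebraic identity is
\[
g\cd f - f\cd g \;=\; \sum_{\text{monomials}} (\text{coefficients}) \cdot (\text{commutators } x_\a\cd x_\b - x_\b\cd x_\a),
\]
more usefully expressed via a telescoping argument: if $f$ computes $\sum_M a_M M$ and $g$ computes $\sum_N b_N N$ (over monomials $M,N$), then $g\cd f - f\cd g = \sum_{M,N} a_M b_N (NM - MN)$, and each $NM - MN$ can be written as a sum of terms of the form $u\cd(x_\a x_\b - x_\b x_\a)\cd v$ by swapping adjacent variables one at a time (a standard ``bubble-sort'' decomposition). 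The first step is therefore to construct, for the two specific formulas $f$ and $g$ at the rewrite site, a non-commutative formula $G_{f,g}(\overline x,\overline y)$ with $G_{f,g}(\overline x,\overline 0)=0$ and $G_{f,g}(\overline x, C_{\a,\b}) = g\cd f - f\cd g$, of size polynomial in $|f|\cdot|g|$ — this is the arithmetic-formula analogue of the bubble-sort identity and I expect it to cost roughly $|f|^2|g|^2$ or better once one is careful about sharing.

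Next I would lift this local gadget through the context $D[\,\cdot\,]$. Because $D[f\cd g]$ and $D[g\cd f]$ differ only inside the hole, and a non-commutative formula is multilinear in its hole in the sense that $D[a] - D[b]$ can be expressed using the ``difference propagation'' through $+$ and $\times$ gates, one writes $L_i - L_j = D[g\cd f] - D[f\cd g]$ as a formula built from $G_{f,g}$ together with the pieces of $D$ lying on the path from the hole to the root. Concretely, walking from the hole upward, at a $+$ gate the difference passes through unchanged, and at a $\times$ gate $h\cd(\text{hole})$ or $(\text{hole})\cd h$ the difference gets multiplied on the appropriate side by (a copy of) $h$; collecting these gives a formula $\phi_{L_i,L_j}$ with $\phi_{L_i,L_j}(\overline x,\overline 0)=0$ (since $G_{f,g}$ vanishes at $\overline 0$ and everything else is multiplied into it or added around it) and $\phi_{L_i,L_j}(\overline x, C_{\a,\b}) = L_i - L_j$. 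Since each side-formula $h$ and the context path together have total size at most $|L_j|$ (equivalently $|L_i|$, as they differ only by reordering one product), the size bound $|\phi_{L_i,L_j}| \le |L_i|^2|L_j|^2$ should follow, with room to spare, from the bound on $|G_{f,g}|$ and the at-most-linear overhead of threading it through $D$.

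The main obstacle I anticipate is the size accounting in the local gadget $G_{f,g}$: the naive bubble-sort expansion of $NM - MN$ over all pairs of monomials $(M,N)$ is a sum over exponentially many monomials, so one cannot simply write it monomial-by-monomial. The fix is to keep $f$ and $g$ as formulas throughout and build $G_{f,g}$ recursively on the structure of (say) $g$: if $g = g_1 + g_2$ then $g\cd f - f\cd g = (g_1\cd f - f\cd g_1) + (g_2\cd f - f\cd g_2)$; if $g = g_1\cd g_2$ then one telescopes
\[
g_1 g_2 f - f g_1 g_2 = g_1(g_2 f - f g_2) + (g_1 f - f g_1) g_2,
\]
and the base case $g = x_\ell$ reduces to producing a formula for $x_\ell\cd f - f\cd x_\ell$, which by the same recursion on $f$ bottoms out at single commutators $x_\ell x_{\ell'} - x_{\ell'} x_\ell = \pm C_{\ell,\ell'}$ (or $C_{\ell',\ell}$, handled by a sign), i.e., a $y_{\a,\b}$ variable. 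Unwinding this double recursion gives a formula of size $O(|f|^2|g|^2)$ directly in terms of the $y_{\a,\b}$'s, which is exactly the claimed bound; verifying that the recursion terminates cleanly and that the substitution $y_{\a,\b}\mapsto 0$ kills the whole thing (immediate, since every leaf of the gadget is either a $y_{\a,\b}$ or gets multiplied by a subformula that is itself built from $y_{\a,\b}$'s) completes the argument.
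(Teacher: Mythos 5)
Your proposal is correct and matches the paper's own argument. The paper's proof of Lemma~\ref{LemmaLiLj} propagates the difference through the formula context top--down from the root to the rewrite site (the mirror image of your bottom--up ``lifting through $D$''), and Lemma~\ref{LemmaPQQP} in the paper is exactly your local gadget $G_{f,g}$, built by the same structural recursion using the telescoping identity $P_1P_2\cd Q - Q\cd P_1P_2 = P_1\cd(P_2\cd Q - Q\cd P_2) + (P_1\cd Q - Q\cd P_1)\cd P_2$ and the analogous split for sum gates, with a single induction on $\max(|P|,|Q|)$ in place of your nested recursion on $g$ then $f$ (a cosmetic difference that yields the same $|P|^2|Q|^2$ bound).
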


\begin{proof}
We define the non-commutative formula $\commF$ inductively  as follows:
\begin{itemize}
        \item If $L_i=(P\cdot Q)$, and $L_j=(Q\cdot P)$, then  $\commF$ is defined to be the formula constructed in Lemma~\ref{LemmaPQQP} below.
        \item If $L_i=(P\cdot Q)$, $L_j=(P'\cdot Q')$.

         \case 1 If $P=P'$, then let $\commF:=(P\cdot \phi_{Q,Q'})$.

         \case 2 If $Q=Q'$, then let $\commF:=(\phi_{P,P'}\cdot Q)$.

        \item If $L_i=(P+ Q)$, $L_j=(P'+ Q')$.

         \case 1 If $P=P'$, then let $\commF= \phi_{Q,Q'}$.

         \case 2 If $Q=Q'$, then let $\commF=\phi_{P,P'}$.

    \end{itemize}

    By induction, the construction satisfies the desired properties.
\end{proof}

\begin{lemma} \label{LemmaPQQP}
For any pair  $P, Q$ of two non-commutative formulas there exists a non-commutative formula $F$ in variables $\{x_\ell, y_{i,j}, \; \ell\in[n],  i<j\in [n]\}$ such that:
\begin{enumerate}
\item $F(\overline x, \overline 0)=0$;
\item $F(\overline x,\cv)=P\cdot Q-Q\cdot P$;
\item $\left|F\right|=\left|P\right|^2\left|Q\right|^2$.
\end{enumerate}
\end{lemma}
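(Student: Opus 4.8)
The plan is to prove Lemma~\ref{LemmaPQQP} by induction on $|P|+|Q|$, exploiting the fact that the commutator bracket $[A,B]:=A\cd B-B\cd A$ behaves like a biderivation on $\F\langle X\rangle$: for all non-commutative polynomials $A_1,A_2,B$ one has $[A_1+A_2,B]=[A_1,B]+[A_2,B]$, $[A_1\cd A_2,B]=A_1\cd[A_2,B]+[A_1,B]\cd A_2$, and the two symmetric identities in the second argument (e.g.\ $[A,B_1\cd B_2]=[A,B_1]\cd B_2+B_1\cd[A,B_2]$, obtained by adding and subtracting $B_1 A B_2$). Since every non-commutative formula is built from variables and field elements using $+$ and $\times$, iterating these identities rewrites $P\cd Q-Q\cd P=[P,Q]$ as a formula whose only ``atoms'' are elementary brackets $[x_i,x_j]$, and each such atom equals $0$ if $i=j$, equals $C_{i,j}$ if $i<j$, and equals $-C_{j,i}$ if $i>j$. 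Replacing each $C_{i,j}$ by the placeholder variable $y_{i,j}$ then yields the required $F$.

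Concretely, I would define $F=:\phi_{P,Q}$ recursively, fixing the convention that a non-leaf $P$ is decomposed before $Q$. \emph{Base cases:} if $P$ or $Q$ is a field element set $\phi_{P,Q}:=0$ (since $[P,Q]=0$); if $P=x_a$ and $Q=x_b$ set $\phi_{P,Q}:=0$, $y_{a,b}$, or $(-1)\cd y_{b,a}$ according to whether $a=b$, $a<b$, or $a>b$. \emph{Recursion:} if $P=P_1+P_2$ put $\phi_{P,Q}:=\phi_{P_1,Q}+\phi_{P_2,Q}$; if $P=P_1\cd P_2$ put $\phi_{P,Q}:=(P_1\cd\phi_{P_2,Q})+(\phi_{P_1,Q}\cd P_2)$; if $P$ is a leaf and $Q=Q_1+Q_2$ put $\phi_{P,Q}:=\phi_{P,Q_1}+\phi_{P,Q_2}$; if $P$ is a leaf and $Q=Q_1\cd Q_2$ put $\phi_{P,Q}:=(\phi_{P,Q_1}\cd Q_2)+(Q_1\cd\phi_{P,Q_2})$. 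Property~(1), $\phi_{P,Q}(\overline x,\overline 0)=0$, follows by a trivial induction: every atom is a (signed) $y$-variable hence vanishes, and in the product cases the retained subformulas $P_1,P_2,Q_1,Q_2$ contain no $y$-variables while their partners are $\phi$-subformulas that vanish by the induction hypothesis. Property~(2), $\phi_{P,Q}(\overline x,\overline C)=P\cd Q-Q\cd P$, follows by induction using exactly the four biderivation identities above, since under $y_{i,j}\mapsto C_{i,j}$ each atom becomes $[x_i,x_j]$.

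For the size bound I would carry $|\phi_{P,Q}|\le|P|^2|Q|^2$ through the induction (up to the harmless additive constant from the $(-1)\cd y_{b,a}$ base case, which is absorbed below). The sum cases add only one gate: $|\phi_{P_1,Q}|+|\phi_{P_2,Q}|+1\le(|P_1|^2+|P_2|^2)|Q|^2+1\le|P|^2|Q|^2$ since $|P|=|P_1|+|P_2|+1$ and $|Q|\ge1$. The product case is the one where the subformulas $P_1,P_2$ get duplicated and where the quadratic budget is genuinely needed: $|\phi_{P,Q}|=|P_1|+|P_2|+|\phi_{P_1,Q}|+|\phi_{P_2,Q}|+3\le|P_1|+|P_2|+(|P_1|^2+|P_2|^2)|Q|^2+3$, and expanding $|P|^2|Q|^2=(|P_1|+|P_2|+1)^2|Q|^2$ the cross term $2|P_1||P_2||Q|^2$ together with $2|P_1||Q|^2+2|P_2||Q|^2$ dominates $|P_1|+|P_2|+3$ because $|P_1|,|P_2|\ge1$. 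I expect this size bookkeeping to be the only delicate point: the product identity $[P_1P_2,Q]=P_1[P_2,Q]+[P_1,Q]P_2$ copies $P_1$ and $P_2$, so one must verify that the $|P|^2|Q|^2$ allotment — and not something smaller — is precisely what lets the recursion close; the two substitution properties are then a routine structural induction from the biderivation identities.
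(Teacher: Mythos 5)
Your proof is correct and follows essentially the same strategy as the paper: decompose the commutator $[P,Q]$ recursively via the biderivation identities, reduce to atomic brackets $[x_i,x_j]$ replaced by $\pm y$-variables, and carry the $|P|^2|Q|^2$ size budget through the structural induction, with the product case being the one that genuinely needs the quadratic allotment. The only difference is bookkeeping --- the paper inducts on $\max(|P|,|Q|)$ and always splits the larger formula, whereas you induct on $|P|+|Q|$ and split $P$ whenever it is a non-leaf; your measure is in fact the more robust choice, since the paper's $\max$ measure does not strictly decrease when $|P|=|Q|$.
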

\begin{proof}
Let $s(P,Q)$ denote the smallest size of $F$ satisfying the above properties. We will show that $s(P,Q)\le \left|P\right|^2\cdot \left|Q\right|^2$ by induction on $\max(\left|P\right|,\left|Q\right|)$.

\Base  $\left|P\right|=\left|Q\right|=1$.

In this case both $P$ and $Q$ are constants or variables, thus $s(P,Q)=1\le \left|P\right|^2\left|Q\right|^2$.
\medskip

In the following induction step, we consider the case where $\left|P\right| \ge \left|Q\right|$ (which is symmetric for the case $\left|P\right| < \left|Q\right|$).

\Induction Assume that   $\left|P\right| \ge \left|Q\right|$\,. 

\case 1 The root of $P $ is addition.

Let $P=(P_1+P_2)$. We have (after rearranging):
$$
P\cdot Q-Q\cdot P=((P_1\cdot Q-Q\cdot P_1)+(P_2\cdot Q-Q\cdot P_2))
$$
By induction hypothesis, we have $s(P,Q)\le s(P_1,Q)+1+s(P_2,Q)\le \left|P_1\right|^2\left|Q\right|^2+1+ \left|P_2\right|^2\left|Q\right|^2
\le (\left|P_1\right|+\left|P_2\right|+1)^2\left|Q\right|^2
= \left|P\right|^2\cdot \left|Q\right|^2$.

\case 2 The root of $P$ is a product gate.

Let $P=(P_1\cdot P_2)$. By rearranging:
$$
P\cdot Q-Q\cdot P=((P_1\cdot( P_2\cdot Q-Q\cdot P_2))+((P_1\cdot Q-Q\cdot P_1)\cdot P_2))
$$
By induction hypothesis, we have $s(P,Q)=\left|P_1\right|+1+s(P_2,Q)+1+s(P_1,Q)+1+\left|P_2\right|
\le \left|P_1\right|+1+\left|P_2\right|^2\left|Q\right|^2+1+\left|P_1\right|^2\left|Q\right|^2+1+\left|P_2\right|
\le (\left|P_1\right|+\left|P_2\right|+1)^2\left|Q\right|^2
= \left|P\right|^2\cdot \left|Q\right|^2$.
\end{proof}

\section{Frege quasi-polynomially simulates non-commutative IPS}
\label{sec:5-the whole converse simulation}


In this long section we prove  Theorem \ref{thm:Frege_sim_ncIPS} stating that the Frege system  quasi-polynomially simulates  the non-commutative IPS (over $GF(2)$). Together with Theorem \ref{thm:ncIPS_sim_Frege}, this  gives a new characterization (up to a quasi-polynomial increase in size) of propositional Frege proofs as non-commutative arithmetic formulas.

%

We use the notation in Section \ref{sec:intro:ncIPS} as follows: for a clause $\kappa_i$ in a CNF $\phi=\kappa_1\wedge \ldots \wedge \kappa_m$, we denote by $Q_i^{\phi}$ the non-commutative formula  translation
$\trn'(\kappa_i)$ of the clause $\kappa_i$ (Definition \ref{def:trn_2}). Thus,
$\neg x$ translates to $x$, $x$ translates to $1-x $ and $f_1\cdots
f_r$ translates to $\prod_i \trn'(f_i)$ (considered as a tree
of product gates with $\trn'(f_i)$ as leaves), and where the formulas are over $GF(2)$ (meaning that $1-x$ is in fact $1+x$). Recall that  this way, for every 0-1 assignment (when we
identify \textsf{true} with 1 and \textsf{false} with 0), $Q_i^{\phi}=0$
iff $\kappa_i$ is \textsf{true}.   

\begin{theorem}[Second main theorem;
Restatement of Theorem \ref{thm:intro:Frege_sim_ncIPS}]
\label{thm:Frege_sim_ncIPS}
For a 3CNF $\phi=\kappa_1\wedge \ldots \wedge \kappa_m$ where $Q_1^{\phi},\ldots, Q_m^{\phi}$  are the corresponding polynomial equations for the clauses,  if there is a \NCFIPS\ refutation of size $s$ of    $Q_1^{\phi},\ldots, Q_m^{\phi}$ over $GF(2)$,  then there is a Frege proof of size $s^{O(\log s)}$ of $\neg\phi$. 
\end{theorem}

As mentioned in the introduction, it will be evident that our proof in fact establishes
 a slightly tighter simulation of the non-commutative IPS by Frege. Specifically, if the degree of the non-commutative IPS refutation is $r$ and its formula depth is $d$, then there is a Frege proof of $\neg\phi$ with size ${\rm poly}\!\left({d+r+1 \choose r} \cd s\right)$. This will follow from our efficient simulation within Frege
of Raz' \cite{Raz13-tensor} homogenization construction (Lemma \ref{homogenous-proof}).
Nevertheless, for simplicity we shall always assume that the depth $d$ of the non-commutative
IPS refutation formula is logarithmic in the size
$s$ and that the degree $r$ of the refutation is at most $s+1$,  and thus will not take care to explicitly establish the dependence of the 
simulation on the parameters $d$ and $r$.   
\medskip

The rest of the paper is dedicated to proving Theorem \ref{thm:Frege_sim_ncIPS}.

\subsection{Balancing non-commutative formulas}

First we show that a non-commutative formula of size $s$ can be balanced to  an equivalent formula of depth $O(\log s)$, and thus we can assume that the non-commutative IPS certificate is already given as
a balanced formula (this is needed for what follows).  Both the statement of the balancing construction
and its proof are similar  to Proposition 4.1 in Hrube\v s and Wigderson \cite{HW14} (which in turn is similar to the case of   commutative formulas with division gates in Brent \cite{Bre74}).  (Note that a formula of a logarithmic depth (in the number
of variables) must have a polynomial-size (in the number of variables).)

\begin{lemma}\label{lem:balance-non-commutative-formula}
Assume that a non-commutative polynomial $p$ can be
computed by a formula of size $s$. Then $p$ can be computed by a formula of depth
$O(\log s)$ (and hence of polynomial-size when $s$ is polynomial in the number of variables).
\end{lemma}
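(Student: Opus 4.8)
The plan is to adapt the classical Brent-style balancing argument for formulas to the non-commutative setting, exactly as in Hrube\v{s}--Wigderson \cite{HW14}, but tracking that no division gates are introduced. The key device is the notion of a \emph{partial derivative} of a formula with respect to a subformula: given a formula $F$ and a designated subformula $G$ occurring in it, we can write $F = A\cdot G\cdot B + C$ where $A, B, C$ are non-commutative polynomials not involving the occurrence of $G$ (in the non-commutative case the ``linear in $G$'' decomposition is literally of this bilinear shape, since $G$ sits in a fixed position in each monomial passing through it). The pair $(A,B)$ together with $C$ plays the role of the derivative; the point is that $A$, $B$, $C$ each have formula size at most $|F|$, and more importantly, if $G$ is chosen to be a subformula whose size is between $s/3$ and $2s/3$ (such a ``central'' subformula always exists in a fan-in-two tree), then $A$, $B$, $C$ are computed by formulas strictly smaller than $F$ by a constant fraction.

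First I would make precise the recursive construction. Pick a central subformula $G$ of $F$, so $|G| \in [s/3, 2s/3]$ and the ``context'' $F[\cdot/G]$ also has size at most $2s/3 + O(1)$. Recursively balance (i) the formula $G$ itself, and (ii) the three formulas $A$, $B$, $C$ extracted from the context $F = A\cdot G\cdot B + C$ — each of these has size at most $\frac{2}{3}s$ (up to additive constants), so the recursion depth is $O(\log s)$. Then reassemble: $F = A\cdot G\cdot B + C$ is a formula of depth $D(A) + D(B) + D(G) + C_0$ for a constant $C_0$, but naively this recursion on depth does not obviously close. The standard fix, which I would follow, is a simultaneous induction on a more refined quantity: one shows by induction on $|F|$ that $F$ has a balanced formula of depth $\le c\log|F|$, by using the central-subformula decomposition so that $G$, $A$, $B$, $C$ all have size $\le \frac23|F|$, and arranging the final combination ($A\cdot(G\cdot B)+C$) to add only $O(1)$ to the maximum of the depths of the pieces rather than their sum. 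Concretely one proves the stronger statement that both $F$ and every ``linear context'' $A\,\square\,B + C$ derived from $F$ admit depth-$O(\log s)$ formulas, and the two inductions feed each other. This is precisely the content of Proposition 4.1 of \cite{HW14}, specialized to the division-free case; the only thing to check is that since we never needed division to isolate $A, B, C$ (unlike the commutative-with-division case, the decomposition $F = AGB + C$ is purely additive/multiplicative), the resulting balanced formula is division-free whenever $F$ is.

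The main obstacle — really the only subtle point — is verifying that the bilinear decomposition $F = A\cdot G\cdot B + C$ is valid and size-efficient for \emph{non-commutative} formulas, and that the recursion on depth actually closes with a clean $O(\log s)$ bound rather than collapsing into an $O(\log^2 s)$ or worse. For the first part, one argues by structural induction on the path from the root of $F$ down to the occurrence of $G$: at a $+$ gate only one child contains $G$ and $C$ absorbs the other child; at a $\times$ gate $G$ lies in the left or the right child, contributing to $A$ (together with the other, $G$-free child on the appropriate side) or to $B$; the $G$-free side is just a subformula of $F$ and the $A$, $B$ parts are built from $O(1)$ pieces per step along a path of length $O(\log s)$ after balancing, so their sizes stay linear in $s$. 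For the depth bookkeeping one uses the trick of recursing not on depth directly but maintaining the invariant ``size $\le s \Rightarrow$ depth $\le c\log s$'' and checking that one reassembly step with a constant-fan-in combination of four pieces each of size $\le \frac23 s$ yields depth $\le c\log(\frac23 s) + O(1) \le c\log s$ for suitable $c$. Since all of this is carried out essentially verbatim in \cite{HW14} (and goes back to \cite{Bre74}), I would state the lemma, indicate the central-subformula decomposition and the bilinear ``context'' form for the non-commutative case, observe that no division gates are used, and refer to \cite{HW14,Bre74} for the routine recursion estimates. The consequence that a logarithmic-depth fan-in-two formula has polynomial ($2^{O(\log s)} = s^{O(1)}$) size is immediate and recorded in the statement.
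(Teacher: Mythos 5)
Your proposal follows essentially the same route as the paper: both carry out the Hrube\v{s}--Wigderson / Brent balancing via a simultaneous induction on (i) the formula itself and (ii) the bilinear ``linear context'' $F = A\cdot z\cdot B + C$ where $A,B,C$ are $z$-free, splitting at a central subformula of size between $s/3$ and $2s/3$, and then noting that the decomposition is purely additive/multiplicative so no division gates are introduced. The paper presents the same two-part induction (with the context expressed by substituting a fresh variable $z$ for the central gate) and likewise defers the routine bookkeeping to \cite{HW14}, so your argument is a faithful match.
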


\newcommand{\bln}[1]{\ensuremath{{\widehat{#1}}}}
\newcommand{\Fh}{\ensuremath{\bln F}}
\newcommand{\tobe}{\leftarrow}

\begin{proof}
The proof is almost identical to Hrube\v s and Wigderson's proof of Proposition 4.1 in \cite{HW14}, which deals with rational functions and allows formulas with division gates.
Thus, we only outline the argument in \cite{HW14} and argue that if the given formula does not have division gates, then the new formula obtained by the balancing construction will not contain any division gate as well.

\renewcommand{\g}{\textit g}
\newcommand{\Fv}{F_\g}
\newcommand{\Fvout}{F[z/\g]}
\newcommand{\FvBack}{F[z\tobe G]}
\newcommand{\Fhv}{\Fh_\g}
\newcommand{\Fhvout}{\bln(F[z/\g])}

\begin{notation*}
Let $F$ be a non-commutative formula and let $\g$ be a gate in $F$. We denote by $F_{\g}$ the subformula of $F$ with the root being $\g$ and by $\Fvout$ the formula obtained by replacing $F_\g$ in $F$ by the variable $z$. We denote by $\widehat F, \widehat F_g$
the non-commutative \emph{polynomials} in $\F\langle \overline X\rangle$
computed by $F$ and $F_g$, respectively.  
\end{notation*}

We simultaneously prove the following two statements by induction on $s$, concluding the lemma:

 \smallskip

\ind\textbf{Inductive statement}:
\textit
{If $F$ is a non-commutative formula of size $s$, then for sufficiently large $s$ and suitable constants $c_1,c_2>0$, the following hold:
  \begin{itemize}
    \item[(i)] $\widehat F$ has a non-commutative formula of depth at most  $c_1 \log s+1$\,;
    \item[(ii)] if $z$ is a variable occurring at most once in $F$, then:
    $$\widehat F = A\cd z \cd B+ C,$$
where $A, B,C$ are non-commutative polynomials  that do not
contain $z$, and each can be computed by a non-commutative formula
of depth at most
$ c_2 \log s$.
\end{itemize}
}

\Base  $s=1$. In this case there is one gate $\g$ connecting two variables or constants. Thus, (i) in
the inductive statement can be obtained immediately as it is already computed by a formula of depth $1=\log s + 1$. As for (ii), note that in the base case, $F$ is a formula with only one gate $\g$.  Assuming that $z$ is a variable occurring only once in $F$,  it is easy  to construct non-commutative formulas $A,B,C$ so that $\widehat F =A\cd z \cd B+ C$ for which the conditions in (ii) hold as follows:
\smallskip

\case 1 if $\g$ is a plus gate connecting the variable $z$ with a variable or constant $x\neq z$, then we can write $F$ as $1\cd z\cd 1+x$.

\case 2 if $\g$ is a product gate  connecting $z$ with $x$ (for $z\neq x$, and in this order),  then we can
write $F$ as $1\cd z\cd x+0$.

\case 3 if $\g$ is a product gate connecting $x$ with $z$ (for $z\neq x$, and in this order), then we can write
$F$ as $x\cd z\cd 1 +0$.

\induction
 (i) is established  (slightly informally) as follows. Find a gate $\g$
in $F$ such that both $\Fv$ and $\Fvout$ are small (of size at most $2s/3$, and where $z$ is a new variable that does not occur in $F$). Then, by applying induction hypothesis on $\Fvout$, there exist formulas $A,B,C$ of small depth such that $\widehat {F[z/g]}=A\cd z \cd B + C$. Thus, $\widehat F:= A\cd \widehat {F_g} \cd B + C.$

To prove (ii), find an appropriate gate $\g$ on the path between $z$ and the output of
$F$  (an \textit{appropriate}  $\g$ is a gate $g$ such that $F[z_1/\g]$
 and $\Fv$
are both small (of size at most $2s/3$), where $z_1$ is a new variable not occurring
in $F$). Use the inductive assumptions to write:
$$
\widehat  F[z_1/\g] = A_1\cd z_1\cd B_1 + C_1
\hbox{~~and~~}  \widehat F_g  = A_2\cd z \cd B_2 + C_2
$$
and compose these expressions to  get 
$$\Fh =A_1\cd (A_2\cd z \cd B_2+C_2) \cd B_1+C_1=A'\cd z \cd B'+ C',$$
where $A'=A_1 \cd A_2$, $B'=B_2\cd B_1,C'=A_1\cd C_2 \cd B_1+C_1$. 

 It is clear that the respective depth of $A', B'$ and $C'$ are all at most $c_2\log (2s/3)+2\le c_2\log s $ when $s$  is sufficiently large.  

To finish the proof of (ii), it suffices to show  that $A',B',C'$  do not contain the variable $z$. It is enough to prove that $A_1,B_1,C_1, A_2,B_2,C_2$ do not contain $z$. Notice that $\Fv$ contains  $z$ and  $z$ is a variable occurring at most once in $F$. Therefore $\widehat {F[z_1/\g]}$ does not contain the variable $z$, which means that
both $A_1,B_1,C_1$  do not contain $z$. Moreover, by induction hypothesis, we know that $A_2,B_2,C_2$  do not contain $z$. Therefore, we  conclude that $A',B',C'$  do not contain $z$.
\end{proof}

As a consequence of Lemma \ref{lem:balance-non-commutative-formula}, in what follows, without loss of generality we will assume that $F$ is given already in a balanced form, namely has depth $O(\log s)$ and size $s$. 

\subsection{The reflection principle}\label{sec:refl_prncple}

Here we show that the existence of a \NCFIPS\ refutation  of the CNF $\phi$ with size $s$ and depth $O(\log s)$  implies the existence of a Frege proof of $\neg\phi$ with size $s^{O(\log s)}$ (we use the same notation as in the beginning of Section \ref{sec:5-the whole converse simulation}).
 This is done by proving a \textit{reflection principle} for the non-commutative IPS  inside Frege. As mentioned in the introduction, informally, a reflection principle for a given proof system $P$ is a  statement asserting that if a formula is \emph{provable }in $P$ then the formula is also \textit{true}. Thus, suppose we have a short Frege
proof of the following reflection principle for $P$: 
\begin{center}
        ``($[\pi]$ is a $P$-proof of 
                                $[T])\,\longrightarrow\,T$'',
\end{center}
where $[T]$ and $[\pi]$  are some reasonable encodings of the tautology
$T$ and its $P$-proof $\pi$, respectively. Then, it is possible to obtain a Frege proof of $T$, assuming we already have a $P$-proof $\pi$ of $T$: we simply plug-in the encodings
$[\pi]$ and $[T]$ in the reflection principle, which makes the premise of the implication
true.
\medskip

Let $F$ be a non-commutative formula over $GF(2)$ and let $\overline{Q}^\phi(\overline{x})$ denote
the vector $(Q_1^{\phi},\ldots, Q_m^{\phi})$ (see Theorem \ref{thm:Frege_sim_ncIPS}).
Since $F$ is a \NCFIPS\ refutation of $\phi$ we know that
\begin{equation}\label{properties}
 F\left(\xZero\right)=0,~~~~~~~~~~ F\left(\xQx\right)=1\,.
\end{equation}
We can treat $F$ as a Boolean formula in the standard way:

\begin{definition}[$F_{\bool}$] \label{def:bool}
Let $F(\overline x)$ be a non-commutative formula over $GF(2)$ in the (algebraic) variables $\overline x$. We denote by $\Fb (\overline p)$ the Boolean formula in the (propositional) variables $\overline
p$ obtained by turning every plus gate and multiplication gate into $\oplus$   (i.e., XOR) and $\land$ (i.e., AND) gates, respectively, and, for the sake of clarity, turning the input algebraic variables $\overline x$ into the propositional variables $\overline p$. We sometimes write $F$ and $\Fb$ without explicitly mentioning the $\overline x$ and $\overline p$ variables. 
\end{definition}
Note that for any 0-1 assignment, $F$ and $\Fb$ take on the same  value (when we identify \textsf{true}
with 1 and \textsf{false} with 0).
When we consider $F=F(\overline x,\overline y)$ (with both the $\overline x$ and $\overline y$ variables),
$\Fb$ denotes the corresponding Boolean version of $F$  where the variables $\overline x$ are replaced by $\overline p$ and the algebraic variables $\overline y$ become
the propositional variables  $\overline y$.  Therefore, by \eqref{properties},
\begin{equation}\label{toProve}
   \Fz ~~~~~~\hbox{ and} ~~~~~~~~\Fq
\end{equation}
are both \textit{tautologies} (though we still need to show  that their Frege proofs are \textit{short}).
To conclude Theorem \ref{thm:Frege_sim_ncIPS}, we first prove $\neg\phi$ with a polynomial-size Frege
proof, assuming we already proved \eqref{toProve} (this
is done in Lemma \ref{soundness} below,  which is not very hard to establish). Second, we show that there exists an $s^{O(\log s)}$ Frege proof of  $\eqref{toProve}$ (which is done in Theorem \ref{thm:Fbool-in-quasipolynomial-Frege} in the next section, and
requires much more work).


\begin{lemma}\label{soundness}
There is a polynomial-size Frege proof of $\neg\phi(\overline p)$, assuming $\Fz$ and $ \Fq $ (polynomial in the size of $\phi(\overline p)$ and $\Fq$). 
\end{lemma}

\begin{proof}
By simple logical reasoning inside
Frege. Informally, we show that
 assuming that $\phi(\overline p)$ holds, for every $i\in[m]$, $\Qi\equiv 0$, and so $\Fz$ and $\Fq$ cannot both hold (for $\equiv$ denoting (semantic) logical equivalence).  

In order to go from $\phi(\overline
p)$ to  $\Qi\equiv 0$ we need to
deal with encoding of clauses
inside Frege. Thus, let  the propositional\ formula $\truthTable$ express  the statement that the assignment $\overline{p}$ satisfies the formula $\phi$, as defined below. In the following we  denote by  $\overline p$ \textit{actual} propositional variables occurring in a propositional formula (and not the  encoding of variables; see below).
\para{Encoding of 3CNFs and their truth predicate.} We shall follow Section 4.3 in \cite{GP14}.
A positive natural number $i$ is encoded with $\lceil\log_2 n\rceil$ bits (such that
the numbers $1,\ldots,2^t$ are put into bijective correspondence with $\{0,1\}^t$). We denote this encoding of $i$ by $[i]$. A clause $\kappa$  with three
literals is encoded as the bit string  $\overline q_1 s_1 \overline q_2 s_2 \overline q_3 s_3, $ where each $s_1,s_2,s_3$ is the sign bit of the corresponding literal in $\kappa$ (1 for positive and 0 for negative), and  each $\overline q_1,\overline q_2,\overline q_3$ is a length-$\lceil \log_2 n \rceil$ bit string encoding the corresponding index of the variable (assuming the number of variables is $n$). 
For a bit string $\overline q$ with the $i$th bit  $y_i$ we write $\overline q=[t]$
as an abbreviation of $\bigwedge _{i=1}^{\lceil \log_2 n\rceil} \left( y_i\sequiv [t]_i\right)$.  Finally, we define (where $m$ is the number of clauses in the 3CNF)
\begin{gather*}
        \truthTablek:=\bigvee_{j\in[3]}\bigvee_{i\in [n]} \left({[\overline q_j]=i}\wedge         (p_i\sequiv s_j)\right)\,,~~~\hbox{ and}
        \\
        \truthTable:=\bigwedge_{j\in[m]}\truthTablekj\,.
\end{gather*}
\begin{note}
It is important to note that, given a \textit{fixed} CNF $\phi$, the propositional formulas $\truthTablek$ and $\truthTable$ are formulas \textit{in the propositional variables $\overline p$ only.}   
\end{note}

\medskip

Let us now continue the proof of  Lemma \ref{soundness}. It is easy to show (see \cite{GP14} Lemma 4.9 for a proof) that after simplifying constants (e.g., $(A\land 1) \sequiv A$) the formula
$\truthTablek$ becomes syntactically identical to $\kappa(\overline p)$  and thus there
is a polynomial-size
Frege proof of 
\begin{equation}\label{phi2truth}
  \phi(\overline p)\rightarrow \truthTable.
\end{equation}

We shall now proceed within Frege. First, consider the propositional formulas 
\begin{equation}\label{eq:stam}
\truthTableki \rightarrow  \neg\Qi, \ ~~~~~\hbox{for all } i\in [m]\,.
\end{equation}

By definition (recall that $Q_i^{\phi}=0$ iff $\kappa_i$ is \textsf{true} for any 0-1 assignment, when 1 is identified with \textsf{true}) all the formulas in \eqref{eq:stam} are tautologies. Note that all the premises and all the consequences in \eqref{eq:stam} are of constant size, and thus \eqref{eq:stam} can be proved with a Frege proof of constant-size for each $i\in[m]$ (by completeness).
Further, using the fact that $\truthTable=\bigwedge_{i\in[m]}\truthTableki$, we can easily prove in Frege with a polynomial-size proof  that for each $i\in[m]$,
\begin{equation}\label{truth2Q}
      \truthTable \rightarrow \neg \Qi.
    \end{equation}
Assume by a way of contradiction
that $\phi(\overline p)$ holds. By modus ponens using \eqref{phi2truth} and \eqref{truth2Q}, we  have 
\begin{equation}\label{Qip}
  \bigwedge_{i\in[m]} \neg \Qi.
\end{equation}
We now argue (inside Frege) that, assuming also
$\Fz$, \eqref{Qip} implies $\neg\Fq$.
 By \eqref{Qip}, for every $i\in[m]$, $\Qi$ is logically equivalent to 0 (which is identified with \textsf{false}), and hence $\neg\Fq \equiv \Fz$. Thus, assuming $\Fz$ we have also $\neg\Fq$. But this contradicts
the assumption that $\Fq$, and
hence we reach a contradiction
with  the assumption
that $\phi(\overline p)$ holds.
\end{proof}
\bigskip

It remains to show a quasi-polynomial-size proof of \eqref{toProve}.
We abbreviate $\Fz$ and $\neg\left(1\oplus\Fq\right)$ by
\begin{equation}\label{finalStatement}
  \Fb'(\overline p),~~\Fb''(\overline p)\hbox{,~~ respectively}.
\end{equation}
Note that the substitutions of the constants $0$ or the constant depth formulas $\Qbar$  in $F$  cannot increase the depth of $F$ too much (i.e., can add at most a constant to the size of $F$). In other words, the depths of the formulas in \eqref{finalStatement}  are still $O(\log s)$.

\begin{proof}[Proof of Theorem \ref{thm:Frege_sim_ncIPS} (Second
main theorem)]
Using Theorem \ref{thm:Fbool-in-quasipolynomial-Frege} that we  prove below, we get that  \eqref{finalStatement} can be proved in quasi-polynomial-size (in $s$ the size
of the IPS refutation of the given CNF $\phi$).
And  together with Lemma \ref{soundness} above, this shows that $\neg \phi$ can be proved in quasi-polynomial-size in $s$, concluding the proof.
\end{proof}

\subsection{Non-commutative formula identities have quasi-polynomial-size proofs }

Recall that a (commutative or non-commutative) multivariate polynomial \(f\) is \emph{homogeneous} if every monomial in $f$ has the same total degree. For each $0\le j\le d$, denote by $f^{(j)}$ the homogenous part of degree $j$ of $f$, that is, the sum of all monomials (together with their coefficient from the field) in $f$ of total degree $j$. We say that a \textit{formula} is \emph{homogeneous} if each of its gates computes a \emph{homogeneous} polynomial (see Definition \ref{def:nonc_formula}
for the definition of a polynomial computed by a gate in a formula).
We shall use the following technical definitions:
\begin{definition}[Syntactic-degree]\label{def:syn_deg}
Define the \emph{syntactic degree} of  a non-commutative formula
$F$, $\deg (F),$ as follows:
(i) If $F$ is a field element or a variable, then $\deg (F)$ = $0$ and $\deg (F) = 1$, respectively;
(ii) $\deg(F+G) = \max(\deg (F), \deg(G))$, and $\deg(F\times G) = \deg (F) + \deg (G)$, where $+,\times$ denote  the plus and product gates respectively.
\end{definition}

\begin{definition}[Syntactic homogenous non-commutative formula] 
We say that a non-commutative formula is \emph{syntactic homogenous} if for every plus gate $F+G$ with two children $F$ and $G$, $\deg (F)=\deg(G)$.
\end{definition}

To complete the  proof of Theorem \ref{thm:Frege_sim_ncIPS} it 
remains to prove the following theorem:

\begin{theorem}
\label{thm:Fbool-in-quasipolynomial-Frege}
 If a non-commutative formula $F(\overline x)$  of size $s$ and depth $O(\log s)$ computes the identically zero polynomial over $GF(2)$, then the corresponding Boolean formula $\neg\Fb(\overline p)$ admits a Frege proof of  size  $s^{O(\log s)}$.
\end{theorem}

The rest of the paper is dedicated to proving this theorem.

%
%
%
%


\subsection{Remaining proof overview}
\label{sec:remain_proof_overview} 

For the convenience of the reader we highlight here in an informal manner the main steps in the proof of Theorem \ref{thm:Fbool-in-quasipolynomial-Frege} stating that the Boolean versions of balanced non-commutative formulas $F$ computing the zero polynomial over $GF(2)$
have Frege refutations (i.e., proofs of negation) of quasi-polynomial size.  
\begin{enumerate}

\item 
We prove \textit{inside Frege} that $\Fb$ can be partitioned into its Booleanized syntactic homogenous components $\Fb(i)$, each of quasi-polynomial size in $|\Fb|$, using Raz' \cite{Raz13-tensor} construction.
               
So it remains to show that each $\neg\Fb^{(i)}$ has a polynomial-size (in the size of $\Fb^{(i)}$, which is quasi-polynomial in the size of $F$) Frege proof.


\item
If $\Fb^{(i)}$ does not contain variables it is easy to refute $\Fb^{(i)}$. Note that
$\Fb^{(i)}$ does not contain variables iff $F^{(i)}$ does not contain variables. So we can assume that $F^{(i)}$
is non-constant. In this case we show that Frege can easily prove that $F^{(i)}$
is equivalent to some $\Fb'^{(i)}$, where $F'^{(i)}$ is a \emph{constant-free} (namely,
it does not contain constants) arithmetic non-commutative syntactic homogenous formula computing the zero polynomial.

\item 
Having a constant-free arithmetic non-commutative syntactic homogenous formula computing the zero polynomial $F'^{(i)}$ we use similar ideas as Raz and Shpilka \cite{RS04}
to construct a polynomial-size Frege refutation of (the Boolean version of) $F'^{(i)}$,
as follows:
\begin{enumerate}
\item 
\label{it:F_to_ABP}
Since $F'^{(i)}$ is constant-free and syntactic homogenous, using the standard
transformation \cite{RS04,Nis91} of a non-commutative formula to an algebraic branching
program (ABP;  Definition \ref{def:ABP}) results in a \textit{layered} (i.e., standard) ABP $A$ (this is different from  \cite{RS04} who had to deal with non-layered ABPs first). Assuming the syntactic degree of $F$ is $d$, the final layer
of $A$, consisting of the sink, is also $d$. 

\item 
\label{it:ABP_wit}
Using the ABP $A$, we identify a collection of witnesses that witness the fact that
$A$ computes the zero polynomial. Informally, these witnesses are  a collection of 0-1 matrices. Each 0-1 matrix denoted $\Lambda_i$ has a small number of rows (proportional
to the size of the ABP). Each (possibly zero) row $\mathbf v$ of these matrices corresponds to a  linear combination $\mathbf
v\cd \overline A_{d-i}$ (where $\cd$ is the inner product) of the polynomials computed  by the ABPs whose
sources are  in the $i$th layer of $A$ and whose sinks are all the (single) original sink of $A$ (and thus each of these ABPs computes a degree $d-i$ homogenous polynomial). The requirement is that for all such rows, $\mathbf v\cd \overline A_{d-i} = 0$,
and so overall $\LL_i \overline A_{d-i}=0$.
Note that each non-commutative polynomial in $\overline A_{d-i}$ is homogenous of degree
$d-i$.

Following a similar argument to \cite{RS04}, we show that one can find matrices $\Lambda_i$
such that, in addition to the above requirement, the following holds: 
$$
\Lambda_i \overline A_{d-i} = \TT_{i+1}\Lambda_{i+1}\overline A_{d-i-1},
$$
where $\TT_1\.\TT_d$ are matrices with homogenous linear forms in every entry, and
such that the product of the matrix $\TT_{i+1}\LL_{i+1}$ with $\overline A_{d-i-1}$
is construed  in a \textit{syntactic} way; that is, $\TT_{i+1}\LL_{i+1}$ is interpreted
as an adjacency matrices of a layer in an ABP where the $(l,k)$ entry of the matrix is the linear form that labels the edge going from the
$l$th node in layer $i$ to the $k$th node in layer $(i+1)$---and so $\TT_{i+1}\Lambda_{i+1}\overline A_{d-i-1}$ is a new ABP with $d-i$ layers.

\item
Note that it is unclear how to (usefully) represent an ABP directly in a Frege system, because apparently ABP is a stronger model than formulas (and each Frege proof-line is written as a formula). Thus, we cannot directly work with ABPs within Frege proofs, and consequently
we cannot use the witnesses from part (\ref{it:ABP_wit}). We solve this problem by
replacing every ABP in the witnesses by a corresponding non-commutative formula: every
ABP in the witnesses from (\ref{it:ABP_wit}) is a part of the ABP that was constructed
from  $F'^{(i)}$  in (\ref{it:F_to_ABP}). We notice that every such part of ABP corresponds
to a certain substitution instance of $F'^{(i)}$. Thus, we replace every such part of ABP in
the witnesses with its corresponding substitution instance of $F'^{(i)}$. Having these
witnesses enables
us to carry out a step by step proof of the fact that $F'^{(i)}$ computes the zero polynomial (formally,
a Frege refutation  of the Boolean version of $F'^{(i)}$).      

\end{enumerate} 

\end{enumerate}

\begin{proof}[Proof of Theorem \ref{thm:Fbool-in-quasipolynomial-Frege}]
The formula $F$ is of size $s$ which means that the maximal degree of a polynomial computed by $F$ is at most $s+1$. Raz  $\cite{Raz13-tensor}$ showed that we can always split $F$ into syntactic homogenous formulas  $\hF{i},\;i=0\.s+1$, each of size  $s^{O(\log s)}$. In Lemma \ref{homogenous-proof}, proved in the next section, we show that this homogenization construction can already be proved efficiently in Frege. In other words,  we show that there exists an $s^{O(\log s)}$-size Frege proof of

\begin{equation}\label{homogenous-equality}
  \bigoplus_{i=0}^{s+1}\hFbool{i}\leftrightarrow \Fb.
\end{equation}

By Theorem \ref{thm:homo-formula} proved in the sequel, for any \emph{syntactic homogenous} non-commutative formula $H$ that computes the identically zero polynomial over $GF(2)$,   $\neg H_{bool}$ admits a polynomial-size (in the size
of $H$) Frege proof (recall that $\neg H_{\bool}$ is a tautology whenever $H$ is a non-commutative formula computing the zero polynomial over $GF(2)$). Thus, by Theorem \ref{thm:homo-formula}, for every  $\hF{i}$, $i=0,\ldots,s+1$, there exists an $s^{O(\log s)}$-size Frege proof of $\neg\hFbool{i}$. That is, there exists an $s^{O(\log s)}$-size Frege proof of $\neg \left(\bigoplus_{i=0}^{s+1}\hFbool{i}\right)$. Note that Theorem \ref{thm:homo-formula} gives proofs that  have size
polynomial in the size of $\neg\hFbool{i}$, and this latter size is $s^{O(\log s)}$.
Together with  tautology \eqref{homogenous-equality}, we can  derive
 $\neg\Fb$ in Frege.
\end{proof}

\subsection{Proving the homogenization of non-commutative formulas in Frege }

To complete the proof of Theorem  \ref{thm:Fbool-in-quasipolynomial-Frege}
it remains to prove Lemmas \ref{homogenous-proof} and \ref{thm:homo-formula}.
Lemma \ref{homogenous-proof} states that Raz' construction from \cite{Raz13-tensor} for homogenizing
arithmetic formulas  is efficiently provable in Frege (and is also applicable to non-commutative formulas):
\begin{lemma}\label{homogenous-proof}
If $F$ is a non-commutative  formula of size $s$ and depth $O(\log s)$  and $\hF{0}\.\hF{s+1}$ are the syntactic homogenous formulas computing $F$'s homogenous parts of degrees $0\.s+1$, respectively,
 constructed according to \cite{Raz13-tensor} (sketched below), then there exists an $s^{O(\log s)}$-size Frege proof of:
\begin{equation}\label{equ:main-homogenous-equ}
  \left(\bigoplus_{i=0}^{s+1}\hF{i}\right)\leftrightarrow \Fb.
\end{equation}
\end{lemma}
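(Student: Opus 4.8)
The plan is to prove the equivalence \eqref{equ:main-homogenous-equ} by following Raz's homogenization construction gate by gate, establishing inside Frege a \emph{local} homogenization identity at every gate of the (already balanced) formula $F$. Recall that the construction of \cite{Raz13-tensor} assigns to each gate $v$ of $F$ --- which computes a non-commutative polynomial $\widehat F_v$ of degree at most $d$ (with $d\le s+1$) --- homogeneous formulae $F_v^{(0)},\ldots,F_v^{(d)}$, where $F_v^{(j)}$ computes the degree-$j$ homogeneous part of $\widehat F_v$, defined by the recursion: a leaf labelled by a variable $x_i$ has $F_v^{(1)}=x_i$ and $F_v^{(j)}=0$ for $j\ne 1$; a leaf labelled by a constant $c$ has $F_v^{(0)}=c$ and $F_v^{(j)}=0$ for $j\ne 0$; for a plus gate $v=u+w$, $F_v^{(j)}=F_u^{(j)}+F_w^{(j)}$; and for a product gate $v=u\cdot w$ (in this left-to-right order, since we work non-commutatively), $F_v^{(j)}=\sum_{k=0}^{j}F_u^{(k)}\cdot F_w^{(j-k)}$, the inner sum laid out as a balanced tree of plus gates. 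The only point where non-commutativity enters is that the order of factors in each $F_u^{(k)}\cdot F_w^{(j-k)}$ is kept, so the construction and its size analysis carry over verbatim; and since $F$ has been balanced to depth $O(\log s)$ via Lemma \ref{lem:balance-non-commutative-formula}, the standard count --- the total size of all components at a gate multiplies by at most $O(d)$ across each product gate on a root-to-leaf path, of which there are $O(\log s)$ --- shows that each $F_v^{(j)}$, and in particular each $F^{(i)}=F_{\mathrm{root}}^{(i)}$, is a homogeneous formula of size $s^{O(\log s)}$.

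Passing to the Boolean translation ($+\mapsto\oplus$, $\cdot\mapsto\wedge$, $x_i\mapsto p_i$), the core of the argument is to prove, by induction on the depth of $v$ from the leaves up to the root, the local statement
\begin{equation}\label{eq:hprop-local}
\bigoplus_{j=0}^{d}\,(F_v^{(j)})_{bool}(\overline p)\ \leftrightarrow\ (F_v)_{bool}(\overline p),
\end{equation}
each instance of which is to have a Frege proof of size $s^{O(\log s)}$; taking $v=\mathrm{root}$ and recalling $\Fb=(F)_{bool}$ then yields \eqref{equ:main-homogenous-equ}. The base case is trivial: at a leaf all but one component is the zero formula, and the surviving equivalence is of constant size. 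For a plus gate $v=u+w$ one rewrites $\bigoplus_j (F_v^{(j)})_{bool}$ as $\bigoplus_j\big((F_u^{(j)})_{bool}\oplus(F_w^{(j)})_{bool}\big)$, regroups this $\oplus$-sum --- using the standard fact that Frege proves any reassociation or reordering of an $\oplus$ of $m$ formulae of total size $N$ with a proof of size $\mathrm{poly}(m,N)$ --- into $\big(\bigoplus_j(F_u^{(j)})_{bool}\big)\oplus\big(\bigoplus_j(F_w^{(j)})_{bool}\big)$, and concludes by modus ponens from the two induction hypotheses for $u$ and $w$, since $(F_v)_{bool}=(F_u)_{bool}\oplus(F_w)_{bool}$ syntactically.

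For a product gate $v=u\cdot w$ the crucial ingredient is the $GF(2)$ ``ring'' identity
\begin{equation}\label{eq:hprop-distr}
\Big(\bigoplus_{k=0}^{d}a_k\Big)\wedge\Big(\bigoplus_{l=0}^{d}b_l\Big)\ \leftrightarrow\ \bigoplus_{k=0}^{d}\bigoplus_{l=0}^{d}\,(a_k\wedge b_l),
\end{equation}
which Frege proves with a $\mathrm{poly}\big(d,\sum_k|a_k|,\sum_l|b_l|\big)$-size proof by peeling off one $\oplus$ at a time using the atomic distributivity $(x\oplus y)\wedge z\leftrightarrow(x\wedge z)\oplus(y\wedge z)$. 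Instantiating $a_k=(F_u^{(k)})_{bool}$ and $b_l=(F_w^{(l)})_{bool}$, reindexing the double $\oplus$-sum on the right by total degree $j=k+l$ (components with $j$ exceeding the degree bound vanishing) --- once more a reordering of an $\oplus$-sum at $\mathrm{poly}(s)$ cost --- so as to recognize it as $\bigoplus_{j}(F_v^{(j)})_{bool}$, and then applying the induction hypotheses for $u$ and $w$ to replace $\bigoplus_k a_k$ by $(F_u)_{bool}$ and $\bigoplus_l b_l$ by $(F_w)_{bool}$, yields \eqref{eq:hprop-local} for $v$, since $(F_v)_{bool}=(F_u)_{bool}\wedge(F_w)_{bool}$. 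Summing up: there are $\mathrm{poly}(s)$ gates, the proof of \eqref{eq:hprop-local} at each gate consists of $\mathrm{poly}(s)$ inference steps (the $\oplus$-regroupings and one invocation of \eqref{eq:hprop-distr}), and every line of the whole derivation mentions only the formulae $(F_v^{(j)})_{bool}$, $(F_v)_{bool}$ and the combinations above, all of size $s^{O(\log s)}$; hence the total Frege proof has size $s^{O(\log s)}$, as required.

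I expect the main obstacle to be the bookkeeping in the product-gate case rather than any single hard idea: one must carefully track the Boolean encoding of the many (mostly zero) homogeneous components, verify that the reindexing of the double sum matches Raz's definition of $F_v^{(j)}$ exactly, and check that each of the auxiliary ``$GF(2)$-ring'' manipulations (commutativity and associativity of $\oplus$, distributivity of $\wedge$ over $\oplus$-sums of $O(s)$ terms, and identity \eqref{eq:hprop-distr}) really does admit the claimed polynomial-size Frege proof. None of these is deep, but it is precisely here that control of the $O(\log s)$ depth of $F$ --- so that the components, and hence all proof lines, stay of size $s^{O(\log s)}$ --- is essential, and where a careless treatment would blow the bound up.
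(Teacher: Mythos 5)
Your proof is correct and follows essentially the same route as the paper's: a gate-by-gate induction from leaves to root establishing $\bigoplus_j (F_v^{(j)})_{bool}\leftrightarrow (F_v)_{bool}$ at every gate, with $\oplus$-regrouping at plus gates and the $GF(2)$ distributivity identity at product gates, relying on the prior balancing to depth $O(\log s)$ to keep all components of size $s^{O(\log s)}$. The only cosmetic difference is that you state the homogenization recursion directly at each gate rather than packaging it via Raz's $N_r$-indexed family of functions, but the product-gate recursion $F_v^{(j)}=\sum_{k}F_u^{(k)}\cdot F_w^{(j-k)}$ you use is exactly the one the paper itself unwinds in its product-gate case, so the two arguments coincide in substance.
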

\begin{proof}
We first introduce basic notations and observations for describing Raz' (commutative) formula homogenization construction from \cite{Raz13-tensor}. This construction is a somewhat more involved variant
of the standard  homogenization construction for circuits laid out by Strassen \cite{Str73}. We then construct the desired short Frege proofs, which in turn also shows how to construct
the homogenous  formulas themselves.  

\para{Raz' formula homogenization construction.} 

Given a balanced (commutative) arithmetic formula $F$ we wish to construct  $s$ (commutative)
formulas computing the homogenous parts $\hF{i},\; i=0\.s$. Define the \emph{product-depth} of a gate $u$, denoted $\pdu$, as the maximal number of product gates along a directed path  from $u$ to the output gate  (including $u$). Since the formula $F$ is balanced, the  depth is  at most $O(\log s)$, namely the largest value of $\pdu$ for any node $u$ in $F$ is $O(\log s)$. 

Let us consider the directed path from $u$ to the root (including the node $u$). Informally we want to describe a possible progression of the degree of a  monomial computed along the path from $u$ to the root. Observe that any possible degree progression must occur on product gates. That is, for a gate $u$ with product-depth $\pdu$,  there are $\pdu$ ``choices'' for the degree of a monomial to increase (i.e., ``to progress'').

Formally, for every integer $r$, denote by $N_r$ the family of monotone non-increasing functions $D$ from $\set{0,1 ,\ldots, r}$ to $\set{0,1 ,\ldots, s+1}$. It is helpful to think of $D$ as a function from product nodes along the directed path to the corresponding degree of a monomial as it is computed along the path, where the path starts from the root (identified with $0$) and terminates with the product gate closest to $u$ (including possibly $u$ itself; identified with $r$). Thus, for instance, the root $0$ is mapped
to the total degree of the monomial. Therefore, the set $\Gd$ describes all possible progressions of the degree of  monomials along the path from $u$ to the root.
Note that a product gate may not increase the degree of a monomial computed along a path, because we may consider the monomial as multiplied by  a constant. Hence, the functions in $N_r$ are not necessarily strictly decreasing. 

The size of $N_r$ is ${r+s+2 \choose  r+1} = {r+s+2 \choose s+1}$ (the number of combinations with repetitions of
       $r+1$ elements from $s+2$ elements, which  determine  functions in $N_r$). Therefore, for every node $u$ in $F$, the size of the set $\Gd$ is at most 
\begin{equation}\label{eq:prod_depth}
{s+O(\log s)+2 \choose s}= s^{O(\log s)}.
\end{equation}

 We construct the desired syntactic homogenous formulas $\hF{0},\hF{1}\.\hF{s+1}$ by constructing a  formula $\bF$ according to $F$. Split every gate $u$ in $F$ into $|\Gd|$ gates in $\bF$, labeled $(u,\x)$, for every $\x \in \Gd$. We will add edges connecting nodes in $\bF$ the same way as \cite{Raz13-tensor}.
It might be helpful for the reader to consult \cite{Raz13-tensor} to get the intuition of the construction itself, however our presentation   is self-contained, as we will show how the construction is efficiently provable already inside the Frege system (this will also show that $\bF$ and $F$ compute the same polynomial, when considered as  Boolean functions).

%
%

Denote by $\bF_{u,\x}$ the subformula rooted at $(u,\x)$ in $\bF$. There might be
some isolated nodes $(u,\x)$, namely nodes that no edge connects to them, and we consider the subformulas on these nodes as $0$. Similarly, denote by $F_u$ the subformula rooted at $u$ in $F$.
In \cite{Raz13-tensor} it was demonstrated (see below) how to construct $\bF$ so that for every node $(u,\x)$ in $\bF$, $\bF_{u,\x}$ is a homogenous formula computing the degree-$\x(\pdu)$ homogenous part of $F_u$.
More precisely, for  every node $u$ in $F$, denote by $s_u$  the size of the formula $F_u$. The maximal degree of the polynomial computed by $F_u$ is $s_u+1$. For $i=0\. s_u+1$, let $\Du{i}$  denote the \textit{set} of all functions $\x$ in $N_{\pdu}$ such that $\x(\pdu)=i$.  For \emph{any two}  $D,D'\in\Du{i}$, the formulas $\bF_{u,D}$
and $\bF_{u,D'}$  are \textit{identical} (and compute the homogenous part of degree $i$ of $F_u$). Thus we can consider $\bF_{u,\Du{i}}$ as a \textit{single} formula.  



\para{Efficient proofs of the homogenization construction.}

\newcommand{\su}{{s_u}}


Next, we use a similar inductive argument as in \cite{Raz13-tensor}, from leaves to the top gate of $F$, showing that for every gate $u$  in $F$ there exists an $s^{O(\log s)}$-size Frege proof of 
\begin{equation}
\label{eq:homo-proof-for-each-gate}
\left(
        \bigoplus_{i=0}^{s_u+1}\bF_{u,\Du{i}\; bool}
\right)
        \leftrightarrow F_{ u\; bool}\,.
\end{equation}
Observe that the formulas $\bF_{r,\D{0}{r}}\.\bF_{r,\D{s+1}{r}}$, for $r$ being the root of $F$, are just those desired formulas $F^{(0)},F^{(1)}\.F^{(s+1)}$.
Thus, eventually, when we prove \eqref{eq:homo-proof-for-each-gate}
for the root node $r$, we prove the  existence of an $s^{O(\log s)}$-size Frege proof of the Boolean formulas in \eqref{equ:main-homogenous-equ}.

Note that the size of the Frege proof we construct is  quasi-polynomial in $s$.  This is because
of the following: for
every node $u$ in $F$ and every $\x \in \Gd$ we construct a proof of \eqref{eq:homo-proof-for-each-gate}.
Recall that  for every $u$ in $F$, $|\Gd|= s^{O(\log s)}$, by \eqref{eq:prod_depth}.
Thus, the total number of such nodes $u$ and functions $\x$ is 
$s\cd  s^{O(\log s)} =  s^{O(\log s)}\,,$
and so this is the total number of proofs of \eqref{eq:homo-proof-for-each-gate} we
construct. 
Each such proof of \eqref{eq:homo-proof-for-each-gate} requires only $\poly(s)$
size \emph{assuming} we already have proved (by induction hypothesis) the required
previous instantiations of \eqref{eq:homo-proof-for-each-gate}. Therefore, we end up
with a proof of total size $\,\poly(s)\cd s^{O(\log s)}= s^{O(\log s)}$. 

\renewcommand{\v}{D^v}
\newcommand{\w}{D^w}

\Base If $u$ is a leaf,  for each $\x\in N_{\pdu}$,  the node $(u,\x)$ is defined to
be a leaf of $\bF$. Furthermore, if $u$ is labeled by a field element, $(u,\x)$ is labeled by the same field element in case $\x(\pdu) = 0$ and by $0$ in case  $\x(\pdu)\neq 0$. If $u$ is labeled by an input variable, $(u,\x)$ is labeled by the same input variable in case $\x(\pdu) = 1$ and by $0$ in case $\x(\pdu) \neq 1$. Thus, for each $\x\in N_{\pdu}$, either $\bF_{(u,\x)}$ computes $F_u\h{\x(\pdu)}$ or $0$. Namely, we can easily prove in Frege
$$
\left(
        \bigoplus_{i=0}^{\su+1}\bF_{u,\Du{i}\; bool}
\right)
        \leftrightarrow F_{u\; bool}.
$$
                               
\induction   

\case 1
Assume that  $u$ is a sum gate with children $v,w$. For every $\x\in \Gd$, let  $\v\in N_{\pd{v}}$ be the function that agrees with $\x$ on $\set{0,1\.\pdu}$ and satisfies $\v(\pd{v}) = \x(\pdu)$, and in the same way, let $\w\in N_{\pd{w}}$ be the function that agrees with $\x$ on $\set{0,1\.\pdu}$ and satisfies $\w(\pd{w}) = \x(\pdu)$. The node $(u,\x)$ is defined as 
$$
\bF_{u,\x}: = \bF_{v,\v}+\bF_{w,\w}.
$$
Assume $\x(\pdu)=j$. Then it means 
$$
\hat\bF_{u,\Du{j}}: = \hat{\bF}_{v,\Dv{j}}+\hat{\bF}_{w,\Dw{j}}
$$
(recall that given a non-commutative formula $F$,
$\hat F$ denotes the non-commutative  \textit{polynomial} it computes). Therefore, the following is a tautology:
$$
\bF_{ u,\Du{j}\;bool}\leftrightarrow
\left(
\bF_{v,\Dv{j}\;bool}\oplus\bF_{w,\Dw{j}\;bool}
\right),~\hbox{
for all } j=0\.s+1.
$$
\renewcommand{\su}{{s_u+1}} By induction hypothesis on the nodes $v,w$, we have
$$
F_{v\; bool}\leftrightarrow \bigoplus_{i=0}^{s_v+1}\bF_{v,\Dv{i} \;bool},~~~~~ ~~F_{w\; bool}\leftrightarrow \bigoplus_{i=0}^{s_w+1}\bF_{w,\Dw{i}\;bool},$$
                                and so we can prove
                                $$\bigoplus_{i=0}^\su\left( \bF_{v,\Dv{i}\;bool}\oplus \bF_{w,\Dw{i}\;bool}\right)  \leftrightarrow F_{ v\;bool}\oplus F_{w\;bool},
$$
which gives us (since $u$ is a plus gate)
$$
\bigoplus_{i=0}^\su\bF_{u,\Du{i}\;bool}\leftrightarrow F_{u\;bool}\,.
$$
\smallskip                               

\case 2 
If $u$ is a product gate with children $v,w$, using the same notation as above, for $j=0\.s_u+1$, we define $\bF_{u,\Du{j}}:=\sum_{i=0}^j \bF_{v,\Dv{i}}\cdot \bF_{w,\Dw{j-i}}.$ If $j>u_s$, let $\bF_{u,\Du{j}}:=0$.
                                Similarly, using the induction hypothesis on the nodes $v,w$ and observing the fact that  $s_u=s_v+s_w+1$, we can prove:
                                $$F_{u\;bool}\lequal \bigoplus_{i=0}^\su\bF_{u,\Du{i}\; bool}$$ as follows:
                                \begin{align*}
                                    F_{u\;bool} \lequal& \left(F_{v\;bool}\land F_{w\;bool}\right)\\
                                        \lequal&\left(\bigoplus_{j=0}^{s_v+1}\bF_{v,\Dv{j}\;bool}\right)\land\left(\bigoplus_{i=0}^{s_w+1}\bF_{w,\Dw{i}\;bool}\right)\\
                                        \lequal& \bigoplus_{j=0}^{s_v+s_w+2}\bigoplus_{i=0}^{j}\left(\bF_{v,\Dv{i}\;bool}\land\bF_{w,\Dw{j-i}\;bool}\right)\\
                                        \lequal&\bigoplus_{j=0}^\su\left(\bigoplus_{i=0}^{j}\left(\bF_{v,\Dv{i}\;bool}\land\bF_{w,\Dw{j-i}\;bool}\right)\right)\\
                                        \lequal &\bigoplus_{i=0}^\su\bF_{u,\Du{i}\;bool}\,.
                                \end{align*}
\end{proof}

\subsection{Homogenous non-commutative formula identities have polynomial-size
Frege proofs}\label{sec:homog_idens_have_psize_proofs}

To conclude Theorem \ref{thm:Fbool-in-quasipolynomial-Frege} it remains to prove
Theorem \ref{thm:homo-formula}. Here we will prove Theorem \ref{thm:homo-formula},
based on further lemmas we prove in the next section. 

First, we need to set some notation. We denote by 
$$
F\synEqual F'
$$ 
the fact that  $F'$  can be derived
with a polynomial in $|F'|$ size Frege proof, given $F$ as a (possibly empty) assumption in the proof. 

In practice we will almost always use this notation  when $F'$ can be derived form $F$ 
by  simple syntactic manipulations of formulas using mostly structural rules such as the associativity and distributivity rules, as well as simple logical identities (e.g., ${\sf false} \oplus G \equiv G$, where $\oplus$ stands for  XOR). This notation will make our arguments a bit more convenient to read. For most part, we will also leave it to the reader to verify that  indeed $F'$ can be obtained from $F$ with a short Frege proof, since it will be evident from the way $F$ and $F'$ are defined. 

Accordingly, for two \textit{vectors} of formulas $\overline F,\overline G$, we
denote by $\overline F\synEqual \overline G$ the fact that each entry 
in $\overline G$ can be derived from the corresponding entry in $\overline F$ with a short Frege proof.



The following definition is essential to Section \ref{sec:sub-sec-on-implicit-ABP} where
we talk about algebraic branching programs  (ABPs). This definition
will enable us to identify
within a homogenous non-commutative formula a certain part of the formula (after substitution) that corresponds to a sub-algebraic branching program. 
\begin{definition}[Induced part of a formula]\label{def:part-of-formula}
Let $F'$ be a subformula of F and $g_1\.g_k$ be gates in $F'$ and $c_1\.c_k$ be  constants in $\F$. Then $F'[c_1/g_1\.c_k/g_k]$ is called \emph{an induced part of $F$}.
\end{definition}
We sometimes call an induced part of a formula simply a \emph{part of a formula}.

\subsubsection*{Getting rid of constants}
For technical reasons (concerning the conversion of a non-commutative syntactic homogenous formulas into a layered  ABP in what follows) it will be convenient to consider only arithmetic (resp.~Boolean) formulas with \textit{no}  0-1 (resp.~\textsf{true, false}) constants. We say that a Boolean or arithmetic formula is \textit{non-constant} if it contains at least one variable. 

\begin{lemma}[Constant-free formulas]
\label{lem:discard_constants}
Let $F$ be a non-constant and non-commutative formula over $GF(2)$ that computes the (non-commutative) zero-polynomial. Then, there exists a \emph{constant-free} non-commutative formula $F'$ of size ${\rm poly}(|F|)$ that computes the (non-commutative) zero polynomial, such that $\Fb\synEqual
\Fb'$.
\end{lemma}

Note that since $F$ does not contain 0-1 constants, $F'$ does not contain \textsf{true, false} constants.

\begin{proof}
First, notice that substitution of equivalent terms can be simulated efficiently in Frege, in the following sense: if $\Phi$ is a formula and $\psi$ is a subformula occurring in $\Phi$, then $\psi\sequiv\psi'\synEqual
\Phi'$, where $\Phi'$ is $\Phi$ in which the (single) occurrence $\psi$ is substituted by $\psi'$.

Therefore, we can iteratively take the constants out of $\Fb$ within Frege using local substitution of logically equivalent terms, as follows: if $F$ contains a subformula $0+G$, for some formula $G$, we change it to $G$; if $F$ contains a subformula $1+G$ we change it to $\neg G$; if $F$ contains a subformula $0\times G$, we change it to $0$; if $F$ contains a subformula $1\times G$ we change it to $G$. Doing these replacements iteratively we arrive at either the $0$ formula or a formula without constants (since every step reduces the size of the formula). The $0$ formula is arrived only when there are no  variables in $F$, and so this cannot happen by our assumption. 
\end{proof}

\subsubsection*{Main technical theorem}

\begin{theorem}[restatement of Theorem \ref{thm:intro:homo-formula}] \label{thm:homo-formula}
There exists a constant $c$ such that for any non-commutative syntactic homogeneous
formula $F(\overline x)$  over $GF(2)$ of size $s$ that is identically zero, the corresponding Boolean tautology  $\neg\Fb(\overline p)$ has a Frege proof of size at most $s^{c}$ (for sufficiently large $s$).
\end{theorem}
\begin{proof}


First, by Lemma \ref{lem:discard_constants} we can assume without loss of generality
that  $F(\overline x)$ is constant-free (or else, we can either derive an equivalent
constant-free formula or simply the constant \textsf{false}, both with  polynomial-size
Frege proofs). 
Thus, assume from now that  $F$ is constant-free and let $d$ be the syntactic-degree of $F$. 

Note that the syntactic-degree $d$ of $F$ is at most $s+1$.
Theorem \ref{existLambda}, proved in the  next section, states the
existence of a collection of witnesses that  witness that the homogenous
non-commutative constant-free formula    $F$ computes the non-commutative
zero polynomial.  As demonstrated below, these witnesses  will enable us to inductively and efficiently prove in Frege that $F$ is the zero polynomial (over $GF(2)$). 

First, we give the formal description of the witnesses and their properties and then explain informally why they
witness the identity and why they exist for every identity.

\para{Notation.}
For a matrix $\TT$ with entries $\TT_{ij}$, each a non-commutative formula,  and a vector $\FF=(v_1,\. v_m)$ of non-commutative formulas, we write $\TT\FF$ to denote the (transposed) vector of non-commutative  \textit{formulas }whose $j$th entry is $\TT_{j1}\times F_1+\ldots+\TT_{jn}\times F_m$ written as a  balanced (depth $\le \log m+1$) binary tree of plus gates at the top and the formulas $\TT_{jk}\times
F_k$'s at the leaves. For a $0$-$1$ matrix $\LL$, we write $\LL \FF$ to denote the vector of non-commutative formulas similar as defined above for $\TT \FF$, except that now the matrix $\TT$ has the a 0-1 formula
in each entry.
We denote by $\TT\LL\FF $ the vector of non-commutative formulas $(\TT\LL)\FF$, where the $(i,j)$ entry of the
matrix $\TT\LL$ is $ \sum_{k}\TT_{ik}\LL_{kj}$ written as a balanced tree of plus gates with corresponding
leaves as before (and where $\TT_{ik}\LL_{jk}$ is written as $\TT_{ik}$ if $\LL_{kj}=1$ and does not occur in the sum if $\LL_{kj}=0$).
When we write $\synEqual A \sequiv B$ in the witnesses below, for $A$ and $B$ non-commutative
arithmetic formulas over $GF(2)$, we intend to treat $A\sequiv B$ as a Boolean tautology
(Definition \ref{def:bool}). For two vectors of formulas $\mathbf v=(v_1,\ldots,v_m), \mathbf u=(u_1,\ldots,u_m)$, we write $\synEqual \mathbf v \sequiv \mathbf u$ to denote $\synEqual v_i\sequiv u_i$,
for all $i\in[m]$. \bigskip

\begin{minipage}[c]{0.85\textwidth}
\para{Identity Witnesses}

\begin{enumerate}

 \item For every $i=0\. d-1$, $\zeroMatrix$ is a 0-1 matrix of dimension $m_i\times m_i$,
where $m_i=\poly(|F|)$, for all $i$. We set $\LL_0=1$. 

  \item For every $i=1 ,\ldots, d-1$, $\TT_i$ is an $m_{i-1}\times m_{i}$ matrix whose entries are homogenous linear forms in the  $\overline x$ variables with 0-1 coefficients. 
\item For every $i=1\.d$,   $\homoji$ is a vector of  induced parts of $F$, each computing a homogenous non-commutative polynomial of degree exactly $i$. The length of the vectors $\homoji$ is  $m_{d-i}$.
Accordingly, we denote by \hhat\ the vector of non-commutative \emph{polynomials}
in $\homoji$.  
\end{enumerate}
These witnesses are such that the following hold:
\begin{equation}\label{intermidate1}
  \LL_{d-i} \hhat =\overline 0,~~~~i=1\. d\, \hbox{~~is a true equality;}\footnotemark
\end{equation} 
\begin{equation}\label{conclusion-Finalproof}
  \synEqual F \sequiv  \LL_0\homojiFinal \hbox{ (meaning that $\synEqual F \sequiv F_d$,
since $\LL_0=1$);}\footnotemark 
\end{equation}
\begin{equation}\label{intermidate2}
 \synEqual \LL_{d-i}\homoji  \sequiv \TT_{d-i+1}\LL_{d-i+1}\FF_{i-1}\,,~~~~~~i=2\.d\,.
\end{equation}
 \end{minipage}
\bigskip 
\addtocounter{footnote}{-1} 
\footnotetext{This is a \emph{semantic} equality.
I.e., in itself it does not entail a small proof of the equality.}
\addtocounter{footnote}{1}
\footnotetext{Note that $\homojiFinal$ is identical to $F_d$, because the only induced part
of $F$ of degree $d$ is $F$ itself, due to syntactic homogeneity.}

\para{Using the witnesses.} The identity witnesses provide a way to prove inductively
that the non-commutative syntactic homogenous and constant free formula $F$ is identically zero (when considered as
a Boolean formula over $GF(2)$). Informally, we start with
$\LL_{d-1}\FF_1=0 $ (considered as a Boolean equality) which is a true identity by \eqref{intermidate1}.
Since this identity is written as a sum of linear forms it has a polynomial-size proof. From this we also get    $ \TT_{d-1}\LL_{d-1}\FF_{1}=0$, and since by \eqref{intermidate2},
$\LL_{d-2}\FF_{2} = \TT_{d-1}\LL_{d-1}\FF_{1}$, we derive $\LL_{d-2} \FF_2=0$.
Continuing in this fashion we
finally derive  $\LL_0\FF_d  = 0$, which by \eqref{conclusion-Finalproof} concludes the proof.  

It is worth noting that we cannot directly represent the formula $F$ as the iterated matrix product $\TT_{d-1}\cdots \TT_2\LL_{1}\FF_{1}$ in the proof, since writing explicitly this iterated matrix
product will incur an
exponential-size blow-up. 
\medskip

In what follows we make the above argument  \textit{formal}. We demonstrate a proof of  $\neg \Fb$ based on the tautological Boolean formula obtained from  equation \eqref{intermidate1}
and the short Frege proofs of the tautological Boolean formulas in 
 \eqref{conclusion-Finalproof}
and \eqref{intermidate2}.
Denote by $\homojiBool$ the vector of all corresponding Boolean formulas of the formulas in $\homoji$, and let $\homojiBoolt$
be the $t$th coordinate of this vector.

By \eqref{intermidate1}, the following Boolean formulas are all
tautologies:
\begin{equation}  
\label{Raz-trick}
        \bigwedge_{w\in [m_{d-i}]}
                \left(
                        \neg\left(
                                \bigoplus_{t: \LL_{d-i}(w,t)=1}  
                                \homojiBoolt
                            \right)
                \right),~~~~i=1\. d.
\end{equation}

By \eqref{intermidate2}, for every  $i=2\. d$, and every $u\in[m_{d-i}]$, we have short Frege proofs of the following logical equivalence (between two Boolean formulas; the left hand side being the $u$th row in $\LL_{d-i}\homoji$):
\begin{multline}\label{equivalence}
        \left(\bigoplus_{t: \LL_{d-i}(u,t)=1} 
                 \homojiBoolt
        \right) 
                \sequiv \\
                \bigoplus_{w\in [m_{d-i+1}]}
        \left(\TT_{d-i+1}(u,w)_{bool}(\overline p)
                ~\land  
                 \left(\bigoplus_{t: \LL_{d-i+1}(w,t)=1}   
                      \homojiBoolFormert
                 \right)
        \right).
\end{multline}

\begin{claim*}\label{cla:some-additional-claim}
There are polynomial-size Frege proofs of \eqref{Raz-trick}.
\end{claim*}
This will conclude the proof of Theorem \ref{thm:homo-formula}, since for $i=d$, we get a polynomial-size  Frege proof of  
$$
        \bigwedge_{w\in [m_{d}]}
                \left(
                        \neg\left(
                                \bigoplus_{t: \zeroMatrixFinal(w,t)=1}   
                                \homojiBooltFinal
                            \right)
                \right),$$
which is just $\neg \Fb(\overline p)$ by \eqref{conclusion-Finalproof} ($m_d=1$ and
$\LL_d=1$).

\begin{proofclaim}
First fix $i=1$. Since  $\overline F_1$ is a vector of linear forms,  \eqref{Raz-trick} is
a Boolean tautology which can be proved with a polynomial-size Frege proof.
This means that when we fix $i=2$, the right hand side of \eqref{equivalence} becomes false for every $u\in[m_{d-2}]$, and so the left hand side of \eqref{equivalence}
$$
\bigoplus_{t: \LL_{d-2}(u,t)=1} \homojBool_{2,t}
$$
is also false for every $u\in[m_{d-2}]$. We continue in this manner until we arrive to \eqref{Raz-trick} for $i=d$.  
\end{proofclaim}

We have thus concluded the proof of Theorem \ref{thm:homo-formula}.
\end{proof}

%
%
%

\subsection{Identity witnessing theorem}\label{sec:splitting}

\newcommand{\q}{q}
\newcommand{\vlevel}{v_{i,1} ,\ldots, v_{i,m_i}}
\newcommand{\vtwoi}{v_{\q,t}}
\newcommand{\vthreei}{v_{\q+1,j}}
\newcommand{\vectorABPzero}{\overline {A_{0 }} }
\newcommand{\vectorABP}{\overline {A_{\q }}}
\newcommand{\vectorABPtwo}{\overline {A_{\q }}}
\newcommand{\vectorABPthree}{\overline {A_{\q+1 }}}
\newcommand{\vectorABPl}{\overline {A_{l }}}
\newcommand{\vectorABPlminus}{\overline {A_{l-1 }}}

\newcommand{\vectorABPLayer}{\ensuremath{{\overline {A_{i+1 }}}}}
\renewcommand{\AA}{\ensuremath{{\overline {A}}}}
\newcommand{\vectorABPLayerFormer}{\ensuremath{\overline {A_{i}}}}
\newcommand{\vectorABPLayerFinal}{\ensuremath{\overline {A_{l}}}}

\newcommand{\basisLayeri}{\overline \lambda_{i,1} \. \overline\lambda_{i,r_i}}
\newcommand{\basistwo}{\overline \lambda_{\q,1} \. \overline\lambda_{\q,r_\q}}
\newcommand{\basistwot}{\overline \lambda_{\q,t}}
\newcommand{\basistwoti}{\lambda_{\q,t,i}}

\newcommand{\basisthree}{\overline \lambda_{\q+1,1} \. \overline\lambda_{\q+1,r_{\q+1}}}
\newcommand{\basisthreet}{\overline \lambda_{\q+1,t}}

\newcommand{\basislt}{\overline \lambda_{l,t}}
\newcommand{\basisltminus}{\overline \lambda_{l-1,t}}
It remains to prove the following:

\begin{theorem}[Identity witnessing theorem]\label{existLambda}
 Let $F(\overline x)$ be a  non-commutative syntactic homogenous constant-free formula of degree $d$  over $GF(2)$ computing the non-commutative
zero polynomial. Then, the identity witnesses as defined in Section
\ref{sec:homog_idens_have_psize_proofs} exist.
\end{theorem}

The proof of this theorem uses the notion
of an algebraic branching program mentioned before as well as  the Raz and Shpilka PIT algorithm
\cite{RS04}. Our proofs are self-contained, and we demonstrate formally the existence of the
witnesses from scratch. 

\subsubsection{Algebraic branching programs}\label{sec:implicit_ABP}


We introduce the following definition:
\begin{definition}[ABP]\label{def:ABP} An \emph{algebraic branching program} (ABP
for short) is a directed acyclic
graph with one source and one sink. The vertices of the graph are partitioned into
\emph{layers} numbered from $0$ to $d$ (the \emph{degree} of the ABP), and edges may go only from
layer $i$ to layer $i + 1$. The source is the only vertex at layer $0$, and the sink is the
only vertex at layer $d$. Each edge is labeled with a homogeneous linear polynomial in
the variables $x_i$ (i.e., a function of the form $\sum_{i}c_ix_i$, with coefficients $c_i\in\F$, where \F\ is the underlying field). The \emph{size} of an ABP is the number of its vertices. A path, directed from source to sink, in the ABP is said to \emph{compute} the non-commutative product of linear forms on its edges (in the
order they appear on the path). A node in the ABP \emph{computes} the sum of all incoming paths arriving from the source. The ABP \emph{computes} the non-commutative
polynomial computed at its sink. 
\end{definition}

Note that by definition an ABP computes a \emph{homogenous} non-commutative polynomial. 

Raz and Shpilka \cite{RS04} established a deterministic polynomial-time algorithm for the polynomial identity testing of (non-commutative) ABPs. Therefore, by transforming a non-commutative formula to an ABP, one obtains a deterministic polynomial-time algorithm for the polynomial identity testing of
non-commutative formulas.
\begin{theorem}[Theorem 4, \cite{RS04}]\label{Raz_Shpilka_thm}
Let $A$ be an ABP of size $s$ with $d + 1$ layers, then we can verify whether A computes
the non-commutative  zero polynomial  in time
$O(s^5 + s\cdot n^4)$.
  \end{theorem}

\newcommand{\allNodesOnLayer}{A(v_{l-i,1},\vsink),\ldots, A(v_{l-i,m_{l-i}},\vsink)}
Using the algorithm demonstrated in Theorem \ref{Raz_Shpilka_thm}, we give in Lemma \ref{lem:Raz-ABP} below witnesses that certify that a given non-commutative formula computes
the zero polynomial. These witnesses \emph{will not be our final witnesses} because they will incorporate ABPs, whereas in Section \ref{sec:homog_idens_have_psize_proofs} we required the witnesses to consist of non-commutative \emph{formulas} and not ABPs. In the next section we show, based on Lemma \ref{lem:Raz-ABP}, how to obtain the desired formula-based
witnesses.

\para{Notation.}
For what follows in this section, let  $A$ be an ABP with $l+1$ layers and where the source node \vin\ is on the $0$th layer and the sink node \vsink\ is on the $l$th layer. For every $j=0\.l$, we denote the nodes on the $j$th layer by $v_{j1},\.v_{jm_{j}}$, where $m_j$ stands for the total number of nodes in the $j$th layer. For a given $i=0,\ldots,l$, consider the ABP \textit{with $m_{l-i}$ sources} in layer  $l-i$  and whose sink is \vsink. We can denote this multi-source ABP as a vector of ABPs:    
$$
\AA_i=\left(\allNodesOnLayer\right).
$$
Each entry in this vector computes a non-commutative homogenous degree $i$ polynomial. It is \textit{important to note} that $\AA_i$ is only a convenient notation, namely,
when we apply in what follows a matrix product to the vector $\AA_i$, we will treat different coordinates in the vector $\AA_i$ as having \emph{joint nodes}. For instance, the sink node $\vsink$ is treated as a \emph{single node} shared by
all the coordinates (and so in the vector $\AA_i$ it occurs only once). We will thus build
a \emph{single} ABP out of a matrix product with the vector $\AA_i$, as described in what follows.

For a 0-1 matrix $\LL$ of dimension $m\times m$ and a  \emph{multi-source} ABP $\overline A$ with $m$ sources
$v_1,\ldots,v_m$ and $0$ to $l$ layers, we write $\LL\overline A$ to denote the $l+1$ layered ABP with $m$ sources that results from $\overline A$ when we join together several sources into a single source, maintaining the outgoing
edges of the joined sources. Specifically, the $i$th source of the new ABP computes the non-commutative
polynomial $\sum_k^m\LL_{ik} v_k$, and this is done by defining the outgoing edges of the  $i$th source to be all the outgoing edges of $v_k$, for all $v_k$ such that $\LL_{ik}=1$.  

For a matrix $\TT$ with dimension  $m\times m'$ and  entries $\TT_{ij}$
that are homogenous linear forms, and a multi-source ABP $\AA=(v_1,\. v_m')$ we write $\TT\AA$ to denote the ABP whose $0$ layer
consists of $m$ sources, and the $i$th node in the 0th layer, for $i=1,\ldots,m$,  is connected to the $j$th node in 1st layer, for $j=1,\ldots,m'$, with an edge labeled by the
linear form $\TT_{ij}$.
In case $\LL$ is a 0-1 matrix, then $\TT\LL \AA$ stands for the result of the
following process: first multiply
 the matrices $\TT$ and $\LL$ in the standard way, obtaining a matrix $\TT'$ of new homogenous linear forms, and then multiply $\TT'$
by the vector $\AA$, as explained above.
We also denote by $\widehat{ \AA}$ the corresponding vector of non-commutative
\emph{polynomials} computed by the coordinates in $\AA$.
\medskip

With these notations in hand, we now 
construct the following ABP-variant of the identity witnesses:

\begin{lemma}[existence of ABP-based identity witnesses]\label{lem:Raz-ABP}
If the ABP $A(\vin,\vsink)$ computes the  identically zero non-commutative polynomial, then the following hold:

\begin{enumerate}
\item 
\label{it:some_existence_lemma} There exist $\,l$  matrices $\LL_i$ with 0-1 entries, for $i=0\. l-1$, each  of dimension $m_i\times m_i$, where $m_i={\rm
poly}(|A|)$, 
such that  $\LL_0=1$ and 
  $$\LL_{l-i}  \AA_i = \overline 0,~~\text{for all $i=0\.l-1$}$$
(where the equality here is only \emph{semantic}, i.e., the left hand side computes
a vector of zero non-commutative polynomials).

\item 
There exist $l-1$ matrices $\TT_i$, for $i=1\. l-1$, of dimension $m_{i-1}\times m_{i}$ and whose entries are homogenous linear forms in the  $\overline x$ variables with 0-1 coefficients, such that
\begin{gather}\label{eq:18}
\LL_{l-i}\AA_i = 
   \TT_{l-i+1} \LL_{l-i+1}\AA_{i-1},~\hbox{~for $i=2,\ldots,d$, ~and}
\\ \label{eq:final_level}
\LL_    {0}\vectorABPLayerFinal = A(\vin,\vsink) \,,
\end{gather}
and where the ABPs in these  two equalities are constructed in the way described above (these two equalities above are \emph{syntactic}, i.e., in each of the equations the two sides are syntactically identical as ABPs).
\end{enumerate}
\end{lemma}
\begin{proof}
Recall that $m_i$ is the number of nodes in the $i$th layer. Since we assumed $\Lambda_{0}=1$, and since $\vectorABPLayerFinal
= A(\vin,\vsink)$, we conclude equation \eqref{eq:final_level} in the lemma.

We now construct by induction on $j$ the matrix  $\LL_j$, for $j=0\.l-2$, such that part \ref{it:some_existence_lemma} in the lemma holds:
\begin{equation}\label{eq:XXX}
\LL_j\AA_{l-j}=0,
\end{equation}
as well as  \eqref{eq:18}, that is,
\begin{equation}
\label{eq:YYY}
\LL_{j}\AA_{l-j}=\TT_{j+1}\LL_{j+1}\AA_{l-j-1}\,.
\end{equation}

\Base  $\LL_0=1$ by assumption.

\induction Assume that for $0\le h<l-1,~ \LL_0\.\LL_{h-1}$ and $\TT_0\.\TT_{h}$ were already constructed, and that the equality
\eqref{eq:XXX} holds for every $j=0\. h$, and equality  \eqref{eq:YYY} holds for every $j=0\. h-1$.
We will now construct $\LL_{h+1}$ such that \eqref{eq:XXX} holds for $j=h+1$:
$$
\LL_{h+1}\AA_{l-h}\,,
$$
and  $\TT_{h+1}$ such that \eqref{eq:YYY} holds with $j=h$:
$$
\LL_{h}\AA_{l-h}=\TT_{h+1}\LL_{h+1}\AA_{l-h-1}\,.
$$
Let $M_{h,h+1}$ be the adjacency matrix of dimension $m_{h}\times m_{h+1}$ of the two consecutive layers $h$ and $h+1$ in $A,$  where for each entry $(p,q)$, for $p\in[m_{h}],q\in[m_{h+1}]$,
\begin{gather*}
M_{h,h+1}(p,q)=A(v_{h,p},v_{h+1,q})=\sum_{k=1}^n c_kx_k,~~~~~\hbox{where }c_k\in\set{0,1}.
\end{gather*}
The matrix $M_{h,h+1}$ can be written as $\sum_{k=1}^n x_k M_{h,h+1}^{k} $ (the superscript $k$ is used here as an \emph{index} only, and not as a matrix power), for some 0-1 matrices $M_{h,h+1}^{k}$.
By the definition of an ABP
\begin{align*}
  \AA_{l-h}&=M_{h,h+1}\AA_{l-h-1}\\
  &=\sum_{k=1}^n x_k M_{h,h+1}^{k}\AA_{l-h-1}\,.
\end{align*}
Moreover,  if $\LL_{h}\AA_{l-h}=\overline 0 $,   then
$$\LL_{h}\sum_{k=1}^n x_k  M_{h,h+1}^{k}\AA_{l-h-1}=0,$$
and therefore, by  the non-commutativity of product we have
\begin{equation}
\label{eq:E}
\LL_{h} M_{h,h+1}^{k}\AA_{l-h-1}=\overline 0, ~~~\text{ for }k=1\. n\,.
\end{equation}

Now, consider the basis of the span of all row vectors in  all the matrices $\LL_h M_{h,h+1}^{k}$,
for $k=1\. n$. The number of vectors in this  basis is at most the number of columns in (each of the) $M_{h,h+1}^k$ matrices, that is, at most  $m_{h+1}$  (which equals the number of nodes in the $h+1$ layer). Define
the matrix $\LL_{h+1}$ to be the $m_{h+1}\times m_{h+1}$ matrix whose rows are the vectors in this basis (if the basis consists of less than $m_{h+1}$ vectors we can simply put zero rows to reach $m_{h+1} $ rows).
By \eqref{eq:E} we know that every row vector in $\LL_{h} M_{h,h+1}^{k}$ is \textit{orthogonal} to $\AA_{l-h-1}$
(i.e.,~their inner product is zero).  Thus, any vector in the basis of the rows of $\LL_{h} M_{h,h+1}^{k}$,
for $k=1,\ldots,n$,
is also orthogonal to $\AA_{l-h-1}$, and so  $$\LL_{h+1}\AA_{l-h-1}=\overline 0\,.$$

By the properties of a basis of a linear space, there must exist  matrices $\TT_{h+1}^{k}$, such that
$$\LL_{h} M_{h,h+1}^{k}=\TT_{h+1}^{k}\LL_{h+1},\text{~~~~for all~} k=1\. n\,. $$
Then, define $\TT_{h+1} := \sum_{k=1}^n \TT_{h+1}^{k} x_k$.
Thus,
$$\LL_{h}\AA_{l-h}=\TT_{h+1}\LL_{h+1}\AA_{l-h-1}\,.$$
\end{proof}

\newcommand{\nonHomoABP}{\ensuremath{A}}
\newcommand{\gl}{\g_{left}}
\newcommand{\gr}{\g_{right}}
\newcommand{\vg}{{\g_v}}

\subsubsection{Implicitly working with ABPs in Frege system}
\label{sec:sub-sec-on-implicit-ABP}

Here we use Lemma \ref{lem:Raz-ABP} to conclude the existence of (formula-based)
identity witnesses (as required in Theorem \ref{existLambda}) and by that conclude
the proof of Theorem \ref{thm:Fbool-in-quasipolynomial-Frege}.


Recall the notion of an induced part of a
formula (Definition \ref{def:part-of-formula}): for a subformula $F'$ of $F$ and gates  $g_1\.g_k$  in $F'$ and 0-1 constants $c_1\.c_k$,  $F'[c_1/g_1,\.,c_k/g_k]$ is called  \emph{an induced part of $F$}.  
Notice that  it is unclear how to (usefully) represent an ABP directly in a Frege system, because apparently ABP is a stronger model than formulas (and each Frege proof-line is written as a formula). Thus, we cannot directly use the same formulation as \cite{RS04}.
This is the reason that we work with induced parts of formulas: let  $A$ be the ABP that corresponds to the non-commutative formula
$F$, then for every node $v$ in $A$ there will be a corresponding induced part of $F$ that computes the same polynomial computed by  the sub-ABP rooted in $v$ and whose sink is the sink of  $A$. For this purpose we introduce the following notation and definition.  \smallskip

For two vertices $v',v''$ in the ABP $A$, we denote by $A(v',v'')$ the polynomial computed by the ABP with the source $v'$ and the sink $v''$ and all the paths leading from $v'$ to $v''$.
Informally, a \emph{$v$-part of a formula $F$} is simply a substitution instance of $F$ that computes
the same polynomial as $\nha(v,\vsink)$.
Formally we have:

\begin{definition}[$v$-part of formula  $F$]
Let $F$ be a homogenous formula and $A$ be the corresponding ABP of $F$ constructed according to the methods described in \cite{RS04} (see also below), in which the source is $\vin$ and the sink is $\vsink$. For any node $v$ in $A$, if there exists an induced part of the formula $F$ computing the same polynomial as $A(v,\vsink)$, then we call this part a \emph{$v$-part of the formula $F$}. (Note that for a node $v$ there might be
more than one $v$-part.) 
\end{definition}

Let $F$ be a non-commutative homogenous formula and let $A$ be the corresponding ABP of $F$. In (the
proof of) Lemma \ref{existence-v-part} below we construct a mapping between the nodes $v$ in $A$ to $v$-parts
of $F$, denoted $\nodeF_v$ such that  $\nodeF_v$ computes the non-commutative homogenous polynomial   computed by $A(v,\vsink)$.
This will enable us to refer (implicitly) to $A(v,\vsink)$  by an induced part $\nodeF_v$ of $F$, for any node $v$ in $A$ (though a $v$-part is not unique, our mapping will obviously associate a \textit{unique} $v$-part to every node $v$ in $A$). 

Furthermore,  for every node $v$ in the ABP $A$ the following holds (where, for the sake of simplicity, the arithmetic formulas computing the linear forms computed by $A(v,u)$ are denoted also  by  $A(v,u)$):
\begin{equation}\label{eq:big_sum}
\nodeF_{v}\synEqual 
\sum_{u: {\text{ $u$ has an incoming }}\atop \text{edge from  $v$}}
A(v,u)\times\nodeF_{u}.
\end{equation}
where, the big sum denotes a  balanced binary tree of plus gates and the $A(v,u)\times \nodeF_u$'s at the leaves. 

It will be convenient to assume that the sink \vsink\ of an ABP is mapped to the empty formula and that if $G$ is the empty formula, then $H\times G \synEqual H$, where $H$ stands for some nonempty formula. 


\begin{lemma}\label{existence-v-part}
 For a non-commutative syntactic homogenous formula $F$  without constants, let  $A$ be the ABP transformed from $F$ by the methods in \cite{RS04} (equivalently, in \cite{Nis91};
we repeat this construction in the proof  below), in which the source is $\vin$ and the sink is $\vsink$. For every node $v$ in $A$ the non-commutative polynomial computed by $A(v,\vsink)$ can be computed by some non-commutative formula, denoted $\nodeF_v$, which is
a v-part of $F$. Furthermore, for every node $v$ in the ABP,  \eqref{eq:big_sum} holds.
\end{lemma}

\begin{proof}
We construct an  ABP \nonHomoABP, such that each node $v$ in \nonHomoABP\ is 
mapped to an induced part of $F$.


\para{Constructing the ABP \nonHomoABP.}
Given the non-commutative syntactic homogenous formula $F$ over $GF(2)$ that does not contain constants, we construct the corresponding ABP \nonHomoABP\ by induction on the size of $F$. By syntactic homogeneity \textit{we get a standard (layered) ABP} (this differs from  \cite{RS04} who  did not start from  a homogenous formula and
so the resulted ABP was (initially) non-layered). Throughout the construction of \nha\ we maintain a mapping
$$
g: \rm{nodes}(\nonHomoABP)\to\rm{nodes}(F)
$$ from nodes in \nonHomoABP\
to their ``corresponding'' nodes in $F$ (this will help us define  the mapping $\nodeF_v$).
The reader can also consult the illustrated example in the sequel.

\Base If $F$ is a variable $x_i$, then \nonHomoABP\ is a single edge $(\vin,\vsink)$ labeled with $x_i$
and $g(\vin):=F$ (i.e., the single node in $F$) and  $g(\vsink):=\emptyset$ (i.e., ``the empty node'').

\induction
 
\case 1 $F=G+H$. Then \nonHomoABP\ is defined with the root \vin\ being the joint of the two roots of
the two ABPs constructed already for $G,H$ (while keeping their outgoing edges).
We then also join the two sinks of the ABPs for $G,H$ (while keeping their incoming edges) into a single sink denoted \vsink.

The function $g$ is defined as the union of the two original functions $g$'s for the ABPs for $G$ and $H$, where the
new nodes \vin\ and \vsink\ are mapped by $g$ to the root of $F$ and the empty formula $\emptyset$, respectively
(note that
the domains of both these $g$'s are \emph{disjoint}---except for the two sources and two sinks).
\medskip

\case 2 $F=G\times H$. Assume that $ A_G, A_H$ are the two ABPs already constructed
for $G,H$, respectively. Then \nonHomoABP\ is defined as $ A_G$ with the sink of $  A_G$ replaced by $ A_H$. The function $g$ is defined as the union of the two $g$
functions for  $ A_G,  A_H$ (where the root of $ A_H$ is mapped by $g$ to the root of the formula
$H$).

\para{Construction of $F^\bullet_v$.}

Let $F$ be a syntactic homogenous non-commutative formula without constants, \nha\ its corresponding ABP, and $g:{\rm nodes}(\nha)\to{\rm nodes}(F)$ the  function, all whose construction is described above. We construct the mapping  $F^\bullet_v$ for nodes $v$ in \nonHomoABP\ 
as follows. Throughout the construction we maintain the following conditions:
%
\begin{itemize}
\item[(i)]for every node $v$ in \nonHomoABP, the formula $F^{\bullet}_v$ computes the same non-commutative
polynomial as
$\nonHomoABP\left(g(v),\vsink\right)$;
\item[(ii)] for every node $v$ in $\nha$, equation \eqref{eq:big_sum} above holds. 
\end{itemize}
  
\newcommand{\fbv}{\ensuremath{F^{\bullet}_v}}

Let $F$ be a non-commutative syntactic homogenous formula $F$, and  $t$ a node in $F$. Denote by $r$
the root of $F$ (in particular, if $F$ is a variable then the root is the variable). We define the function $D(F,t)$ by induction on the structure
of $F$ as follows:
$$D(F,r):=F,$$ 
and for $t\neq r$ we define (for $A, B$ two non-commutative syntactic homogenous formulas without constants): 
\[
D(A+B, t):=
\begin{cases}
D(A,t)+0\,, & \hbox{ if $t\in {\rm nodes}(A)$;}  \\
0+D(B,t)\,, &  \hbox{ if $t\in {\rm nodes}(B)$,} \\
\end{cases}
\]
and
\[
D(A\times B, t) :=
\begin{cases}
D(A,t)\times B\,, & \hbox{ if $t\in {\rm nodes}(A)$;}  \\
1\times D(B,t)\,, &  \hbox{ if $t\in {\rm nodes}(B)$} \\
\end{cases}
\]
(note the asymmetry in defining $D(A\times B)$, which corresponds to the way a non-commutative formula
is translates into an ABP, with $A$ computed ``above'' $B$).

Finally, for every node $v$ \emph{in the ABP $A$}, we define  
$$
\fbv := D(F,g(v)).
$$

\para{Example.} Figure \ref{fig:XXX} illustrates a non-commutative syntactic homogenous and constant-free formula $F$, and its corresponding ABP $A$, together with the map $g:{\rm nodes}(A)\to {\rm nodes}(F)$. Figure \ref{fig:YYY} shows the formula $\fbv$ (where $v$ is the node in $A$ from Figure \ref{fig:XXX}). Note that indeed, by definition, $D(F,g(v))=D(F,t)=D(F_s,t)\times
F_q = (D(F_p,t)+0)\times F_q = ((1\times D(F_t,t))+0)\times F_q = ((1\times x_2)+0)\times F_q $.
 
\begin{figure}[here]
\begin{center}
\input{drawing.tex}
\caption{}
\label{fig:XXX}
\end{center}
\end{figure}

\begin{figure}[here]
\begin{center}
\input{substitution-instance.tex}
\caption{}
\label{fig:YYY}
\end{center}
\end{figure}

\begin{claim*}\label{cla:}
Conditions (i) and (ii) above hold.   
\end{claim*}

\begin{proofclaim}
Condition (i) holds by inspection of the definition of $D$, the construction of the ABP \nha\ from $F$
and the function $g$ giving the ``origin'' in $F$ of each node in \nha.
\smallskip 

Condition (ii), i.e., equation \eqref{eq:big_sum} for all $v$ in $\nha$ holds by condition
(i) and the definition of $D$. Note that condition (i) already shows 
that 
$
\nodeF_{v} \sequiv \sum\limits_{u: {\text{ $u$ has an incoming }}\atop \text{edge from  $v$}}
A(v,u)\times\nodeF_{u}\,
$ 
is indeed a tautology (considered over $GF(2)$). The fact that the right hand side of this tautology can be derived with a short (polynomial-size) Frege proof from the left hand side can be demonstrated by using basic structural derivation rules of Frege (e.g.,
associativity and distributivity) and simple logical equivalences (e.g., $1\oplus G\sequiv \neg G$). We omit the details.
\end{proofclaim}

\medskip 
This concludes the proof of Lemma \ref{existence-v-part}.
\end{proof} 

We are now ready to conclude the proof of the identity witnessing theorem (Theorem \ref{existLambda}):

\newcommand{\Ai}{\ensuremath{\overline A _i}}

\begin{proof}[Proof of Theorem \ref{existLambda}]
Recall the ABP-based identity witnesses we showed existed in 
Lemma \ref{lem:Raz-ABP}. Our goal is to show that there are (formula-based) identity witnesses (as defined in  the proof of Theorem \ref{thm:homo-formula}). 

Using the correspondence given in  Lemma \ref{existence-v-part},
we can replace each ABP $A(v,\vsink)$ occurring in some $\Ai$ (for some node $v$ in $A$ and some $i=0,\ldots,l$) by a corresponding $v$-part $\nodeF_v$. Denote with  $\homoji$  the result of this replacement. Thus, $\homoji$ contains $v$-parts  $\nodeF_v$ of $F$, each computing a homogenous polynomial of degree $i$. 
With this replacement we get (we assume that $l=d$):
\begin{enumerate}
\item 
 There exist $d$ matrices $\LL_i$, for $i=0\. d-1$, with 0-1 entries and dimension $m_i\times m_i$, where $m_i={\rm poly}(|A|)$, such that  $\LL_0=1$, $\overline F_d=F$
and   
$$
\LL_{d-i}  \FF_i = \overline 0,~~\text{for all $i=0\.d-1$}.
$$

\item 
There exist $d-1$ matrices $\TT_i$, for $i=1\. d-1$, of dimension $m_{i-1}\times m_{i}$ and whose entries are homogenous linear forms in the  $\overline x$ variables with 0-1 coefficients, such that 
\begin{gather}\label{eq:18:later}
\LL_{d-i} \FF_i = 
   \TT_{d-i+1} \LL_{d-i+1}\FF_{i-1},~\hbox{~for $i=2,\ldots,d$, ~and}
\\ \label{eq:final_level:later}
\LL_{0}\FF_d = A(\vin,\vsink) \,.
\end{gather}
\end{enumerate}

Recall that $\Lambda \overline F$, for $\Lambda$ a matrix and $\overline F$ a vector of formulas, is a vector of formulas, where each formula is written as a balanced
(partial) sum of the formulas in $\overline F$ (see the Identity Witnesses' definition
in the proof of Theorem \ref{thm:homo-formula}).


%
%

Our goal now is to show 
\begin{equation}\label{eq:almost_end}
 \synEqual \LL_{d-i}\homoji  \sequiv \TT_{d-i+1}\LL_{d-i+1}\FF_{i-1}\,,~~~~~~i=2\.d\,,
\end{equation}
and 
\begin{equation}\label{eq:don't_need}
  \synEqual F \sequiv  \LL_0\homojiFinal \hbox{ (meaning that $\synEqual F \sequiv F_d$,
since $\LL_0=1$).}
\end{equation}
Note that \eqref{eq:don't_need} holds trivially since $F_d$ and $F$ are identical. For \eqref{eq:almost_end},
by Lemma \ref{existence-v-part}, we know that 
\begin{equation}\label{eq:is_it_last?}
\nodeF_{v}\synEqual 
\sum_{u: {\text{ $u$ has an incoming }}\atop \text{edge from  $v$}}
A(v,u)\times\nodeF_{u}.
\end{equation}
Consider $\LL_{d-i}\homoji$. Each of its entries is a (partial) sum of the formulas
in $\homoji$ written as a balanced sum. By \eqref{eq:is_it_last?} we can write each entry in $\homoji$ as a (balanced) sum in which each summand is some linear form $A(v,u)$ times an entry in $\FF_{i-1}$.  We can thus write $\LL_{d-i}\homoji$ as $\TT'\FF_{i-1}$
for some matrix $\TT'$ with linear forms in each of its entries. Since the identity stated in \eqref{eq:almost_end} is a \emph{true} identity, we thus get 
$$
\TT'\FF_{i-1} \sequiv \TT_{d-i+1}\LL_{d-i+1}\FF_{i-1}\,.
$$
But such an identity is provable in Frege with a polynomial-size proof, because we
only need to prove an identity between $\langle \mathbf t_j',\FF_{i-1}\rangle $ and $\langle \mathbf t_j, \FF_{i-1}\rangle $, for each of the $j$th rows $\mathbf t'_j$ and $\mathbf t_j$ of the matrices $\TT'$ and $ \TT_{d-i+1}\LL_{d-i+1}$, respectively
(note that each entry of these rows is written as a \textit{linear} form).
\end{proof}

\subsection{Conclusions}
The propositional-calculus has a ubiquitous presence in logic and computer science
at large. Within complexity theory and  propositional proof complexity in
particular it has a prominent role, and considered a strong proof system whose structure and complexity is  poorly understood. In that respect, we believe our characterization of Frege proofs and the  propositional-calculus as non-commutative
polynomials whose non-commutative formula size corresponds
(up to a quasi-polynomial increase) to the size of  Frege proofs,
should be considered a valuable contribution.

In the framework of algebraic propositional proof systems (and especially
the IPS framework and its precursors by Pitassi \cite{Pit97,Pit98}) our 
characterization is \emph{almost precise}, as we showed an almost tight two-sided simulation of Frege and non-commutative IPS. Although we left it  open whether the simulation
of non-commutative IPS by Frege can be improved from quasi-polynomial down to polynomial
size,
there is nothing to suggest at the moment this cannot be achieved.       

Non-commutative formulas constitute a weak model of computation that is quite well understood. Since, as mentioned above,  the Frege system is considered a strong proof system, and in fact it is not entirely out of question  that Frege---or at least its extension, Extended Frege---is polynomially bounded (i.e., admits polynomial-size proofs  for every tautology),
on the face of it, our results are  surprising.

Overall, we believe that this correspondence between non-commutative formulas and proofs, gives a renewed hope for progress on the fundamental lower bounds questions in proof complexity
in so far that  it reduces the problem of proving lower bounds on Frege proofs to the problem of establishing non-commutative formula lower bounds (which are already
known for many polynomials). Since non-commutative lower bounds are already known and since proving the sort of matrix rank lower bounds that are required to establish  non-commutative formula lower bounds for  the permanent and determinant are fairly simple (\cite{Nis91}),
the current work provides a quite compelling evidence that Frege lower bounds
might indeed not be very far away.

One possible route for Frege lower bounds is to reduce
directly Frege lower bounds to the problem of lower bounding the non-commutative formula
size of polynomials that are \emph{already known to be hard} for the class of non-commutative formulas. We believe that this route is certainly a plausible one. A somewhat less direct approach is to show that some tautologies require non-commutative
IPS refutations whose associated partial-derivative matrices  (in the sense of Nisan \cite{Nis91}) have high  rank---here, the task would be to lower bound non-commutative
polynomials that are given only ``semi-explicitly'' (that is, they are given in terms of the properties of the non-commutative IPS (Definition \ref{def:intro:non-commutative-IPS}));
in other words, one has to establish lower bounds on a \emph{family }of polynomials (for
each fixed number of variables $n$).

Furthermore, ideas and lower bounds techniques connecting non-commutative computation,
algebras with polynomial-identities (PI-algebras) and proof complexity as studied in \cite{Hru11,LT13} might provide
further tools for obtaining non-commutative IPS lower bounds.


\medskip 

Apart from the fundamental lower bound questions, the new characterization of Frege proofs sheds new light on the correspondence between \textit{circuits and proofs} within proof complex    ity: in the framework of the ideal proof system, a Frege proof can be seen from the computational perspective as a non-commutative formula. This gives a different, and in
some sense simpler,  correspondence between proofs and computations than
the traditional one (in which Frege corresponds to \NCOne\ (cf.~\cite{CN10})).    

We have also tighten the important results of Grochow and Pitassi \cite{GP14}. Namely, by showing that already the non-commutative version of the  IPS is sufficient to simulate Frege, as well as by showing \textit{unconditional }efficient simulation of the non-commutative IPS by Frege. 

Finally, while proving that Frege quasi-polynomially simulates the non-commutative IPS, we demonstrated new simulations of algebraic complexity constructions within  proof complexity; these include the homogenization for formulas due to  Raz \cite{Raz13-tensor} and the PIT algorithm for non-commutative formulas due to  Raz and Shpilka \cite{RS04}. These proof complexity  simulations add to the known previous such simulations shown in Hrube\v s and the second author \cite{HT12}, and are of independent interest in the area
of Bounded Arithmetic and feasible mathematics.

\section*{Acknowledgments}
We are thankful to Joshua Grochow for very helpful comments.

\bibliographystyle{alpha}
\bibliography{PrfCmplx-Bakoma}

\begin{thebibliography}{ABSRW04}

\bibitem[ABSRW04]{ABSRW00}
Michael Alekhnovich, Eli Ben-Sasson, Alexander~A. Razborov, and Avi Wigderson.
\newblock Pseudorandom generators in propositional proof complexity.
\newblock {\em SIAM J. Comput.}, 34(1):67--88, 2004.
\newblock (A preliminary version appeared in Proceedings of the 41st Annual
  Symposium on Foundations of Computer Science (Redondo Beach, CA, 2000)).

\bibitem[AGP02]{AGP01}
Albert Atserias, Nicola Galesi, and Pavel Pudl{\'a}k.
\newblock Monotone simulations of non-monotone proofs.
\newblock {\em J. Comput. System Sci.}, 65(4):626--638, 2002.
\newblock Special issue on complexity, 2001 (Chicago, IL).

\bibitem[Ajt88]{Ajt88}
Mikl\'{o}s Ajtai.
\newblock The complexity of the pigeonhole principle.
\newblock In {\em Proceedings of the IEEE 29th Annual Symposium on Foundations
  of Computer Science}, pages 346--355, 1988.

\bibitem[AKV04]{AKV04}
Albert Atserias, Phokion~G. Kolaitis, and Moshe~Y. Vardi.
\newblock Constraint propagation as a proof system.
\newblock In {\em CP}, pages 77--91, 2004.

\bibitem[BBP95]{BBP95}
Maria~Luisa Bonet, Samuel~R. Buss, and Toniann Pitassi.
\newblock Are there hard examples for {F}rege systems?
\newblock In {\em Feasible mathematics, II (Ithaca, NY, 1992)}, volume~13 of
  {\em Progr. Comput. Sci. Appl. Logic}, pages 30--56. Birkh\"auser Boston,
  Boston, MA, 1995.

\bibitem[Bre74]{Bre74}
Richard~P. Brent.
\newblock The parallel evaluation of general arithmetic expressions.
\newblock {\em J. {ACM}}, 21(2):201--206, 1974.

\bibitem[CEI96]{CEI96}
Matthew Clegg, Jeffery Edmonds, and Russell Impagliazzo.
\newblock Using the {G}roebner basis algorithm to find proofs of
  unsatisfiability.
\newblock In {\em Proceedings of the 28th Annual ACM Symposium on the Theory of
  Computing (Philadelphia, PA, 1996)}, pages 174--183, New York, 1996. ACM.

\bibitem[CN10]{CN10}
Stephen Cook and Phuong Nguyen.
\newblock {\em Logical Foundations of Proof Complexity}.
\newblock ASL Perspectives in Logic. Cambridge University Press, 2010.

\bibitem[CR79]{CR79}
Stephen~A. Cook and Robert~A. Reckhow.
\newblock The relative efficiency of propositional proof systems.
\newblock {\em The Journal of Symbolic Logic}, 44(1):36--50, 1979.

\bibitem[Gen35]{Gen35}
Gerhard Gentzen.
\newblock Untersuchungen \"{u}ber das logische schlie{\ss}en.
\newblock {\em Mathematische Zeitschrift}, 39:68--131, 1935.

\bibitem[GH03]{GH03}
Dima Grigoriev and Edward~A. Hirsch.
\newblock Algebraic proof systems over formulas.
\newblock {\em Theoret. Comput. Sci.}, 303(1):83--102, 2003.
\newblock Logic and complexity in computer science (Cr\'eteil, 2001).

\bibitem[GP14]{GP14}
Joshua~A. Grochow and Toniann Pitassi.
\newblock Circuit complexity, proof complexity, and polynomial identity
  testing.
\newblock In {\em 55th Annual {IEEE} Symposium on Foundations of Computer
  Science, {FOCS}}, 2014.
\newblock Also available as arXiv:1404.3820 [cs.CC].

\bibitem[Hru11]{Hru11}
Pavel Hrube\v{s}.
\newblock How much commutativity is needed to prove polynomial identities?
\newblock {\em Electronic Colloquium on Computational Complexity, ECCC},
  (Report no.: TR11-088), June 2011.

\bibitem[HT12]{HT12}
Pavel Hrube\v{s} and Iddo Tzameret.
\newblock Short proofs for the determinant identities.
\newblock In {\em Proceedings of the 44th Annual ACM Symposium on the Theory of
  Computing (STOC)}, New York, 2012. ACM.

\bibitem[HW14]{HW14}
Pavel Hrube{\v s} and Avi Wigderson.
\newblock Non-commutative arithmetic circuits with division.
\newblock In {\em Innovations in Theoretical Computer Science, ITCS'14,
  Princeton, NJ, USA, January 12-14, 2014}, pages 49--66, 2014.

\bibitem[KPW95]{KPW95}
Jan Kraj{\'{\i}}{\v{c}}ek, Pavel Pudl{\'a}k, and Alan Woods.
\newblock An exponential lower bound to the size of bounded depth {F}rege
  proofs of the pigeonhole principle.
\newblock {\em Random Structures Algorithms}, 7(1):15--39, 1995.

\bibitem[Kra95]{Kra95}
Jan Kraj\'{i}{\v{c}}ek.
\newblock {\em Bounded arithmetic, propositional logic, and complexity theory},
  volume~60 of {\em Encyclopedia of Mathematics and its Applications}.
\newblock Cambridge University Press, Cambridge, 1995.

\bibitem[Kra04]{Kra04}
Jan Kraj{\'{\i}}{\v{c}}ek.
\newblock Dual weak pigeonhole principle, pseudo-surjective functions, and
  provability of circuit lower bounds.
\newblock {\em The Journal of Symbolic Logic}, 69(1):265--286, 2004.

\bibitem[Kra08]{Kra07}
Jan Kraj{\'{\i}}{\v{c}}ek.
\newblock An exponential lower bound for a constraint propagation proof system
  based on ordered binary decision diagrams.
\newblock {\em J. Symbolic Logic}, 73(1):227--237, 2008.

\bibitem[Kra10]{Kra10-forcing}
Jan Kraj\'{i}\v{c}ek.
\newblock {\em Forcing with random variables and proof complexity}, volume 382
  of {\em London Mathematical Society Lecture Notes Series}.
\newblock Cambridge Press, 2010.

\bibitem[Kra11]{Kra:book11}
Jan Kraj\'{i}\v{c}ek.
\newblock {\em Forcing with random variables and proof complexity}.
\newblock London Mathematical Society Lecture Note Series, No.382. Cambridge
  University Press, 2011.

\bibitem[LT13]{LT13}
Fu~Li and Iddo Tzameret.
\newblock Generating matrix identities and proof complexity.
\newblock {\em Electronic Colloquium on Computational Complexity, TR13-185},
  2013.
\newblock arXiv:1312.6242 [cs.CC] \url{http://arxiv.org/abs/1312.6242}.

\bibitem[Nis91]{Nis91}
N.~Nisan.
\newblock Lower bounds for non-commutative computation.
\newblock {\em Proceedings of the 23th Annual ACM Symposium on the Theory of
  Computing}, pages 410--418, 1991.

\bibitem[PBI93]{PBI93}
Toniann Pitassi, Paul Beame, and Russell Impagliazzo.
\newblock Exponential lower bounds for the pigeonhole principle.
\newblock {\em Comput. Complexity}, 3(2):97--140, 1993.

\bibitem[Pit97]{Pit97}
Toniann Pitassi.
\newblock Algebraic propositional proof systems.
\newblock In {\em Descriptive complexity and finite models (Princeton, NJ,
  1996)}, volume~31 of {\em DIMACS Ser. Discrete Math. Theoret. Comput. Sci.},
  pages 215--244. Amer. Math. Soc., Providence, RI, 1997.

\bibitem[Pit98]{Pit98}
Toniann Pitassi.
\newblock Unsolvable systems of equations and proof complexity.
\newblock In {\em Proceedings of the International Congress of Mathematicians,
  Vol. III (Berlin, 1998)}, number Vol. III, pages 451--460, 1998.

\bibitem[Pud99]{Pud99}
Pavel Pudl{\'a}k.
\newblock On the complexity of the propositional calculus.
\newblock In {\em Sets and proofs ({L}eeds, 1997)}, volume 258 of {\em London
  Math. Soc. Lecture Note Ser.}, pages 197--218. Cambridge Univ. Press,
  Cambridge, 1999.

\bibitem[Raz85]{Razb85}
A.~A. Razborov.
\newblock Lower bounds on the monotone complexity of some {B}oolean functions.
\newblock {\em Dokl. Akad. Nauk SSSR ({\textrm in Russian})}, 281(4):798--801,
  1985.
\newblock [English translation in Sov. Math. Dokl., vol . 31 (1985), pp.
  354-357.].

\bibitem[Raz06]{Raz04b}
Ran Raz.
\newblock Separation of multilinear circuit and formula size.
\newblock {\em Theory of Computing, Vol. 2, article 6}, 2006.

\bibitem[Raz09]{Raz04a}
Ran Raz.
\newblock Multi-linear formulas for permanent and determinant are of
  super-polynomial size.
\newblock {\em J. ACM}, 56(2), 2009.

\bibitem[Raz13]{Raz13-tensor}
Ran Raz.
\newblock Tensor-rank and lower bounds for arithmetic formulas.
\newblock {\em J. {ACM}}, 60(6):40, 2013.

\bibitem[Raz15]{Razb15-annals}
Alexander~A. Razborov.
\newblock Pseudorandom generators hard for $k$-\textsc{DNF} resolution and
  polynomial calculus resolution.
\newblock {\em Annals of Mathematics}, 181:415--472, 2015.

\bibitem[Rec76]{Rec76:PhD}
Robert Reckhow.
\newblock {\em On the lengths of proofs in the propositional calculus}.
\newblock PhD thesis, University of Toronto, 1976.
\newblock Technical Report No . 87.

\bibitem[RS05]{RS04}
Ran Raz and Amir Shpilka.
\newblock Deterministic polynomial identity testing in non commutative models.
\newblock {\em Computational Complexity}, 14(1):1--19, 2005.

\bibitem[RT08a]{RT07}
Ran Raz and Iddo Tzameret.
\newblock Resolution over linear equations and multilinear proofs.
\newblock {\em Ann. Pure Appl. Logic}, 155(3):194--224, 2008.

\bibitem[RT08b]{RT06}
Ran Raz and Iddo Tzameret.
\newblock The strength of multilinear proofs.
\newblock {\em Computational Complexity}, 17(3):407--457, 2008.

\bibitem[Sch80]{Sch80}
Jacob~T. Schwartz.
\newblock Fast probabilistic algorithms for verification of polynomial
  identities.
\newblock {\em Journal of the ACM}, 27(4):701--717, 1980.

\bibitem[Seg07]{Seg07}
Nathan Segerlind.
\newblock Nearly-exponential size lower bounds for symbolic quantifier
  elimination algorithms and {OBDD}-based proofs of unsatisfiability.
\newblock {\em Electronic Colloquium on Computational Complexity}, January
  2007.
\newblock ECCC, TR07-009.

\bibitem[Spi71]{Spi71}
Philip~M. Spira.
\newblock On time-hardware complexity tradeoffs for boolean functions.
\newblock In {\em Fourth International Symposium on Systems Sciences}, pages
  525--527, 1971.

\bibitem[Str73]{Str73}
Volker Strassen.
\newblock Vermeidung von divisionen.
\newblock {\em J. Reine Angew. Math.}, 264:182--202, 1973.
\newblock (in German).

\bibitem[Tza08]{Tza08:PhD}
Iddo Tzameret.
\newblock {\em Studies in Algebraic and Propositional Proof Complexity}.
\newblock PhD thesis, Tel Aviv University, 2008.

\bibitem[Tza11]{Tza11-I&C}
Iddo Tzameret.
\newblock Algebraic proofs over noncommutative formulas.
\newblock {\em Information and Computation}, 209(10):1269--1292, 2011.

\bibitem[Val79]{Val79:ComplClass}
Leslie~G. Valiant.
\newblock Completeness classes in algebra.
\newblock In {\em Proceedings of the 11th Annual ACM Symposium on the Theory of
  Computing}, pages 249--261. ACM, 1979.

\bibitem[Zip79]{Zip79}
Richard Zippel.
\newblock Probabilistic algorithms for sparse polynomials.
\newblock In {\em Proceedings of the International Symposium on Symbolic and
  Algebraic Computation}, pages 216--226. Springer-Verlag, 1979.

\end{thebibliography}

\end{document}